\theoremstyle{definition}
\newtheorem{theorem}{Theorem}[section]
\newtheorem{proposition}[theorem]{Proposition}
\newtheorem{lemma}[theorem]{Lemma}
\newtheorem{definition}[theorem]{Definition}
\newtheorem{corollary}[theorem]{Corollary}
\newtheorem{remark}[theorem]{Remark}
\DeclareMathOperator{\Aut}{Aut}
\DeclareMathOperator{\Sym}{Sym}
\DeclareMathOperator{\Stab}{Stab}
\DeclareMathOperator{\im}{im}
\DeclareMathOperator{\argmin}{argmin}
\DeclareMathOperator{\argmax}{argmax}
\DeclareMathOperator{\ori}{ori}
\numberwithin{equation}{section}
\numberwithin{figure}{section}
\title{Graph Isomorphism for unit square graphs}
\author{Daniel Neuen\\RWTH Aachen University\\\texttt{neuen@informatik.rwth-aachen.de}}
\date{}
\begin{document}

\maketitle

\begin{abstract}
 In the past decades for more and more graph classes the Graph Isomorphism Problem was shown to be solvable in polynomial time.
 An interesting family of graph classes arises from intersection graphs of geometric objects.
 In this work we show that the Graph Isomorphism Problem for unit square graphs, intersection graphs of axis-parallel unit squares in the plane, can be solved in polynomial time.
 Since the recognition problem for this class of graphs is NP-hard we can not rely on standard techniques for geometric graphs based on constructing a canonical realization.
 Instead, we develop new techniques which combine structural insights into the class of unit square graphs with understanding of the automorphism group of such graphs.
 For the latter we introduce a generalization of bounded degree graphs which is used to capture the main structure of unit square graphs.
 Using group theoretic algorithms we obtain sufficient information to solve the isomorphism problem for unit square graphs.
\end{abstract}

\section{Introduction}

The Graph Isomorphism Problem is one of the most famous open problems in theoretical computer science.
In the past three decades the problem was intensively studied but only recently the upper bound on the complexity could be improved to quasipolynomial time \cite{babai15}.
However, it is still open whether the Graph Isomorphism Problem can be solved in polynomial time.
In this work we focus on geometric graph classes, that is, graph classes that arise as intersection graphs of geometric objects.
In an intersection graph the vertices are identified with geometric objects and two vertices are connected if the corresponding objects intersect.

One of the most basic geometric graph classes is the class of interval graphs, intersection graphs of intervals on the real line.
Although this graph class is quite restrictive there are a number of practical applications and specialized algorithms for interval graphs (see e.g.\ \cite{gol04}).
The Graph Isomorphism Problem on interval graphs can be solved in linear time \cite{cb81} as well as in logarithmic space \cite{kklv11}.
However, for several generalizations of interval graphs the complexity of the Graph Isomorphism Problem is unknown.
This includes for example circular arc graphs (see \cite{clmnsss13}) and triangle graphs (see \cite{ueh14}).
On the other hand a graph class is GI-complete if the Graph Isomorphism Problem for this class is as difficult as the general problem under polynomial time reductions.
An example of a GI-complete geometric class is the class of grid intersection graphs, bipartite intersection graphs of horizontal and vertical line segments in the plane \cite{ueh08}.
As an immediate consequence the class of intersection graphs of axis-parallel rectangles is also GI-complete.
Unit square graphs, intersection graphs of axis-parallel unit squares, are a natural restriction for the rectangle graphs.
This raises the question for the complexity of the isomorphism problem for unit square graphs.
In this work we prove that the Graph Isomorphism Problem for unit square graphs can be solved in polynomial time.
Besides being a natural restriction to rectangle graphs, another central motivation to study this problem comes from unit disk graphs, intersection graphs of unit circles in the plane.
Unit disk graphs where first studied by Clark, Colbourn and Johnson in \cite{ccj90} and for several problems specialized algorithms have been proposed (see e.g.\ \cite{FLS12}).
Practical applications arise for example from broadcast networks where each broadcast station is represented by a vertex and two stations communicate with each other if the distance between them does not exceed the broadcast range.
In the work of Clark et al.\ two problems, namely the recognition problem and the isomorphism problem, were left open.
While recognition of unit disk graphs proved to be NP-hard \cite{bk98}, the isomorphism problem for unit disk graphs is still open.
Unit square graphs present a natural variant to unit disk graphs as we just replace the Euclidean norm by the Manhattan norm.
Also, going from unit disks to unit squares removes geometric intricacies and tends to simplify the structure of graphs but maintains several key aspects of the problem.
In particular, for unit disk as well as unit square graphs vertices only have a bounded number of independent neighbors (set of pairwise non-adjacent neighbors) and the structure of graphs seems to be a mixture of bounded degree and planarity. 
Hence, solving the isomorphism problem for unit square graphs might be a step towards solving the same problem for unit disk graphs.
Furthermore, in \cite{ueh08, ueh14} Uehara also asked for the complexity of graph isomorphism for unit grid intersection graphs.
Unit grid intersection graphs can be seen as bipartite versions of unit square graphs in the following sense:
A bipartite graph is a unit grid intersection graph if and only if it is the intersection graph of unit squares where intersections between squares belonging to vertices on the same side of the bipartition are ignored.
Hence, the result presented in this work shows that in some sense the difficulty for unit grid intersection lies in recreating the information which lines are close to each other.

Another interesting point arises from the fact that, like for unit disk graphs, recognition of unit square graphs is NP-hard \cite{breu1996}.
Hence, we obtain an example of a natural graph class with the interesting property that isomorphism tests can be performed in polynomial time whereas recognition is NP-hard.
Also, the hardness result rules out the classical approach to attack the isomorphism problem.
Typically, isomorphism tests for geometric graphs are based on constructing a canonical geometric representation of the graph (see e.g.\ \cite{kklv11, kkv13}) but, as this would also solve the recognition problem, such an approach is not possible here.
Instead, our algorithm combines group theoretic techniques with geometric properties of unit square graphs.
For the group theoretic machinery we extend the results developed by Luks \cite{luks82} to decide isomorphism for bounded degree graphs by also allowing for example large cycles in the neighborhood of a vertex.
On a geometric level this coincides in some sense with the intuition that vertices in the neighborhood of some fixed vertex are cyclically arranged around the central vertex.
Using geometric properties of unit square graphs and known algorithms for other geometric graph classes such as proper circular arc graphs we can canonically (in an isomorphism-invariant way) extract such circular orderings, which can then be used by the group theoretic machinery.
For this, we show a series of results giving a deep insight into the structure of neighborhoods of single vertices and neighborhoods of cliques within unit square graphs.
These results not only help us to understand the structure of unit square graphs, but also we obtain significant knowledge about the structure of the automorphism group of a unit square graph.
However, an obvious obstacle to this approach comes from large cliques which are connected in a uniform way to the rest of the graph and do not contain any significant structure.
More precisely, such cliques may be responsible for large symmetric groups which are subgroups of the automorphism group of the whole graph.
Since large symmetric groups form a clear obstacle to the group theoretic machinery and can not be handled by Luks' algorithm we have to cope with these parts of the graph in a different way.
Our second main result on the structure of unit square graphs characterizes connections between cliques which are stable with respect to the color refinement algorithm (see e.g.\ \cite{bbg13, mckay81}), and also establishes a close connection to interval graphs.
Building on this characterization we show that the color refinement algorithm can be used to cope with the symmetric parts of the graph containing no significant structure.
Finally, combining the color refinement algorithm with the group theoretic machinery, we obtain an algorithm to solve the isomorphism problem for unit square graphs.

The remainder of this paper is structured as follows.
In the next section we give some preliminaries and in Section \ref{sec:basic-properties} we prove some basic properties of unit square graphs.
In Section \ref{sec:local} we analyze the local structure.
In Sections \ref{sec:neighborhoods} and \ref{sec:global} we compute the desired partitions together with the canonical structure.
Finally we discuss some related open problems in Section \ref{sec:discussion}.

\section{Preliminaries}

\subsection{Graphs}

A \emph{graph} is a pair $G = (V, E)$ with vertex set $V = V(G)$ and edge set $E = E(G)$.
In this paper all graphs are undirected, so $E(G)$ is always irreflexive and symmetric.
The \emph{(open) neighborhood} of a vertex $v \in V(G)$ is the set $N_G(v)  = N(v) = \{w \in V(G) \mid vw \in E(G)\}$ and the size of $N(v)$ is the degree of $v$.
The \emph{closed neighborhood} is the set $N[v] = N(v) \cup \{v\}$.
Two vertices $v,w \in V(G)$ are \emph{connected twins} if $N[v] = N[w]$.
A \emph{path} from $v$ to $w$ of length $m$ is a sequence $u_0,\dots,u_m$ of distinct vertices with $u_0 = v$ and $u_m = w$, such that $u_{i-1}u_{i} \in E(G)$ for each $i \in [m] := \{1,\dots,m\}$.
The \emph{distance} between $v$ and $w$, $d(v,w)$, is the length of a shortest path from $v$ to $w$.
A colored graph is a tuple $G = (V,E,c)$ where $c\colon V(G) \rightarrow \mathbb{N}$ assigns each vertex a unique color.
For each color $i \in \mathbb{N}$ let $V_i(G) = \{v \in V(G) \mid c(v) = i\}$.
Two graphs $G$ and $H$ are \emph{isomorphic} ($G \cong H$) if there is a bijection $\varphi \colon V(G) \rightarrow V(H)$, such that $vw \in E(G)$ if and only if $\varphi(v)\varphi(w) \in E(H)$ for all $v,w \in V(G)$.
In this case the mapping $\varphi$ is an \emph{isomorphism} from $G$ to $H$.
In case the input graphs are colored it is demanded that the isomorphism also preserves the coloring of the vertices.
The Graph Isomorphism Problem asks whether two given graphs $G$ and $H$ are isomorphic.
An isomorphism from a graph to itself is called an \emph{automorphism}.
The set of automorphisms of a graph $G$, denoted by $\Aut(G)$, forms a subgroup of the symmetric group over the vertex set.

\subsection{Color Refinement}

A very basic and fundamental method, which is a basic building block of many isomorphism tests, is the color refinement algorithm (see e.g.\;\cite{mckay81}).
The basic idea is to iteratively distinguish vertices if they have a different number of neighbors in some color.
A partition $\mathcal{P}$ of the vertices is \emph{stable} if for all $X,Y \in \mathcal{P}$ and all $v,w \in X$ it holds that $|N(v) \cap Y| = |N(w) \cap Y|$.
Further a partition $\mathcal{P}$ \emph{refines} another partition $\mathcal{Q}$ if for each $X \in \mathcal{P}$ there is some $Y \in \mathcal{Q}$ with $X \subseteq Y$.
The color refinement algorithm computes the unique coarsest stable partition refining the initial color partition (i.e.\ the partition of the vertices according to their color).
The coarsest stable partition can be computed in almost linear time (see \cite{mckay81,bbg13}).
We say color refinement distinguishes two graphs if there is some class in the coarsest stable partition on the disjoint union of the graphs that contains a different number of vertices from the two graphs.
In this case the two input graphs are not isomorphic.

The $k$-dimensional Weisfeiler-Leman algorithm is a generalization of the color refinement algorithm (cf.\ \cite{cfi92}).
Instead of coloring only single vertices, the Weisfeiler-Leman algorithm colors $k$-tuples of vertices.
Initially each tuple is colored with the isomorphism type of the underlying induced subgraph and then the coloring is refined in a similar way as by the color refinement algorithm (see \cite{cfi92} for a detailed description).
We say $k$-dimensional Weisfeiler-Leman \emph{identifies} a graph class $\mathcal{C}$ if for every pair of non-isomorphic graphs $G,H$ with $G \in \mathcal{C}$ the $k$-dimensional Weisfeiler-Leman distinguishes between $G$ and $H$.

\subsection{Group Theory}

\subsubsection{Basic Facts}

In this subsection we introduce some group theoretic notation.
For a general introduction to group theory we refer to \cite{rotman} whereas several group theoretic algorithms are given in \cite{cgt,seress}.
Since we only deal with automorphism groups of graphs we restrict ourselves to permutation groups.

Let $\Omega$ be a finite set and $n = |\Omega|$.
The \emph{symmetric group} $\Sym(\Omega)$ over the set $\Omega$ is the group containing all permutations on $\Omega$.
A \emph{permutation group} over the set $\Omega$ is a subgroup of the group $\Sym(\Omega)$.
Let $\Gamma \leq \Sym(\Omega)$ be a permutation group.
For $\gamma \in \Gamma$ and $\alpha \in \Omega$ we denote by $\alpha^{\gamma}$ the image of $\alpha$ under the permutation $\gamma$.
The set $\alpha^{\Gamma} = \{\alpha^{\gamma} \mid \gamma \in \Gamma\}$ is the \emph{orbit} of $\alpha$.
The group $\Gamma$ is \emph{transitive} if $\alpha^{\Gamma} = \Omega$ for some (and thus for all) $\alpha \in \Omega$.
The \emph{stabilizer} of $\alpha$ is $\Stab_{\Gamma}(\alpha) = \{\gamma \in \Gamma \mid \alpha^{\gamma} = \alpha\}$ which is a subgroup of $\Gamma$.
For $A \subseteq \Omega$ let $A^{\gamma} = \{\alpha^{\gamma} \mid \alpha \in A\}$.
The \emph{(setwise) stabilizer} of $A$ is $\Stab_{\Gamma}(A) = \{\gamma \in \Gamma \mid A^{\gamma} = A\}$.
The set $A$ is called \emph{$\Gamma$-invariant} if $\Gamma = \Stab_{\Gamma}(A)$.
For a partition $\mathcal{P}$ of the set $\Omega$ let $\mathcal{P}^{\gamma} = \{X^{\gamma} \mid X \in \mathcal{P}\}$ which is again a partition of $\Omega$.
The partition $\mathcal{P}$ is \emph{$\Gamma$-invariant} if $\mathcal{P}^{\gamma} = \mathcal{P}$ for all $\gamma \in \Gamma$.
The group $\Gamma$ is abelian if $\gamma\gamma' = \gamma'\gamma$ for all $\gamma,\gamma' \in \Gamma$.
A subgroup $N \leq \Gamma$ is \emph{normal} ($N \unlhd \Gamma$) if $\gamma^{-1}N\gamma = \{\gamma^{-1}n\gamma \mid n \in N\} = N$ for all $\gamma \in \Gamma$.
A mapping $\varphi\colon \Gamma \to \Delta$ is a \emph{homomorphism} if $\varphi(\gamma_1)\varphi(\gamma_2) = \varphi(\gamma_1\gamma_2)$ for all $\gamma_1,\gamma_2 \in \Gamma$.
Its kernel is $\ker(\varphi) = \{\gamma \in \Gamma \mid \varphi(\gamma) = 1\} \unlhd \Gamma$ and the image is $\im(\varphi) = \{\varphi(\gamma) \mid \gamma \in \Gamma\} \leq \Delta$.

A set $S \subseteq \Gamma$ is a \emph{generating set} if each $\gamma \in \Gamma$ can be written as $\gamma = s_1 \dots s_k$ for some $s_1,\dots,s_k \in S$.
In order to allow efficient computations permutation groups are represented by small generating sets.
For each permutation group on a set of size $n$ there is a generating set of size $n-1$ \cite{mn87}.
In practice generating sets of quadratic size are typically used to represent permutation groups (cf.\ \cite{fhl80}).

\subsubsection{Bounded Composition Series}

In this work we shall be interested in a particular subclass of permutation groups, namely groups with bounded non-abelian composition factors.
Let $\Gamma$ be a group.
A \emph{normal series} is a sequence of subgroups $\Gamma = \Lambda_0 \trianglerighteq \Lambda_1 \trianglerighteq \dots \trianglerighteq \Lambda_k = \{1\}$.
The length of the series is $k$ and the groups $\Lambda_{i-1} / \Lambda_{i}$ are the factor groups of the series, $i \in [k]$.
A \emph{composition series} is a strictly decreasing normal series of maximal length. 
For every finite group $\Gamma$ all composition series have the same family of factor groups considered as a multi-set (cf.\ \cite{rotman}).
A \emph{composition factor} of a finite group $\Gamma$ is a factor group of a composition series of $\Gamma$.

\begin{definition}
 Let $d \in \mathbb{N}$.
 The family $\Gamma_d$ contains all finite groups $\Gamma$ for which all non-abelian composition factors are isomorphic to subgroups of $S_d = \Sym([d])$.
\end{definition}

The class of $\Gamma_d$-groups is closed under subgroups and homomorphic images.
Furthermore, for groups $N \trianglelefteq \Gamma$ it holds that $\Gamma \in \Gamma_d$ if and only if $N \in \Gamma_d$ and $\Gamma/N \in \Gamma_d$.
A group $\Gamma$ is \emph{solvable} if every composition factor is abelian.
Two examples of solvable groups, that are particularly important for this work, are cyclic groups and dihedral groups which are the automorphism groups of directed cycles and undirected cycles.
Note that every solvable group is a $\Gamma_d$-group for every $d \in \mathbb{N}$.

\subsubsection{Algorithms}

The Setwise Stabilizer Problem asks, given a permutation group $\Gamma \leq \Sym(\Omega)$ and $A \subseteq \Omega$, for a generating set of the group $\Stab_{\Gamma}(A)$.
A central motivation to consider $\Gamma_d$-groups is the following result.

\begin{proposition}
 \label{prop:setwise-stab-gamma-d}
 Let $d \in \mathbb{N}$.
 The Setwise Stabilizer Problem for groups in $\Gamma_d$ can be solved in polynomial time.
\end{proposition}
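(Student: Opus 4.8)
The plan is to reduce the Setwise Stabilizer Problem to the \emph{colour automorphism problem} and then run the divide-and-conquer recursion of Luks~\cite{luks82}, the point being that the hypothesis $\Gamma\in\Gamma_d$ is needed only to bound the branching in the primitive case via an order estimate for primitive $\Gamma_d$-groups.

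\emph{Reformulation.} Given $\Gamma\le\Sym(\Omega)$ and $A\subseteq\Omega$, let $x\colon\Omega\to\{0,1\}$ be the indicator colouring of $A$; then $\Stab_\Gamma(A)$ is exactly the subgroup of $\Gamma$ preserving $x$. For the recursion I would solve the more general coset version: given a coset $\Gamma\sigma\subseteq\Sym(\Omega)$ (encoded by a generating set of $\Gamma$ and the element $\sigma$), a $\Gamma$-invariant window $W\subseteq\Omega$, and colourings $x,y$ of $W$, compute $\mathrm{Iso}_{\Gamma\sigma}(x,y)=\{\delta\in\Gamma\sigma\mid x^\delta=y\}$. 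A short computation shows this set is empty or a left coset of $\Aut_\Gamma(x)$, hence can be returned by one element plus a generating set; the Setwise Stabilizer Problem is the case $\sigma=1$, $W=\Omega$, $x=y$ the indicator of $A$.

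\emph{Recursion.} Following Luks, I would branch on the action of $\Gamma$ on $W$, using standard polynomial-time permutation-group routines (orbits, minimal block systems, kernels of block actions, transversals; see \cite{cgt,seress}). If $|W|\le 1$ the answer is read off directly. If $\Gamma$ is intransitive on $W$, pick an orbit $W_1\subsetneq W$, put $W_2=W\setminus W_1$, first recurse on $W_1$ to obtain a coset $K$, then recurse on $W_2$ with group $K$ (the ``chain rule''); both windows are strictly smaller. If $\Gamma$ is primitive on $W$, then $\Gamma\in\Gamma_d$, and by the Babai--Cameron--P\'alfy bound a primitive group of degree $m$ all of whose non-abelian composition factors embed in $S_d$ has order bounded by a polynomial in $m$ of degree depending only on $d$; thus $|\Gamma|$ is polynomial in $|W|$, so we may list all of $\Gamma$ and keep the elements mapping $x$ to $y$. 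If $\Gamma$ is transitive and imprimitive on $W$, fix a minimal nontrivial block system $\mathcal{B}$, so that the induced action $\Gamma^{\mathcal{B}}$ is primitive of degree $m:=|\mathcal{B}|$, and let $N\unlhd\Gamma$ be the kernel of the block action. Since $\Gamma^{\mathcal{B}}\in\Gamma_d$ is primitive, $[\Gamma:N]=|\Gamma^{\mathcal{B}}|$ is polynomial in $m$; choosing coset representatives $\sigma_1,\dots,\sigma_r$ of $N$ in $\Gamma$ we have $\mathrm{Iso}_{\Gamma\sigma}(x,y)=\bigcup_{i}\mathrm{Iso}_{N\sigma_i\sigma}(x,y)$, and each term is computed recursively: $N$ fixes every block of $\mathcal{B}$ setwise, so it is intransitive on $W$ with orbits of size at most $|W|/m$, reducing to the intransitive case. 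The union of the resulting cosets of $\Aut_N(x)$ is again a coset (of $\Aut_\Gamma(x)$), from which a generating set is immediate.

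\emph{Analysis and main obstacle.} Let $T(n)$ bound the running time on a window of size $n$. The primitive case costs $n^{O_d(1)}$. The intransitive case gives $T(n)\le T(n_1)+T(n-n_1)+\mathrm{poly}(n)$. The imprimitive case enumerates $m^{O_d(1)}$ cosets, each handled by chaining over the $N$-orbits of size at most $n/m$, at cost $m\,T(n/m)+\mathrm{poly}(n)$ per coset; hence $T(n)\le m^{O_d(1)}\bigl(m\,T(n/m)+\mathrm{poly}(n)\bigr)$. An ansatz $T(n)\le C\,n^{k}$ with $k$ exceeding both the exponent hidden in $O_d(1)$ and the degree of the polynomial overhead closes the recursion, so $T(n)=n^{O_d(1)}$, which is polynomial for fixed $d$. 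The only non-routine ingredient is the order bound for primitive $\Gamma_d$-groups: without the hypothesis $\Gamma\in\Gamma_d$ the block action $\Gamma^{\mathcal{B}}$ could be as large as $A_m$ or $S_m$, the index $[\Gamma:N]$ would be superpolynomial, and the recursion would break down; everything else is Luks' recursion together with standard coset bookkeeping.
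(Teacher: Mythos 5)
The paper does not actually prove Proposition~\ref{prop:setwise-stab-gamma-d}: it is quoted from the literature, with the weaker version (all composition factors bounded) attributed to Luks~\cite{luks82} and the general version to Babai~\cite{babai14}, the key group-theoretic input being the order bound for primitive groups with restricted non-abelian composition factors of Babai, Cameron and P\'alfy~\cite{BCP82}, which the paper only mentions in a later remark on running time. Your reconstruction is exactly this standard argument --- Luks' colour-isomorphism recursion with the BCP bound replacing the order bound for primitive groups in Luks' smaller class --- so in spirit it matches the intended (cited) proof, and the overall recursion and its analysis are sound.

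Two steps, however, are stated in a way that would fail if implemented literally. First, in the primitive case the BCP bound controls the order of the \emph{image} $\Gamma^{W}$ of $\Gamma$ in $\Sym(W)$, not of $\Gamma \leq \Sym(\Omega)$ itself: the kernel of the action on the window can be huge (e.g.\ a product of factors supported outside $W$), so ``list all of $\Gamma$'' is not polynomial. The repair is standard and cheap --- the kernel acts trivially on $W$, hence preserves any colouring of $W$, so one enumerates coset representatives of the kernel (at most $|W|^{O_d(1)}$ many by \cite{BCP82}), checks each, and returns kernel generators together with the successful representatives; equivalently, fold the primitive case into the imprimitive one using the singleton block system. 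Second, for the induced action $\Gamma^{\mathcal{B}}$ on the blocks to be primitive you must take a \emph{coarsest} nontrivial block system (maximal blocks), not a ``minimal'' one: with minimal (finest) blocks the top action can itself be imprimitive and the index $[\Gamma:N]$ need not be polynomially bounded, which is precisely what the recursion needs. Finally, a bookkeeping point you already half-acknowledge: in the coset version the target colouring $y$ should live on $W^{\sigma}$ (or one keeps colourings on all of $\Omega$ and compares only on the window), since the sub-windows arising in the recursion are $\Gamma$-invariant but not $\sigma$-invariant. With these corrections your proof is the accepted one for the proposition.
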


A weaker version of this statement was proved by Luks in \cite{luks82} considering only groups where all composition factors are isomorphic to subgroups of $S_d$.
For the more general version stated above we refer to \cite{babai14}.
This result is for example used by Luks to solve graph isomorphism for graphs of bounded degree \cite{luks82}, but it can also be applied to more general graph classes.
In this work we are interested in graphs which we call \emph{$t$-circle-bounded} graphs.
For a graph $G$ and a set $X \subseteq V(G)$ we write $G[X]$ to denote the induced subgraph of $G$ with vertex set $X$.

\begin{definition}
 A colored graph $G = (V,E,c)$ with $c\colon V(G) \rightarrow [k]$ is \emph{$t$-circle-bounded} if for each $i \in [k]$ and $X \subseteq V_{< i} := \bigcup_{j < i} V_j(G)$ the graph
 \begin{equation*}
  G_{i,X} := G[\{v \in V_i(G) \mid N(v) \cap V_{< i} = X\}]
 \end{equation*}
 is the disjoint union of at most $t$ connected graphs of maximum degree two.
\end{definition}

The $t$-circle-bounded graphs are closely related to $t$-bounded graphs which are similarly defined with the size of $G_{i,X}$ bounded by $t$ (see \cite{BCSTW13}).
From an algorithmic point of view we can use very similar methods for the isomorphism problem on $t$-circle-bounded graphs as for $t$-bounded graphs.
For the sake of completeness we still give a complete description.
This requires the following result due to Miller on hypergraphs which is also used in the algorithm for $t$-bounded graphs.
A \emph{hypergraph} is a pair $\mathcal{H} = (V,\mathcal{E})$ with finite vertex set $V$ and a set of hyperedges $\mathcal{E} \subseteq 2^{V}$.

\begin{proposition}[Miller, \cite{miller83}]
 \label{prop:hypergraph-gamma-t}
 Let $t \in \mathbb{N}$.
 Let $\mathcal{H} = (V, \mathcal{E})$ be a hypergraph and $\Gamma \leq \Sym(V)$, such that $\Gamma \in \Gamma_t$.
 Then a generating set for the group $\Aut(\mathcal{H}) \cap \Gamma$ can be computed in polynomial time.
\end{proposition}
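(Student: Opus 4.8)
The plan is to follow Luks's group-theoretic divide-and-conquer strategy and reduce the computation of a generating set for $\Aut(\mathcal{H}) \cap \Gamma$ to polynomially many calls to the Setwise Stabilizer Problem for $\Gamma_t$-groups (Proposition~\ref{prop:setwise-stab-gamma-d}) and to its transporter variant (given $\Gamma \in \Gamma_t$ and $A,B \subseteq \Omega$, compute $\{\gamma \in \Gamma \mid A^\gamma = B\}$, which is polynomial-time by the same methods and which is empty or a coset of $\Stab_\Gamma(A)$). Throughout one uses that $\Gamma_t$ is closed under subgroups and homomorphic images, so every group arising stays in $\Gamma_t$, and that the group operations involved -- restriction to an invariant set, the kernel of a restriction, the preimage of a subgroup or coset, a transversal of a normal subgroup of polynomial index, and enumeration of a group of polynomial order -- all produce generating sets in polynomial time.

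First I reformulate: $\Gamma$ acts naturally on $2^V$, and $\Aut(\mathcal{H}) \cap \Gamma$ is the setwise stabilizer of $\mathcal{E} \subseteq 2^V$ under this action. Since $2^V$ is exponentially large, Proposition~\ref{prop:setwise-stab-gamma-d} cannot be applied to this action directly; instead I run the recursion on the action of $\Gamma$ on $V$, working with the more general problem of computing, for two colored hypergraphs $\mathcal{H}_1,\mathcal{H}_2$ on the same vertex set $V$ and $\Gamma \le \Sym(V)$ in $\Gamma_t$, the set $\{\gamma \in \Gamma \mid \mathcal{H}_1^\gamma = \mathcal{H}_2\}$ (empty or a coset of $\Aut(\mathcal{H}_1)\cap\Gamma$); the statement is the case $\mathcal{H}_1 = \mathcal{H}_2 = \mathcal{H}$. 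Working with colored hypergraphs is what allows $V$ to be processed piece by piece: the behaviour already forced on a processed part of $V$ is recorded as a refinement of the colors on the hyperedges.

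The recursion distinguishes three cases for $\Gamma$ acting on $V$. If $\Gamma$ is intransitive, pick an orbit $V_1$ and first solve the problem for the traces of the hyperedges on $V_1$ -- a colored hypergraph (with multiplicities) on the strictly smaller set $V_1$; the resulting coset in $\Gamma|_{V_1}$ is pulled back to $\Gamma$, the matching of hyperedges on $V_1$ is turned into colors, and the recursion continues on $V \setminus V_1$. If $\Gamma$ is primitive, then a primitive permutation group whose composition factors embed into $S_t$ has order at most $|V|^{O(t)}$ (Babai--Cameron--P\'alfy), so one enumerates all elements of $\Gamma$ and tests each. If $\Gamma$ is transitive but imprimitive, choose a nontrivial block system $\mathcal{B}$ on which $\Gamma$ acts primitively; the kernel $N$ of the action on $\mathcal{B}$ is intransitive (its orbits are the blocks), and by the same order bound applied to the primitive quotient $\Gamma/N$ acting on $\mathcal{B}$ it has polynomial index in $\Gamma$, so one branches over a transversal of $N$ in $\Gamma$ (each representative produced via the transporter variant of Proposition~\ref{prop:setwise-stab-gamma-d}) and recurses into each branch with the intransitive group $N$. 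Since going into a block reduces $|V|$, the recursion is well founded, and an analysis of the recursion tree as in Luks's bounded-degree isomorphism algorithm bounds the number of subproblems and of calls to Proposition~\ref{prop:setwise-stab-gamma-d} polynomially in $|V| + |\mathcal{E}|$.

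I expect the intransitive step to be the main obstacle. For strings, restricting to an orbit of $V$ decomposes the problem cleanly because a string is literally the concatenation of its restrictions to the orbits; a hypergraph is not, since a hyperedge spreads over all orbits and distinct hyperedges may share a trace on $V_1$. The correctness of the decomposition therefore rests on the passage to colored hypergraphs together with a careful verification that the isomorphisms forced on $V_1$ -- encoded as colors -- and the isomorphisms realizable on the remainder together account for exactly $\Aut(\mathcal{H}) \cap \Gamma$, with no circularity and with the branching this introduces small enough to keep the recursion polynomial. By contrast, the reformulation and the primitive and imprimitive reductions are routine once Proposition~\ref{prop:setwise-stab-gamma-d} and the order bound for primitive $\Gamma_t$-groups are available.
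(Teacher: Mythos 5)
The paper does not prove this proposition at all: it is imported as a black box from Miller \cite{miller83}, so your proposal has to be judged against the known argument rather than against anything in the text. Your general framing is the right one, and two of your three cases are fine: the reformulation via a coset/transporter version over colored hypergraphs, the primitive case (enumeration, using the Babai--Cameron--P\'alfy bound \cite{BCP82} that primitive $\Gamma_t$-groups have polynomially bounded order), and the imprimitive case (branching over a transversal of the kernel of the action on a block system with primitive quotient, hence of polynomial index) are all standard and correct.

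The genuine gap is exactly the step you flag but do not resolve: the intransitive case. After computing the coset $C_1 \subseteq \Gamma|_{V_1}$ of permutations preserving the (multiset of) hyperedge traces on an orbit $V_1$, you propose to ``turn the matching of hyperedges on $V_1$ into colors'' and recurse on $V \setminus V_1$. But there is no single matching: different elements of $C_1$ induce different permutations of the trace classes, and an automorphism of $\mathcal{H}$ may permute these classes nontrivially. If you color each hyperedge by its trace class and demand color preservation on $V \setminus V_1$, you only recover those automorphisms whose restriction to $V_1$ stabilizes every trace class setwise, and you lose the rest; if instead you branch over the possible induced permutations of the trace classes, the branching factor is the order of the image of $C_1$ acting on those classes, which for $\Gamma_t$-groups can be exponential (iterated wreath products), so it cannot be absorbed into the recursion. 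The residual problem on $V \setminus V_1$ is therefore not an instance of the problem you are recursing on: it is a hypergraph problem in which the edge colors are simultaneously permuted by a prescribed $\Gamma_t$-action, and handling this coupling (e.g.\ by adjoining the trace classes as new domain elements, or by Miller's orbit-by-orbit dynamic programming), together with a complexity analysis that is genuinely different from the string case---distinct traces can multiply subproblems, so Luks's window bookkeeping does not transfer verbatim---is precisely the content of Miller's theorem. Your closing claim that ``an analysis of the recursion tree as in Luks's bounded-degree isomorphism algorithm bounds the number of subproblems polynomially'' asserts the hard part rather than proving it.
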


Note, that this result also holds for edge-colored hypergraphs by considering one color after the other.

\begin{theorem}
 \label{thm:gamma-t-t-bounded}
 Let $G$ be a $t$-circle-bounded graph.
 Then $\Aut(G) \in \Gamma_t$.
\end{theorem}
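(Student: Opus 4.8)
The plan is to mimic Luks' approach for bounded-degree graphs, adapted to $t$-circle-bounded graphs exactly as it is adapted to $t$-bounded graphs: reconstruct $\Aut(G)$ one color class at a time and bound the kernels of the successive restriction maps using the structural hypothesis, invoking throughout that $\Gamma_t$ is closed under subgroups and under extensions (if $N\trianglelefteq\Gamma$ and $N,\Gamma/N\in\Gamma_t$ then $\Gamma\in\Gamma_t$) and that every solvable group lies in $\Gamma_t$. Write $c\colon V(G)\to[k]$ and $V_{\le i}=\bigcup_{j\le i}V_j(G)$, so $V_{<i}=V_{\le i-1}$. Since automorphisms preserve colors, each $V_{\le i}$ and each $V_{<i}$ is $\Aut(G)$-invariant, so restriction to $V_{\le i}$ gives a homomorphism $\pi_i\colon\Aut(G)\to\Sym(V_{\le i})$; I would set $A_i:=\im(\pi_i)$, so $A_k=\Aut(G)$. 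Restricting further to $V_{<i}$ gives a surjection $\rho_i\colon A_i\twoheadrightarrow A_{i-1}$ (surjective because every $\sigma\in A_{i-1}$ lifts to some $\gamma\in\Aut(G)$ whose restriction to $V_{\le i}$ lies in $A_i$ and maps onto $\sigma$). With $K_i:=\ker\rho_i$ one has $A_i/K_i\cong A_{i-1}$, so by extension-closure it suffices to prove $A_1\in\Gamma_t$ and $K_i\in\Gamma_t$ for every $i\ge2$; then $\Aut(G)=A_k\in\Gamma_t$ follows by induction along the tower $A_k\twoheadrightarrow\dots\twoheadrightarrow A_1$.

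Next I would record the key auxiliary fact: if $H$ is a disjoint union of at most $t$ connected graphs of maximum degree two, then $\Aut(H)\in\Gamma_t$. Indeed each component is a path or a cycle, so its automorphism group is trivial, of order two, or dihedral, hence solvable; and $\Aut(H)$ maps onto the permutation group it induces on the set of at most $t$ components, a subgroup of $S_t$, with kernel the (solvable) direct product of the component automorphism groups. Since $S_t\in\Gamma_t$ (its unique non-abelian composition factor, if any, being the alternating group on $t$ letters, which is a subgroup of $S_t$), the claim follows by extension- and subgroup-closure. The base case is then immediate, since $A_1$ is a subgroup of $\Aut(G_{1,\emptyset})=\Aut(G[V_1])$, which is such a graph because $G$ is $t$-circle-bounded.

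For the kernel step I would fix $i\ge2$ and an arbitrary $\gamma\in K_i$, noting that $\gamma$ fixes $V_{<i}$ pointwise and permutes $V_i$. Then for $v\in V_i$ one has $N(\gamma v)\cap V_{<i}=\gamma\bigl(N(v)\cap V_{<i}\bigr)=N(v)\cap V_{<i}$, so $\gamma$ stabilizes each class $V_i^X:=\{v\in V_i\mid N(v)\cap V_{<i}=X\}$ setwise --- distinct classes cannot be interchanged, as their vertices differ in their neighborhoods inside the pointwise-fixed set $V_{<i}$. Since $\gamma$ preserves all edges of $G$, its restriction $\gamma|_{V_i^X}$ is an automorphism of $G[V_i^X]=G_{i,X}$, so $\gamma\mapsto(\gamma|_{V_i^X})_X$ is a homomorphism from $K_i$ into $\prod_X\Aut(G_{i,X})$ (over the finitely many $X$ with $V_i^X\neq\emptyset$), and it is injective because $\gamma$ is determined by these restrictions together with fixing $V_{<i}$. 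Each $G_{i,X}$ is again a disjoint union of at most $t$ connected graphs of maximum degree two by the $t$-circle-bounded hypothesis, so $\Aut(G_{i,X})\in\Gamma_t$ by the auxiliary fact; a finite direct product of $\Gamma_t$-groups lies in $\Gamma_t$; hence $K_i\in\Gamma_t$ by subgroup-closure.

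I expect the only mildly delicate point to be this last observation --- that every $\gamma\in K_i$ must stabilize each $V_i^X$ setwise and therefore restrict to an automorphism of the maximum-degree-two graph $G_{i,X}$ --- since this is exactly where the $t$-circle-bounded hypothesis enters; everything else is routine bookkeeping with the closure properties of $\Gamma_t$ recorded right after the definition of $\Gamma_d$.
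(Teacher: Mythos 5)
Your proof is correct and follows essentially the same route as the paper: induction over the color classes, a restriction homomorphism whose image is handled by the induction hypothesis and whose kernel embeds into $\bigtimes_X \Aut(G_{i,X})$, with each $\Aut(G_{i,X})$ in $\Gamma_t$ because paths and cycles have solvable automorphism groups and the action on at most $t$ components lands in $S_t$. The only (harmless) cosmetic difference is that you work with the restrictions $A_i$ of $\Aut(G)$ to $V_{\leq i}$ rather than with the full groups $\Aut(G[V_{\leq i}])$ as the paper does.
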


\begin{proof}
 We proof by induction on $i \in [k]$ that $\Aut(G[V_{\leq i}]) \in \Gamma_t$.
 For $i=1$ the graph $G[V_1]$ is the disjoint union of at most $t$ connected graphs of maximum degree two.
 A connected graph of maximum degree two is either a path or a cycle.
 In both cases the automorphism group is solvable and thus it is a $\Gamma_t$-group.
 Let $X_1,\dots,X_s$ be the connected components of $G[V_1]$.
 Then $s \leq t$ and $\mathcal{P}_1 = \{X_1,\dots,X_s\}$ is an $\Aut(G[V_1])$-invariant partition.
 Consider the natural action $\psi\colon\Aut(G[V_1]) \rightarrow \Sym(\mathcal{P}_1)$ of $\Aut(G[V_1])$ on the connected components.
 Then $\ker(\psi) \in \Gamma_t$, since it is a direct product of solvable groups, and $\im(\psi) \leq S_t$.
 Hence, $\Aut(G[V_1]) \in \Gamma_t$.
 
 For the inductive step suppose $\Aut(G[V_{<i}]) \in \Gamma_t$.
 Consider the restriction $\varphi_i\colon \Aut(G[V_{\leq i}]) \to \Aut(G[V_{<i}])\colon \gamma \mapsto \gamma|_{V_{<i}}$.
 Then $\im(\varphi_i) \leq \Aut(G[V_{<i}])$ and hence, $\im(\varphi_i) \in \Gamma_t$ by induction hypothesis.
 So it remains to prove that $\ker(\varphi_i) \in \Gamma_t$.
 For each $X \subseteq V_{<i}$ the set $V[G_{i,X}]$ is $\ker(\varphi_i)$-invariant.
 So \[\ker(\varphi_i) \leq \bigtimes_{X \subseteq V_{<i}\colon V[G_{i,X}] \neq \emptyset} \Aut(G_{i,X}).\]
 Since a subgroup of a direct product of $\Gamma_t$-groups is still a $\Gamma_t$-group it suffices to prove that $\Aut(G_{i,X}) \in \Gamma_t$.
 But this follows by the same argument as in the base step of the induction.
\end{proof}

\begin{theorem}
 \label{thm:automorphism-t-bounded}
 The Graph Isomorphism Problem for $t$-circle-bounded graphs can be solved in polynomial time.
\end{theorem}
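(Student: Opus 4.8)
The plan is to adapt the group-theoretic isomorphism test of Luks for bounded-degree graphs \cite{luks82} (and of \cite{BCSTW13} for $t$-bounded graphs): process the color classes $V_1(G),\dots,V_k(G)$ one at a time, handle each new layer using Proposition \ref{prop:hypergraph-gamma-t}, and invoke Theorem \ref{thm:gamma-t-t-bounded} to ensure that every group that arises stays inside $\Gamma_t$. As a first step I would reduce the isomorphism problem for two $t$-circle-bounded graphs $G,H$ (with common color set $[k]$, which we may assume) to computing a generating set of $\Aut$ of a single such graph. After checking $|V_j(G)|=|V_j(H)|$ for all $j$, form the disjoint union $G \sqcup H$, add two new vertices $v_G,v_H$ adjacent to all of $V(G)$ respectively $V(H)$, and give both the fresh color $k+1$. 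A routine check shows the resulting graph $\widehat{G}$ is $2t$-circle-bounded, and $G \cong H$ (as colored graphs) if and only if $v_G$ and $v_H$ lie in a common orbit of $\Aut(\widehat{G})$, which can be read off from any generating set. Hence it suffices to compute, in polynomial time, a generating set of $\Aut(G)$ for a single $t$-circle-bounded graph $G$.

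For this I would compute generating sets of the groups $\Gamma_i := \Aut(G[V_{\leq i}]) \leq \Sym(V_{\leq i})$ by induction on $i$, starting from the trivial group $\Gamma_0$; by Theorem \ref{thm:gamma-t-t-bounded} each $\Gamma_i$ lies in $\Gamma_t$. For the step from $i-1$ to $i$, set $\mathcal{X} := \{X \subseteq V_{<i} \mid V(G_{i,X}) \neq \emptyset\}$, so $|\mathcal{X}| \leq |V_i|$, and let $\Gamma^* \leq \Sym(V_{\leq i})$ consist of all permutations $\rho$ with $\rho|_{V_{<i}} \in \Gamma_{i-1}$ that additionally restrict, for every $X \in \mathcal{X}$, to an isomorphism from $G_{i,X}$ onto $G_{i,\rho(X)}$. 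Via the restriction homomorphism $\Gamma^* \to \Gamma_{i-1}$ the image of $\Gamma^*$ lies in $\Gamma_{i-1} \in \Gamma_t$, while its kernel equals $\prod_{X \in \mathcal{X}} \Aut(G_{i,X})$, a direct product of automorphism groups of disjoint unions of at most $t$ paths and cycles and hence itself a $\Gamma_t$-group; thus $\Gamma^* \in \Gamma_t$, and plainly $\Gamma_i \leq \Gamma^*$. Since every $\rho \in \Gamma^*$ already preserves all edges inside $V_{<i}$ and all edges between $V_{<i}$ and $V_i$, the only remaining constraint is preservation of the edges among the classes $V(G_{i,X})$; hence $\Gamma_i = \Gamma^* \cap \Aut(\mathcal{H}_i)$, where $\mathcal{H}_i$ denotes the $2$-uniform hypergraph $(V_{\leq i}, E(G[V_i]))$.

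To make this effective I would: (a) for each $X \in \mathcal{X}$ compute the isomorphism type of $G_{i,X}$ and a generating set of $\Aut(G_{i,X})$, which is routine for bounded disjoint unions of paths and cycles; (b) compute $I := \{\gamma \in \Gamma_{i-1} \mid G_{i,X} \cong G_{i,\gamma(X)} \text{ for all } X \in \mathcal{X}\}$, which equals $\Aut(\mathcal{H}') \cap \Gamma_{i-1}$ for the edge-colored hypergraph $\mathcal{H}'$ on vertex set $V_{<i}$ with hyperedges $\mathcal{X}$, where $X$ is colored by the isomorphism type of $G_{i,X}$; since $\Gamma_{i-1} \in \Gamma_t$, the edge-colored version of Proposition \ref{prop:hypergraph-gamma-t} computes $I$ in polynomial time; (c) for each generator $\gamma$ of $I$ choose isomorphisms $\theta_X\colon G_{i,X} \to G_{i,\gamma(X)}$ and set $\rho_\gamma := \gamma \cup \bigcup_{X \in \mathcal{X}} \theta_X \in \Gamma^*$ — the elements $\rho_\gamma$ together with a generating set of the kernel $\prod_{X} \Aut(G_{i,X})$ generate $\Gamma^*$ (the subgroup they generate surjects onto $I$ and contains the kernel, hence equals $\Gamma^*$); (d) apply Proposition \ref{prop:hypergraph-gamma-t} a second time to compute $\Gamma_i = \Gamma^* \cap \Aut(\mathcal{H}_i)$. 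Each of these steps runs in polynomial time, there are at most $|V(G)|$ layers, and the initial reduction is polynomial, so the whole algorithm is polynomial.

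I expect the heart of the argument, and the only genuinely non-routine part, to be the extension of automorphisms across a new color layer: one cannot simply use $\Aut(G[V_i])$, which can be an arbitrary symmetric group and thus lie far outside $\Gamma_t$, and one must also determine which automorphisms of $G[V_{<i}]$ extend at all. The auxiliary group $\Gamma^*$ is designed to finesse both issues — it manifestly contains $\Gamma_i$, it stays inside $\Gamma_t$ precisely because $\Gamma_t$ is closed under extensions with solvable (path- and cycle-) kernels, the same mechanism behind Theorem \ref{thm:gamma-t-t-bounded}, and the two applications of Proposition \ref{prop:hypergraph-gamma-t} carve it down, first (within $\Gamma_{i-1}$) to the extendable automorphisms and then (within $\Gamma^*$) to the extensions respecting the edges inside $V_i$. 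The points that still need careful verification are that $\Gamma^*$ is indeed a group, that $\Gamma_i = \Gamma^* \cap \Aut(\mathcal{H}_i)$, and that the elements listed in (c) generate $\Gamma^*$.
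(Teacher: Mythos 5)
Your proposal is correct and follows essentially the same route as the paper: induct over the color classes, note that the kernel of restriction to $V_{<i}$ is the direct product of the groups $\Aut(G_{i,X})$ and hence a $\Gamma_t$-group, compute the extendable automorphisms of $G[V_{<i}]$ via the edge-colored version of Proposition \ref{prop:hypergraph-gamma-t}, lift generators to obtain a $\Gamma_t$-supergroup of $\Aut(G[V_{\leq i}])$, and then cut it down to the true automorphism group. The only cosmetic differences are that you make the isomorphism-to-automorphism reduction explicit and perform the final cutting-down step by a second application of Proposition \ref{prop:hypergraph-gamma-t} (treating $E(G[V_i])$ as a $2$-uniform hypergraph), where the paper instead defines an auxiliary graph $G_i$ and invokes the setwise stabilizer routine of Proposition \ref{prop:setwise-stab-gamma-d}; both are valid.
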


\begin{proof}
 By standard reduction techniques it suffices to show that a generating set for the automorphism group of a $t$-circle-bounded graph can be computed in polynomial time.
 
 Let $G$ be a $t$-circle-bounded graph.
 We proof by induction on $i \in [k]$ that a generating set for $\Aut(G[V_{\leq i}])$ can be computed in polynomial time.
 For $i=1$ the graph $G[V_1]$ has maximum degree two and thus, a generating set can be computed in polynomial time.
 So let $i > 1$ and suppose we are given a generating set for the group $\Aut(G[V_{< i}])$.
 Let $\mathcal{E}_i = \{X \subseteq V_{<i} \mid V(G_{i,X}) \neq \emptyset\}$ and define the hypergraph $\mathcal{H}_i = (V_{<i},\mathcal{E}_i)$.
 Let $G_i = (V_{\leq i}, E(G_i))$ with
 \begin{equation*}
  E(G_i) = E(G[V_{<i}]) \cup \{vw \in E(G) \mid \exists X \in \mathcal{E}_i : vw \in E(G_{i,X})\}.
 \end{equation*}
 Then $\Aut(G[V_{\leq i}]) \leq \Aut(G_i)$ since $G_i$ is defined in an isomorphism-invariant way.
 Consider the homomorphism $\varphi_i\colon \Aut(G_i) \to \Aut(G[V_{<i}])\colon \gamma \mapsto \gamma|_{V_{<i}}$.
 Then $\ker(\varphi_i) = \bigtimes_{X \in \mathcal{E}_i} \Aut(G_{i,X})$ which can be computed in polynomial time.
 Further $\ker(\varphi_i) \in \Gamma_t$ by Theorem \ref{thm:gamma-t-t-bounded}.
 Let $K$ be a generating set for the kernel $\ker(\varphi_i)$.
 Now let $X \sim Y$ if $G_{i,X} \cong G_{i,Y}$ for all $X,Y \in \mathcal{E}_i$ and define a coloring $c \colon \mathcal{E}_i \to \mathbb{N}$, so that each color class corresponds to an equivalence class of $\sim$.
 Then $\im(\varphi_i) = \Aut(\mathcal{H},c) \cap \Aut(G[V_{<i}])$.
 Let $S$ be a generating set for $\im(\varphi_i)$ which can be computed in polynomial time by Proposition \ref{prop:hypergraph-gamma-t}.
 For $s \in S$ compute some $s' \in \varphi_i^{-1}(s)$ using isomorphisms between $G_{i,X}$ and $G_{i,X^{s}}$ for each $X \in \mathcal{E}_i$.
 Let $S'$ be the set of all elements $s'$.
 Then $K \cup S'$ is a generating set for $\Aut(G_i)$.
 Furthermore $\Aut(G_i) \in \Gamma_t$.
 So a generating set for $\Aut(G[V_{\leq i}])$ can be computed in polynomial time using Proposition \ref{prop:setwise-stab-gamma-d}.
\end{proof}

\subsection{Proper circular arc graphs}

A graph $G$ is a \emph{unit interval graph} if $G$ is the intersection graph of unit intervals on the real line.
A graph $G$ is a \emph{proper circular arc graph} if $G$ is the intersection graph of arcs on a circle, such that for no two arcs, one is properly contained in the other.
A characterization of unit interval graphs and proper circular arc graphs in terms of forbidden induced subgraphs is given in \cite{tuck74}.
For our purposes the following statements are sufficient.
Some relevant forbidden induced subgraphs are also depicted in Figure \ref{fig:pca-forbidden}.

\begin{proposition}
 A graph $G$ is a unit interval graph if and only if there are no induced subgraphs isomorphic to $C_{n+4}$ for $n \geq 0$, $S_3$, $K_{1,3}$ and \texttt{net}.
\end{proposition}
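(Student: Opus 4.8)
\emph{The plan.} I would derive the proposition from two classical facts about interval graphs: the Lekkerkerker--Boland theorem, which states that a graph is an interval graph if and only if it is chordal and has no asteroidal triple, and Roberts' theorem, which states that an interval graph is a unit interval graph if and only if it contains no induced $K_{1,3}$. The proposition then follows by matching the forbidden list to these two conditions.

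\emph{Necessity.} The class of unit interval graphs is hereditary, so it suffices to check that none of $C_{n+4}$ ($n \geq 0$), $S_3$, $K_{1,3}$ and the net is itself a unit interval graph. A cycle $C_{n+4}$ is not chordal, hence not an interval graph and in particular not a unit interval graph. For $K_{1,3}$ one argues directly from the unit-length constraint: if a unit interval $J$ intersected three pairwise disjoint unit intervals, then, letting $I$ be the ``middle'' one of the three, $J$ would have to contain a point strictly to the left of $I$ and a point strictly to the right of $I$, forcing $|J| > |I| = 1$. For $S_3$ and the net (Figure~\ref{fig:pca-forbidden}) one checks that three suitably ``far apart'' vertices --- the three degree-one vertices in the net, and the three degree-two vertices in $S_3$ --- form an asteroidal triple; since interval graphs contain no asteroidal triple, neither $S_3$ nor the net is an interval graph, let alone a unit interval graph.

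\emph{Sufficiency, and the main obstacle.} Conversely, assume $G$ contains none of the listed graphs as an induced subgraph. Forbidding all $C_{n+4}$, $n \geq 0$, says exactly that $G$ has no induced cycle of length at least four, i.e.\ that $G$ is chordal. The crux is to show that $G$ has no asteroidal triple; granting this, Lekkerkerker--Boland gives that $G$ is an interval graph, and since $G$ is also $K_{1,3}$-free, Roberts' theorem then gives that $G$ is a unit interval graph. To rule out asteroidal triples I would invoke the structural content of Lekkerkerker--Boland: among the minimal forbidden induced subgraphs for interval graphs, the only chordal ones are the net, $S_3$, and certain graphs (such as the long claw $S(2,2,2)$) that contain an induced $K_{1,3}$, while the cyclic obstructions $C_{n+4}$ are not chordal and hence cannot occur as induced subgraphs of a chordal graph. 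Thus a chordal graph that avoids $K_{1,3}$, the net and $S_3$ avoids every minimal obstruction and is therefore an interval graph. The main obstacle is precisely this structural step: if one does not wish to cite Lekkerkerker--Boland, one must start from an asteroidal triple $\{x,y,z\}$ in the chordal graph, choose shortest paths between the three vertices avoiding the respective third closed neighborhoods, use chordality to reduce these paths to short induced configurations, and then use $K_{1,3}$-freeness to pin down the resulting obstruction as the net or $S_3$. Everything else is routine bookkeeping.
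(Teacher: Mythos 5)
The paper never proves this proposition: it is quoted as a known characterization from Tucker \cite{tuck74} (it is the classical Roberts--Wegner description of unit interval graphs), so there is no internal proof to compare against. Judged on its own, your derivation is sound and is essentially the standard route: unit interval equals claw-free interval (Roberts), and interval equals chordal plus asteroidal-triple-free (Lekkerkerker--Boland). Your necessity checks are all correct; in particular the asteroidal triples you exhibit (the three degree-one vertices of \texttt{net}, the three degree-two vertices of $S_3$) really do admit the required avoiding paths, and the nesting argument ruling out a unit representation of $K_{1,3}$ is fine.

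The one place where real work remains is the sufficiency claim that a chordal graph avoiding $K_{1,3}$, \texttt{net} and $S_3$ contains no Lekkerkerker--Boland obstruction. As stated, this requires running through the \emph{entire} list of minimal non-interval graphs --- not only the long claw, but also the umbrella and the two infinite families (the $n$-nets and $n$-tents for larger parameters) --- and verifying that every member other than \texttt{net}, $S_3$ and the holes contains an induced $K_{1,3}$. That verification is routine, but naming a single example does not cover the infinite families, so your text leaves it as a gap. Be careful also about circularity: one cannot justify ``every chordal minimal obstruction besides \texttt{net} and $S_3$ contains a claw'' by appealing to the unit-interval characterization itself, since that is exactly the statement being proved; it must come either from inspecting the explicit Lekkerkerker--Boland list or from the direct argument you mention at the end (take an asteroidal triple in a chordal, claw-free graph and extract \texttt{net} or $S_3$), which is the substantive combinatorial content of the theorem. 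With that check supplied, your argument is a complete and correct proof of the cited result, obtained in the standard way rather than by any technique specific to this paper.
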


Here, $C_{n}$ denotes a cycle of length $n$.
Furthermore we denote by $G \cup H$ the disjoint union of $G$ and $H$ and the graph $\overline{G}$ is the complement graph of $G$.

\begin{proposition}
 \label{prop:pca-forbidden-characterization}
 Let $G$ be a graph, such that $N_G[v]$ is a unit interval graph for every $v \in V(G)$ and there are no induced subgraphs isomorphic to $K_1 \cup C_{n+4}$ for $n \geq 0$, $K_1 \cup S_3$, $\overline{T_2}$, $\overline{C_6}$ and \texttt{net}.
 Then $G$ is a proper circular arc graph.
\end{proposition}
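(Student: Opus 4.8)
The plan is to reduce the statement to Tucker's forbidden induced subgraph characterization of proper circular arc graphs \cite{tuck74}: there is an explicit family $\mathcal{L}$ of graphs such that a graph is a proper circular arc graph if and only if it contains no member of $\mathcal{L}$ as an induced subgraph. Among the members of $\mathcal{L}$ are $K_{1,3}$, \texttt{net}, the disconnected graphs $K_1 \cup C_{n+4}$ ($n \geq 0$) and $K_1 \cup S_3$, the complements $\overline{C_{2k}}$ ($k \geq 3$), and a small number of further graphs. It therefore suffices to show that, under the hypotheses, $G$ contains no member of $\mathcal{L}$ as an induced subgraph.

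The key device is a \emph{local exclusion principle}. If $H$ is an induced subgraph of $G$ and $v \in V(H)$, then $H[N_H[v]]$ is an induced subgraph of $G[N_G[v]]$, and the latter is a unit interval graph by hypothesis; as unit interval graphs are closed under induced subgraphs, $H[N_H[v]]$ is a unit interval graph as well. By the forbidden subgraph description of unit interval graphs recalled above, it follows that $G$ contains no induced subgraph $H$ having a vertex $v$ for which $H[N_H[v]]$ contains an induced $K_{1,3}$, $S_3$, \texttt{net} or $C_{m+4}$ ($m \geq 0$). In particular $G$ is claw-free, since the center of a claw dominates it, so every member of $\mathcal{L}$ that contains an induced $K_{1,3}$ is automatically excluded from $G$; more generally, so is every member of $\mathcal{L}$ that has a vertex with a non-unit-interval closed neighborhood.

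It remains to deal with the members of $\mathcal{L}$ that are claw-free and all of whose closed neighborhoods are unit interval graphs; the claim is that each of these either is one of the five graphs forbidden by the hypothesis or contains one of them as an induced subgraph. This is immediate for $K_1 \cup C_{n+4}$, $K_1 \cup S_3$, $\overline{C_6}$, \texttt{net} and $\overline{T_2}$ themselves — one checks in particular that the two disconnected graphs are claw-free and that all their closed neighborhoods are unit interval graphs. For every other member $H$ of $\mathcal{L}$ one then shows that $H$ either contains an induced $K_{1,3}$, or has a vertex with a non-unit-interval closed neighborhood, or contains one of the five explicit graphs as an induced subgraph; for example $\overline{C_{2k}}$ with $k \geq 4$ falls under the second option, because the closed neighborhood of any of its vertices is the join of a single vertex with the complement of a path on $2k-3 \geq 5$ vertices and hence contains an induced $C_4$. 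Since $G$ is assumed to avoid all five explicit graphs, no member of $\mathcal{L}$ occurs as an induced subgraph of $G$, and Tucker's theorem shows that $G$ is a proper circular arc graph.

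I expect the main obstacle to be exactly this last bookkeeping: aligning the precise (and somewhat unwieldy) membership of Tucker's list $\mathcal{L}$ with the compact list of five forbidden subgraphs together with the local condition, and verifying that the claw-free ``locally unit interval'' members of $\mathcal{L}$ — equivalently, the obstructions to being a proper circular arc graph that cannot be detected inside a single closed neighborhood — are captured exactly by $K_1 \cup C_{n+4}$, $K_1 \cup S_3$, $\overline{C_6}$, \texttt{net} and $\overline{T_2}$. Identifying $\overline{T_2}$ precisely and making sure that no further small or infinite obstruction has been overlooked is where the real work lies.
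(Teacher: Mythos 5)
Your reduction to Tucker's forbidden-induced-subgraph characterization of proper circular arc graphs, using the hereditary ``local'' unit-interval condition to exclude every obstruction visible inside a closed neighborhood (in particular everything containing a claw) and the five explicitly forbidden graphs for the remaining members of the list, is exactly how the paper treats this statement: it is stated without proof as a consequence of Tucker's characterization \cite{tuck74}. The case-by-case matching of Tucker's list that you defer is precisely the verification the paper also leaves implicit, and the representative cases you do work out (e.g.\ $\overline{C_{2k}}$ for $k \geq 4$ via an induced $C_4$ in a closed neighborhood) are handled correctly.
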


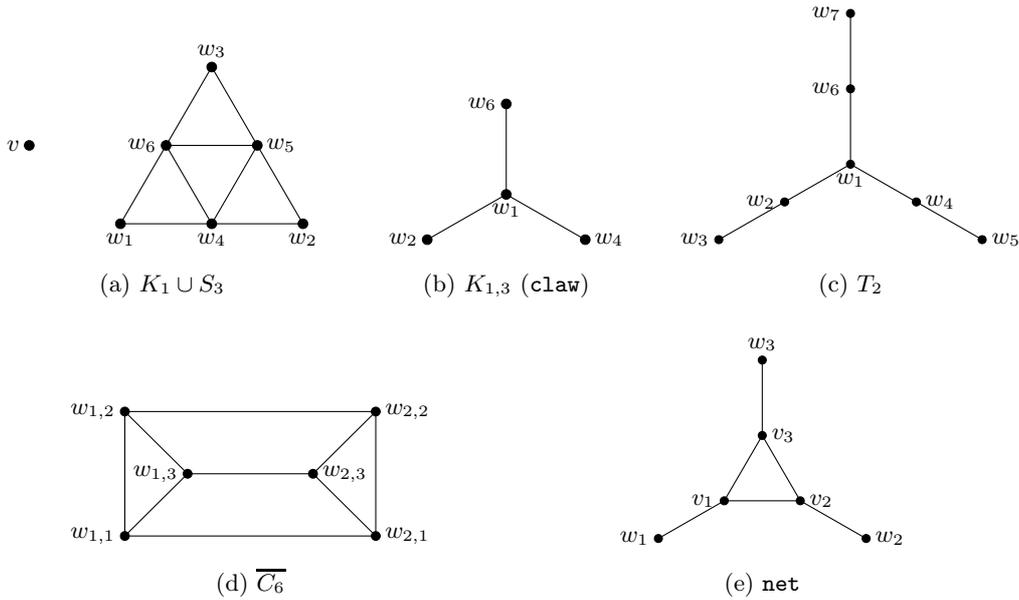
\begin{figure}
 \centering
 \begin{subfigure}[b]{.35\linewidth}
  \centering
  \begin{tikzpicture}[scale=1.2]
   \draw[fill] (0,0.866) circle (1.5pt) node[left] {{\small $v$}};
   \draw[fill] (1,0) circle (1.5pt) node[below] {{\small $w_1$}};
   \draw[fill] (2,0) circle (1.5pt) node[below] {{\small $w_4$}};
   \draw[fill] (3,0) circle (1.5pt) node[below] {{\small $w_2$}};
   \draw[fill] (1.5,0.866) circle (1.5pt) node[left] {{\small $w_6$}};
   \draw[fill] (2.5,0.866) circle (1.5pt) node[right] {{\small $w_5$}};
   \draw[fill] (2,1.732) circle (1.5pt) node[above] {{\small $w_3$}};
   
   \draw
    (1,0) -- (2,0)
    (2,0) -- (3,0)
    (1,0) -- (1.5,0.866)
    (2,0) -- (1.5,0.866)
    (2,0) -- (2.5,0.866)
    (3,0) -- (2.5,0.866)
    (1.5,0.866) -- (2.5,0.866)
    (1.5,0.866) -- (2,1.732)
    (2.5,0.866) -- (2,1.732);
  \end{tikzpicture}
  \caption{$K_1 \cup S_3$}
  \label{fig:graph-s3}
 \end{subfigure}
 \begin{subfigure}[b]{.25\linewidth}
  \centering
  \begin{tikzpicture}[scale=1.2]
   \draw[fill] (2,2) circle (1.5pt) node[below] {{\small $w_1$}};
   \draw[fill] (2,3) circle (1.5pt) node[left] {{\small $w_6$}};
   \draw[fill] (1.134,1.5) circle (1.5pt) node[left] {{\small $w_2$}};
   \draw[fill] (2.866,1.5) circle (1.5pt) node[right] {{\small $w_4$}};
   
   \draw
    (2,2) -- (2,3)
    (2,2) -- (1.134,1.5)
    (2,2) -- (2.866,1.5);
  \end{tikzpicture}
  \caption{$K_{1,3}$ (\texttt{claw})}
  \label{fig:graph-k13}
 \end{subfigure}
 \begin{subfigure}[b]{.35\linewidth}
  \centering
  \begin{tikzpicture}
   \draw[fill] (2,2) circle (1.5pt) node[below] {{\small $w_1$}};
   \draw[fill] (2,3) circle (1.5pt) node[left] {{\small $w_6$}};
   \draw[fill] (2,4) circle (1.5pt) node[left] {{\small $w_7$}};
   \draw[fill] (1.134,1.5) circle (1.5pt) node[left] {{\small $w_2$}};
   \draw[fill] (2.866,1.5) circle (1.5pt) node[right] {{\small $w_4$}};
   \draw[fill] (0.268,1) circle (1.5pt) node[left] {{\small $w_3$}};
   \draw[fill] (3.732,1) circle (1.5pt) node[right] {{\small $w_5$}};
   
   \draw
    (2,2) -- (2,3)
    (2,2) -- (1.134,1.5)
    (2,2) -- (2.866,1.5)
    (2,3) -- (2,4)
    (1.134,1.5) -- (0.268,1)
    (2.866,1.5) -- (3.732,1);
  \end{tikzpicture}
  \caption{$T_2$}
  \label{fig:graph-t2}
 \end{subfigure}
 \begin{subfigure}[b]{.45\linewidth}
  \centering
  \vspace{10pt}
  \begin{tikzpicture}[scale=1.1]
   \draw[fill] (0,0) circle (1.5pt) node[left] {{\small $w_{1,1}$}};
   \draw[fill] (0,1.5) circle (1.5pt) node[left] {{\small $w_{1,2}$}};
   \draw[fill] (0.75,0.75) circle (1.5pt) node[left] {{\small $w_{1,3}$}};
   \draw[fill] (3,0) circle (1.5pt) node[right] {{\small $w_{2,1}$}};
   \draw[fill] (3,1.5) circle (1.5pt) node[right] {{\small $w_{2,2}$}};
   \draw[fill] (2.25,0.75) circle (1.5pt) node[right] {{\small $w_{2,3}$}};
   
   \draw
    (0,0) -- (0,1.5)
    (0,0) -- (0.75,0.75)
    (0.75,0.75) -- (0,1.5)
    (0,0) -- (3,0)
    (0.75,0.75) -- (2.25,0.75)
    (0,1.5) -- (3,1.5)
    (3,0) -- (3,1.5)
    (3,0) -- (2.25,0.75)
    (2.25,0.75) -- (3,1.5);
  \end{tikzpicture}
  \caption{$\overline{C_6}$}
  \label{fig:graph-c6}
 \end{subfigure}
 \begin{subfigure}[b]{.45\linewidth}
  \centering
  \vspace{10pt}
  \begin{tikzpicture}
   \draw[fill] (1,1) circle (1.5pt) node[left] {{\small $v_1$}};
   \draw[fill] (2,1) circle (1.5pt) node[right] {{\small $v_2$}};
   \draw[fill] (1.5,1.866) circle (1.5pt) node[right] {{\small $v_3$}};
   \draw[fill] (0.134,0.5) circle (1.5pt) node[left] {{\small $w_1$}};
   \draw[fill] (2.866,0.5) circle (1.5pt) node[right] {{\small $w_2$}};
   \draw[fill] (1.5,2.866) circle (1.5pt) node[above] {{\small $w_3$}};
   
   \draw
    (1,1) -- (2,1)
    (1,1) -- (1.5,1.866)
    (2,1) -- (1.5,1.866)
    (1,1) -- (0.134,0.5)
    (1.5,1.866) -- (1.5,2.866)
    (2,1) -- (2.866,0.5);
  \end{tikzpicture}
  \caption{\texttt{net}}
  \label{fig:graph-net}
 \end{subfigure}
 \caption{Some forbidden induced subgraphs of proper circular arc graphs}
 \label{fig:pca-forbidden}
\end{figure}

We also require the following characterization for proper circular arc graphs.
A vertex is \emph{universal} if it is adjacent to all other vertices.
A graph $G$ is \emph{twin-free} if it does not contain connected twins and $G$ is \emph{co-bipartite} if $\overline{G}$ is bipartite.

\begin{proposition}[\cite{kkv12}, Theorem 3]
 \label{prop:pca-by-cycle}
 Let $G$ be a graph without universal vertices.
 Then $G$ is a proper circular arc graph if and only if there is a cycle $H$ with $V(G) = V(H)$, such that
 \begin{enumerate}
  \item $N_G[v]$ induces a connected subgraph in $H$ for each $v \in V(G)$
  \item For all $v,w \in V(G)$ it holds that if $N_G[v] \subseteq N_G[w]$ then the two paths share an endpoint in $H$.
 \end{enumerate}
\end{proposition}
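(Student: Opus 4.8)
This statement is an equivalence, and I would prove the two implications separately. Throughout one may assume $G$ is connected (components are handled one at a time) and, as hypothesized, has no universal vertex; for the constructive direction I would also first collapse each class of connected twins and reinsert the twins at the very end as bundles of slightly perturbed, non-nested arcs.

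\emph{From a representation to a cycle.} Suppose $G$ is a proper circular arc graph and fix a representation $v \mapsto A_v$, writing $l_v$ and $r_v$ for the two endpoints of $A_v$. After a standard normalization (all endpoints distinct, no arc covering the whole circle) one has the classical fact that, in a \emph{proper} representation, the cyclic order in which the left endpoints occur around the circle coincides with the cyclic order of the right endpoints: if two arcs had their left endpoints in one cyclic order but their right endpoints in the opposite one then --- since two arcs are either disjoint or meet in a single sub-arc --- one of them would be properly contained in the other. Let $H$ be the cycle through $V(G)$ in this common cyclic order. For property~1, observe that $A_u$ meets $A_v$ exactly when $l_u \in A_v$ or $r_u \in A_v$; the vertices $u$ with $l_u \in A_v$ form the block of $H$ that begins at $v$ and runs clockwise, the vertices with $r_u \in A_v$ form the block that ends at $v$, and the union of these two blocks, which contains $v$, is precisely $N_G[v]$ --- an arc of $H$. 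For property~2, if $N_G[v] \subseteq N_G[w]$ then $A_v$ meets $A_w$, so by property~1 the arc of $H$ spanned by $N_G[v]$ lies inside the one spanned by $N_G[w]$; if these two arcs shared no endpoint then the containment would be strict on both sides, the two endpoints of $A_v$ would lie in gaps interior to $A_w$, and we would obtain $A_v \subsetneq A_w$, contradicting properness.

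\emph{From a cycle to a representation.} Suppose now that $H$ has properties~1 and~2; I want a proper circular arc representation of $G$. Write $[\alpha(v),\beta(v)]$ for the arc of $H$ spanned by $N_G[v]$, where $\beta(v)$ is reached from $v$ by going clockwise. The plan is to imitate on the circle the classical construction of a proper interval representation from a linear ``umbrella-free'' vertex order: let $A_v$ run clockwise from a point just counterclockwise of $v$ to a point just clockwise of $\beta(v)$, and perturb these endpoints by small generic amounts --- increasing along each maximal stretch of vertices sharing the same clockwise reach --- so that every tie is broken in the correct direction. The two combinatorial ingredients are a ``local umbrella'' property, derived from property~1 (if $v \sim w$ and $w$ lies on the clockwise side of $v$ inside $N_G[v]$, then $\beta(w)$ is not strictly counterclockwise of $\beta(v)$; this follows from a short betweenness argument using that $N_G[\beta(v)]$ is an arc containing $v$), together with property~2, which forbids the arcs spanning two nested neighborhoods from nesting strictly in the interior and thereby guarantees that the tie-breaking perturbations can be chosen consistently. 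One then verifies that two arcs meet precisely when the corresponding vertices are adjacent and that no arc is properly contained in another, so that the constructed family is a proper circular arc representation of $G$.

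\emph{Where the difficulty lies.} The representation-to-cycle direction is essentially bookkeeping once the ``matching cyclic orders'' normalization is in place. The real work is the cycle-to-representation direction: the perturbations must be chosen so that arc intersection matches adjacency \emph{exactly} while simultaneously no proper containment is created. The analysis is complicated by the absence of a global orientation on a cycle --- so that the stretches of equal clockwise reach must be shown to be genuine sub-arcs and treated individually --- and by edges $v \sim w$ both of whose neighborhoods wrap far around the circle, which is where the interplay with property~2 and with the no-universal-vertex hypothesis (guaranteeing that no $N_G[v]$ equals all of $H$) becomes essential.
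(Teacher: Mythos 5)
First, note that the paper itself does not prove this proposition: it is imported verbatim from K\"obler, Kuhnert and Verbitsky (\cite{kkv12}, Theorem 3) and used as a black box, so there is no in-paper argument to compare against; your attempt has to stand on its own as a proof of the cited statement, and as such it is incomplete.

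Your forward direction (representation $\Rightarrow$ cycle) is essentially sound: after the standard normalization, the coincidence of the cyclic orders of left and right endpoints is a correct classical fact, and the description of $N_G[v]$ as the union of two blocks of $H$ meeting in $v$ works (you should say explicitly that this union is a proper sub-path only because $G$ has no universal vertex). However, the final step of property~2 has a gap you do not address: from ``the path of $N_G[v]$ has both endpoints interior to the path of $N_G[w]$'' you jump to ``both endpoints of $A_v$ lie in $A_w$, hence $A_v \subsetneq A_w$.'' Even granting that both endpoints of $A_v$ lie in $A_w$, this does not force $A_v \subseteq A_w$: the two arcs could together cover the whole circle, with $A_v$ containing the complement of $A_w$. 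That degenerate case must be excluded separately (it is exactly where the no-universal-vertex hypothesis enters, since $A_v \cup A_w$ covering the circle would make $w$ universal given $N_G[v] \subseteq N_G[w]$), and the inference from the combinatorial endpoint condition on the paths to the geometric endpoint condition on the arcs also needs an explicit argument rather than an appeal to ``gaps interior to $A_w$.''

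The more serious problem is the converse direction, which you yourself identify as ``the real work'' and then do not carry out. What you give is a construction plan (arcs from a point just counterclockwise of $v$ to a point just clockwise of the clockwise reach $\beta(v)$, plus generic tie-breaking perturbations), together with two unproved claims: a ``local umbrella'' property said to follow from ``a short betweenness argument,'' and the assertion that property~2 guarantees the perturbations can be chosen consistently so that intersections match adjacency exactly and no arc is properly contained in another. None of these verifications is performed, and they are precisely where the difficulty of the theorem lies: one must handle vertices whose neighborhoods wrap far around the cycle, show that the ``equal reach'' classes behave as claimed, and prove both directions of the equivalence ``$A_u \cap A_v \neq \emptyset \Leftrightarrow uv \in E(G)$ or $u = v$'' as well as properness. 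As written, the cycle-to-representation direction is an outline of intent, not a proof, so the proposal does not establish the proposition.
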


Furthermore, if $G$ is connected, twin-free and not co-bipartite, the cycle $H$ is unique \cite{huang95}.
Additionally, given some proper circular arc graph, a cycle $H$ can be computed in polynomial time (see \cite{kkv12}).
This gives us the following proposition.

\begin{proposition}
 \label{prop:pca-circular-order}
 Let $G$ be a connected proper circular arc graph, such that $\overline{G}$ is not bipartite.
 Further let $\sim_G$ be the connected twins relation and $\mathcal{P}$ the corresponding partition into equivalence classes.
 Then one can compute in polynomial time a canonical connected graph $H$, such that $V(H) = \mathcal{P}$ and $H$ has maximum degree two.
\end{proposition}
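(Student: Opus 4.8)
The plan is to reduce to the twin-free quotient, check that it satisfies the hypotheses of Proposition~\ref{prop:pca-by-cycle} together with the cited uniqueness statement of Huang, and then return the resulting unique cycle, which is automatically canonical.

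First I would form the quotient graph $G'$ with vertex set $\mathcal{P}$, where two distinct classes $X,Y \in \mathcal{P}$ are adjacent in $G'$ exactly when some (equivalently every) pair $v \in X$, $w \in Y$ has $vw \in E(G)$. Picking one representative from each class exhibits $G'$ as an induced subgraph of $G$, so $G'$ is again a proper circular arc graph, and it is connected because $G$ is. A brief check shows $G'$ is twin-free: closed neighbourhoods in $G$ are unions of classes, so $N_{G'}[X] = N_{G'}[Y]$ forces the representatives of $X$ and $Y$ to have equal closed neighbourhoods in $G$, hence $X = Y$. Since $G'$ and the relation $\sim_G$ are obtained from $G$ in an isomorphism-invariant way and in polynomial time, it suffices to attach a canonical cycle to $\mathcal{P}$ using only $G'$.

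Next I would verify the two remaining properties of $G'$. That $\overline{G'}$ is not bipartite follows because otherwise $\mathcal{P}$ would decompose into two cliques of $G'$, and lifting these to $G$ (each class being a clique of $G$, as distinct connected twins are adjacent) would write $V(G)$ as a union of two cliques, contradicting that $\overline{G}$ is not bipartite. That $G'$ has no universal vertex is the step I expect to require the most care: suppose $v$ were universal and fix a proper circular arc representation $(A_w)_w$ of $G'$; then for every $w \neq v$ the arc $A_w$ meets $A_v$ but is neither contained in nor contains $A_v$, so, being a connected arc, $A_w$ must contain one of the two endpoints of $A_v$. Arcs through a common endpoint pairwise intersect, so $V(G') \setminus \{v\}$ is covered by two cliques of $G'$; then $\overline{G' - v}$, and hence $\overline{G'}$ (obtained by adding back the isolated vertex $v$), is bipartite, contradicting the previous paragraph. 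So $G'$ is connected, twin-free, has no universal vertex, and $\overline{G'}$ is not bipartite.

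Finally, Proposition~\ref{prop:pca-by-cycle} applied to $G'$ yields a cycle $H$ with $V(H) = \mathcal{P}$ satisfying conditions~(1) and~(2), and by the cited result of Huang~\cite{huang95} this cycle is unique; moreover it can be computed in polynomial time from $G'$ by the cited algorithm~\cite{kkv12} (together with the polynomial-time computation of $\sim_G$ and of $G'$). Uniqueness yields canonicity in the usual way: an isomorphism between two admissible inputs induces a bijection of the corresponding partitions carrying one quotient graph to the other, hence carrying the unique cycle to the unique cycle. Since $\overline{G'}$ is non-bipartite we have $|\mathcal{P}| \geq 3$, so $H$ is a genuine cycle and in particular connected with maximum degree two, as required. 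Apart from the universal-vertex argument, the proof is bookkeeping around the quotient plus an appeal to the cited structure theorem.
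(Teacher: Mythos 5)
Your proposal is correct and follows essentially the same route the paper intends: the paper states this proposition as an immediate consequence of Proposition~\ref{prop:pca-by-cycle}, Huang's uniqueness result for connected, twin-free, non-co-bipartite graphs, and the polynomial-time algorithm of \cite{kkv12}, applied to the twin quotient. Your write-up merely makes explicit the routine verifications the paper leaves implicit (the quotient on $\mathcal{P}$ is a connected, twin-free proper circular arc graph with non-bipartite complement and no universal vertex), and these checks are all sound.
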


\section{Basic Properties}
\label{sec:basic-properties}

For unit square graphs there are several possible definitions.
The most obvious one is to describe vertices by axis-parallel unit squares with edges connecting two vertices if the unit squares intersect.
Alternatively it can also be demanded that vertices represented by unit squares are connected if the center of the first square is contained in the other square.
Another possibility is to describe vertices by points in the plane.
Note that two squares with unit side length intersect if and only if the distance between their centers using the maximum norm is at most one.
Thus, two vertices are connected if the distance between the points is at most one using the maximum norm.
Furthermore, a unit square contains the center point of another unit square if and only if the distance between both centers is at most one half using the maximum norm.
By applying a scaling argument this also gives us the equivalence to the second definition.
In this paper we work with the last definition, that is we represent vertices by points in the plane.
For a point $p \in \mathbb{R}^{k}$ we denote by $p_i$ the $i$-th component of $p$, $i \in [k]$.
The $L_{\infty}$-norm is defined as $\|p \|_{\infty} = \max_{i \in [k]}p_i$.

\begin{definition}
 A \emph{$k$-dimensional $L_{\infty}$-realization} of a graph $G$ is a mapping $f:V(G) \rightarrow \mathbb{R}^{k}$ such that $vw \in E(G)$ if and only if $\|f(v) - f(w)\|_{\infty} \leq 1$ for all $v,w \in V(G)$.
 A \emph{unit square graph} is a graph having a two-dimensional $L_{\infty}$-realization.
\end{definition}

Observe that graphs with $1$-dimensional $L_{\infty}$-realization are exactly the unit interval graphs.
For the remainder of this paper we focus on unit square graphs and just use the term realization for a two-dimensional $L_{\infty}$-realization.
Following our previous notation, for a realization $f:V(G) \rightarrow \mathbb{R}^{2}$ and a vertex $v \in V(G)$, we denote by $(f(v))_i$ the $i$-th component of $f(v)$.
This is also abbreviated by $f(v)_i$, $i \in [2]$.
We start by showing a series of basic properties for unit square graphs.

\begin{lemma}
 \label{la:forbid-k15}
 Let $G$ be a unit square graph. Then $G$ has no induced subgraph isomorphic to $K_{1,5}$.
\end{lemma}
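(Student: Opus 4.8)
The plan is to argue by contradiction via a pigeonhole argument on a realization. Suppose $G$ has an induced subgraph isomorphic to $K_{1,5}$, say with centre $v$ and independent leaves $w_1,\dots,w_5$, and fix a realization $f\colon V(G)\to\mathbb{R}^2$. After a translation we may assume $f(v)=(0,0)$. Since $vw_i\in E(G)$ we have $\|f(w_i)-f(v)\|_{\infty}\le 1$, so every point $f(w_i)$ lies in the axis-parallel square $Q=[-1,1]^2$ of side length $2$.

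Next I would cover $Q$ by the four axis-parallel closed squares $Q_1=[-1,0]\times[-1,0]$, $Q_2=[-1,0]\times[0,1]$, $Q_3=[0,1]\times[-1,0]$ and $Q_4=[0,1]\times[0,1]$, each of side length $1$. Their union is all of $Q$, so by the pigeonhole principle two of the five points $f(w_1),\dots,f(w_5)$ lie in a common square $Q_j$. Any two points of a side-length-$1$ axis-parallel square are at $L_{\infty}$-distance at most $1$; hence the corresponding leaves $w_i$ and $w_{i'}$ (which are distinct, being leaves of the star) satisfy $\|f(w_i)-f(w_{i'})\|_{\infty}\le 1$ and therefore $w_iw_{i'}\in E(G)$. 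This contradicts the assumption that $w_1,\dots,w_5$ form an independent set, so $G$ has no induced $K_{1,5}$.

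There is essentially no hard step here; the only minor point to watch is that the four sub-squares overlap on their boundaries, but this is harmless since all we need is a covering of $Q$ by sets of $L_{\infty}$-diameter at most $1$, not a genuine partition. I would also remark that the same argument shows more generally that a graph admitting a $k$-dimensional $L_{\infty}$-realization contains no induced $K_{1,2^k+1}$, of which the case $k=2$ is precisely the statement above.
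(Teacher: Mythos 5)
Your proof is correct and is essentially the same as the paper's: both fix a realization with the centre at the origin and apply pigeonhole to the four unit quadrants of $[-1,1]^2$ to force an edge between two leaves. The extra remark about the $k$-dimensional generalization is a fine observation but not needed for the statement.
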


\begin{proof}
 Since the class of unit square graphs is hereditary (closed under induced subgraphs) it suffices to show that $K_{1,5}$ is not a unit square graph.
 Suppose towards a contradiction that there exists a realization $f:V(G) \rightarrow \mathbb{R}^{2}$.
 Without loss of generality let $f(v) = (0,0)$ where $v$ is the center vertex connected to the other vertices $w_1,\dots,w_5$.
 Then $f(w_i) \in [-1,1]^{2}$ for all $i \in [5]$.
 Thus, there is some $q_1,q_2 \in \{-1,1\}$, such that there are $i \neq j$ with $f(w_i),f(w_j) \in [q_1,0] \times [q_2,0]$ (for simplicity we may use $[1,0]$ to denote the interval $[0,1]$).
 But then there is an edge between $w_i$ and $w_j$ which is a contradiction.
\end{proof}

\begin{remark}
 \label{re:induced-unit-interal-by-realization}
 Let $G$ be a unit square graph and $f\colon V(G) \rightarrow \mathbb{R}^{2}$ a realization.
 Further let $X \subseteq V(G)$, such that there are $a_1,b_1,a_2,b_2 \in \mathbb{R}$ with $a_1 \leq b_1 \leq a_1+1$, $a_2 \leq b_2$ and $f(v) \in [a_1,b_1] \times [a_2,b_2]$ for every $v \in X$.
 Then $G[X]$ is a unit interval graph.
\end{remark}

\begin{lemma}
 \label{la:unit-interval-neighborhood-existence}
 Let $G$ be a unit square graph. Then there is some $v \in V(G)$, such that $G[N[v]]$ is a unit interval graph.
\end{lemma}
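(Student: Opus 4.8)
The plan is to pick a vertex whose realization point is extremal in a suitable sense, and show that its closed neighborhood lies in a thin vertical strip, so that Remark~\ref{re:induced-unit-interal-by-realization} applies. Fix any realization $f\colon V(G)\to\mathbb{R}^2$. Consider the vertex $v$ minimizing the first coordinate $f(v)_1$ (breaking ties arbitrarily), so $f(v)_1 \le f(w)_1$ for every $w\in V(G)$. Every $w\in N[v]$ satisfies $\|f(v)-f(w)\|_\infty \le 1$, hence $|f(w)_1 - f(v)_1|\le 1$; combined with $f(w)_1 \ge f(v)_1$ this gives $f(v)_1 \le f(w)_1 \le f(v)_1+1$ for all $w\in N[v]$. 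Thus all points $f(w)$ with $w\in N[v]$ lie in the strip $[a_1,b_1]\times[a_2,b_2]$ with $a_1=f(v)_1$, $b_1=f(v)_1+1$, and $a_2,b_2$ chosen as the min and max of the second coordinates over $N[v]$ (a finite set, so these exist and $a_2\le b_2$). Since $b_1=a_1+1$, the hypotheses of Remark~\ref{re:induced-unit-interal-by-realization} are met, and therefore $G[N[v]]$ is a unit interval graph.

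I do not expect any real obstacle here: the argument is a one-shot extremal choice plus the already-stated remark. The only points requiring a line of care are that the minimum of $f(v)_1$ is attained (true because $V(G)$ is finite — graphs in this paper are finite), and that the remark's width condition is an inequality $b_1 \le a_1+1$ which our choice satisfies with equality. No case analysis on the shape of the neighborhood is needed, in contrast to Lemma~\ref{la:forbid-k15}. If one wanted to avoid invoking the remark as a black box, one would repeat its short proof: restricting to the strip, the first coordinate varies by at most $1$, so $\|f(u)-f(u')\|_\infty = |f(u)_2 - f(u')_2|$ exactly when the squares intersect, giving a $1$-dimensional $L_\infty$-realization of $G[N[v]]$ via the second coordinate, i.e.\ a unit interval representation.

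One might also note that the same argument works with any of the four extremal choices ($\pm$ in either coordinate), which foreshadows the kind of local-structure reasoning used later in Section~\ref{sec:local}; but for the statement at hand a single choice suffices.
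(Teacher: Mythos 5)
Your proposal is correct and follows essentially the same argument as the paper: choose $v$ minimizing $f(v)_1$, observe that $N[v]$ lies in a vertical strip of width $1$, and apply Remark~\ref{re:induced-unit-interal-by-realization} (the paper simply takes $a_2 = f(v)_2-1$, $b_2 = f(v)_2+1$ instead of the min/max of second coordinates, an immaterial difference). Nothing further is needed.
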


\begin{proof}
 Let $f\colon V(G) \rightarrow \mathbb{R}^{2}$ be a realization and choose $v = \argmin_{v \in V(G)} f(v)_1$.
 Further let $a_1 = f(v)_1$, $b_1 = a_1 + 1$, $a_2 = f(v)_2 - 1$ and $b_2 = f(v)_2 + 1$.
 Then $f(w) \in [a_1,b_1] \times [a_2,b_2]$ for every $w \in N[v]$.
 So $G[N[v]]$ is a unit interval graph according to Remark \ref{re:induced-unit-interal-by-realization}.
\end{proof}

\begin{lemma}
 Let $u,v \in V(G)$ be two non-adjacent vertices. Then $N(u) \cap N(v)$ defines a unit interval graph with at most two independent vertices.
\end{lemma}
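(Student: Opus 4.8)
The plan is to argue directly from a realization $f\colon V(G) \to \mathbb{R}^{2}$ and to exploit that non-adjacency of $u$ and $v$ forces a gap of length greater than $1$ in one of the two coordinates. Write $X := N(u) \cap N(v)$. If $X = \emptyset$ there is nothing to prove, so assume $X \neq \emptyset$. Since $u$ and $v$ are non-adjacent we have $\|f(u) - f(v)\|_{\infty} > 1$, so at least one coordinate of $f(u) - f(v)$ has absolute value exceeding $1$. After possibly swapping the two coordinate axes and relabelling $u$ and $v$, we may assume $f(v)_1 - f(u)_1 > 1$.

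First I would confine $X$ to a narrow axis-parallel box and invoke Remark~\ref{re:induced-unit-interal-by-realization}. For every $w \in X$ the adjacency $w \in N(u)$ gives $f(w)_1 \leq f(u)_1 + 1$ and $f(w)_2 \in [f(u)_2 - 1, f(u)_2 + 1]$, while $w \in N(v)$ gives $f(w)_1 \geq f(v)_1 - 1$ and $f(w)_2 \in [f(v)_2 - 1, f(v)_2 + 1]$. Setting $a_1 := f(v)_1 - 1$, $b_1 := f(u)_1 + 1$, $a_2 := \max(f(u)_2, f(v)_2) - 1$ and $b_2 := \min(f(u)_2, f(v)_2) + 1$, every $w \in X$ satisfies $f(w) \in [a_1,b_1] \times [a_2,b_2]$; since $X \neq \emptyset$ both intervals are non-empty, so $a_1 \leq b_1$ and $a_2 \leq b_2$. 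Crucially $b_1 - a_1 = 2 - (f(v)_1 - f(u)_1) < 1$, hence $b_1 \leq a_1 + 1$, and Remark~\ref{re:induced-unit-interal-by-realization} yields that $G[X]$ is a unit interval graph.

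For the bound on the independence number I would reuse the narrow strip. Suppose $w_1, w_2, w_3 \in X$ are pairwise non-adjacent. For any $i \neq j$ we have $|f(w_i)_1 - f(w_j)_1| \leq b_1 - a_1 < 1$, so non-adjacency forces $|f(w_i)_2 - f(w_j)_2| > 1$. Thus the three numbers $f(w_1)_2, f(w_2)_2, f(w_3)_2$ are pairwise more than $1$ apart, so after reordering their largest and smallest differ by more than $2$. But all three lie in $[a_2,b_2]$, an interval of length $b_2 - a_2 = 2 - |f(u)_2 - f(v)_2| \leq 2$, a contradiction. Hence $G[X]$ has at most two pairwise non-adjacent vertices.

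There is no serious obstacle here; the statement is a short geometric argument. The only points that need care are the two symmetry reductions (swapping the axes and the roles of $u$ and $v$) and the remark that, once $X \neq \emptyset$, the second-coordinate box is non-degenerate, which is precisely what makes Remark~\ref{re:induced-unit-interal-by-realization} applicable.
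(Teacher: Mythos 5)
Your proposal is correct and follows essentially the same route as the paper: reduce by symmetry to a gap of more than $1$ in the first coordinate, confine $N(u)\cap N(v)$ to a strip of width at most $1$ and height at most $2$, invoke Remark \ref{re:induced-unit-interal-by-realization} for the unit interval claim, and use the strip dimensions to exclude three pairwise non-adjacent vertices. You merely spell out the independence bound (which the paper asserts in one sentence) in more detail, which is fine.
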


\begin{proof}
 Without loss of generality assume $f(u)_1 + 1 \leq f(v)_1$.
 For $w \in N(u) \cap N(v)$ we obtain $f(u)_1 \leq f(w)_1 \leq f(v)_1$.
 Without loss of generality let $f(u) = (0,0)$.
 Then $f(w) \in [0,1] \times [-1,1]$ for every $w \in N(u) \cap N(v)$. 
 So $N(u) \cap N(v)$ defines a unit interval graph according to Remark \ref{re:induced-unit-interal-by-realization}.
 Furthermore in this area there can be at most two independent vertices.
\end{proof}

\begin{corollary}
 \label{cor:forbidden-k23-3k2-t2}
 Let $G$ be a unit square graph. Then $G$ has no induced subgraph isomorphic to $K_{2,3}$, $\overline{3K_2}$ ($3K_2$ is the disjoint union of three $K_2$) or $\overline{T_2}$.
\end{corollary}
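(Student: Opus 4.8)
The plan is to reduce everything to the lemma immediately preceding this corollary, namely that for any two non-adjacent vertices $u,v$ of a unit square graph the graph $G[N(u)\cap N(v)]$ is a unit interval graph with at most two independent vertices. In each of the three cases I would pick, inside the forbidden graph, a conveniently chosen non-adjacent pair whose common open neighborhood is forced to be ``too large'' to satisfy this lemma; since the class of unit square graphs is hereditary, it suffices to show that $K_{2,3}$, $\overline{3K_2}$ and $\overline{T_2}$ are themselves not unit square graphs. For $K_{2,3}$ this is immediate: take $u,v$ to be the two vertices forming the part of size two. They are non-adjacent, and the three vertices of the other part all lie in $N(u)\cap N(v)$ and are pairwise non-adjacent, so $G[N(u)\cap N(v)]$ contains three independent vertices, contradicting the lemma.

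For $\overline{3K_2}$ I would label the six vertices $a_1,b_1,a_2,b_2,a_3,b_3$ so that the non-edges are exactly the three pairs $a_ib_i$. Then $a_1$ and $b_1$ are non-adjacent, and $a_2,b_2,a_3,b_3$ all belong to $N(a_1)\cap N(b_1)$ (the only non-edge touching $a_1$ is $a_1b_1$, and likewise for $b_1$). The subgraph induced on $\{a_2,b_2,a_3,b_3\}$ has exactly the two non-edges $a_2b_2$ and $a_3b_3$, so it is isomorphic to $\overline{2K_2}=C_4$. But $C_4$ is not a unit interval graph, since unit interval graphs contain no induced $C_{n+4}$ (take $n=0$); this contradicts the lemma, which requires $G[N(a_1)\cap N(b_1)]$ to be a unit interval graph.

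For $\overline{T_2}$ the cleanest route is to observe that $T_2$ is the spider with centre $w_1$ and three legs $w_1w_6w_7$, $w_1w_2w_3$, $w_1w_4w_5$, so deleting the centre $w_1$ leaves exactly the three disjoint edges $w_2w_3$, $w_4w_5$, $w_6w_7$, i.e. $T_2-w_1\cong 3K_2$. Hence $\overline{T_2}$ contains $\overline{3K_2}$ as an induced subgraph, and the previous case together with heredity finishes the argument. (Alternatively one could argue directly: in $\overline{T_2}$ the vertices $w_6,w_7$ are non-adjacent and $\{w_2,w_3,w_4,w_5\}\subseteq N(w_6)\cap N(w_7)$ induces a $C_4$.) I do not expect any real obstacle here: all the steps are routine once the preceding lemma is available, and the only care needed is the bookkeeping of which pairs are edges or non-edges in the complemented graphs and the recognition of the relevant common neighborhood as an induced $C_4$.
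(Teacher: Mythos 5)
Your proposal is correct and follows essentially the same route as the paper: the paper also kills $K_{2,3}$ via three independent common neighbors of the two degree-three vertices, kills $\overline{3K_2}$ by noting two non-adjacent vertices whose common neighborhood is an induced $C_4$ (not a unit interval graph), and handles $\overline{T_2}$ by observing it extends $\overline{3K_2}$.
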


\begin{proof}
 For $K_{2,3}$ the two vertices on the left side have three independent common neighbors.
 The graph $\overline{3K_2}$ contains two non-adjacent vertices whose common neighborhood is a $4$-cycle.
 Finally $\overline{T_2}$ extends the graph $\overline{3K_2}$.
\end{proof}

We also require some properties of maximal cliques.
A clique is a set of vertices $C \subseteq V(G)$, such that $vw \in E(G)$ for every two distinct $v,w \in C$.
A maximal clique is a clique so that there is no larger clique containing it.
For a graph $G$ the set of maximal cliques of $G$ is denoted by $\mathcal{M}(G)$.

\begin{lemma}
 \label{la:clique-common-neighborhood}
 Let $G$ be a unit square graph and $C$ be a maximal clique of $G$.
 Then there are $v_1,\dots,v_4 \in V(G)$, such that $C = \bigcap_{i \in [4]} N[v_i]$.
\end{lemma}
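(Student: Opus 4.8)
The plan is to fix a realization $f\colon V(G)\to\mathbb{R}^2$ of $G$ and to exploit the fact that, in the $L_\infty$ metric, adjacency to an entire clique is governed only by the extreme coordinates of that clique. First I would record the elementary but crucial observation that a maximal clique $C$ satisfies $C=\bigcap_{v\in C}N[v]$: the inclusion $\subseteq$ is immediate since $C$ is a clique, and for $\supseteq$, any vertex $w$ lying in every $N[v]$ with $v\in C$ is either already in $C$ or adjacent to all of $C$, and the latter would contradict maximality of $C$. So it only remains to replace the (possibly large) index set $C$ by four well-chosen vertices.

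Next I would pick the four ``extreme'' vertices of $C$ with respect to the realization: let $v_1\in\argmax_{v\in C}f(v)_1$, $v_2\in\argmin_{v\in C}f(v)_1$, $v_3\in\argmax_{v\in C}f(v)_2$ and $v_4\in\argmin_{v\in C}f(v)_2$, with ties broken arbitrarily (if $C$ is a singleton all four coincide). The claim is that $C=\bigcap_{i\in[4]}N[v_i]$. The inclusion $C\subseteq\bigcap_{i\in[4]}N[v_i]$ holds because each $v_i\in C$ and $C$ is a clique, so every $v\in C$ lies in $N[v_i]$ for all $i$.

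For the reverse inclusion, take $w\in\bigcap_{i\in[4]}N[v_i]$; if $w=v_i$ for some $i$ then $w\in C$ and we are done, so assume $w$ is adjacent to each $v_i$, i.e.\ $\|f(w)-f(v_i)\|_\infty\le 1$ for all $i\in[4]$. For an arbitrary $v\in C$ I would bound $|f(w)_1-f(v)_1|$ by the two one-sided estimates $f(v)_1-f(w)_1\le f(v_1)_1-f(w)_1\le 1$ (using $f(v)_1\le f(v_1)_1$ and $w\in N[v_1]$) and $f(w)_1-f(v)_1\le f(w)_1-f(v_2)_1\le 1$ (using $f(v)_1\ge f(v_2)_1$ and $w\in N[v_2]$); the analogous estimates with $v_3,v_4$ give $|f(w)_2-f(v)_2|\le 1$. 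Hence $\|f(w)-f(v)\|_\infty\le 1$, so $w$ is adjacent to, or equal to, $v$; as $v\in C$ was arbitrary and $C$ is maximal, this forces $w\in C$. This establishes $C=\bigcap_{i\in[4]}N[v_i]$, as required.

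I do not expect a serious obstacle here. The only thing that needs care is the bookkeeping around coincidences: a vertex $w$ may equal some $v_i$, and distinct vertices may share the same image under $f$, although irreflexivity of the edge relation makes both harmless; the degenerate case $|C|=1$ is also trivial. The conceptual content is entirely the first observation together with the fact that, in the sup-norm, ``being within distance $1$ of all of $C$'' constrains a point only through the coordinatewise minima and maxima over $C$, which are witnessed by the four chosen vertices.
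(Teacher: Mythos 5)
Your proposal is correct and follows essentially the same route as the paper: fix a realization, take the four coordinatewise extreme vertices of $C$, show any $w$ in $\bigcap_{i\in[4]}N[v_i]$ satisfies $\|f(w)-f(v)\|_\infty\le 1$ for every $v\in C$, and conclude $w\in C$ by maximality. The only differences are cosmetic (labelling of the extremes and using two one-sided estimates instead of the paper's case split), so there is nothing to add.
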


\begin{proof}
 Let $f:V(G) \rightarrow \mathbb{R}^{2}$ be a realization of $G$.
 Pick $v_{2i-1} = \text{argmin}_{v \in C}f(v)_i$ and $v_{2i} = \text{argmax}_{v \in C}f(v)_i$ for $i \in [2]$. 
 Clearly $C \subseteq \bigcap_{i \in [4]} N[v_i]$.
 So let $w \in \bigcap_{i \in [4]} N[v_i]$ and $v \in C$.
 In order to prove $w \in C$ it suffices to show that $\| f(v) - f(w)\|_{\infty} \leq 1$, since $v$ is chosen arbitrarily from $C$.
 For $i \in [2]$ it holds that $f(v_{2i-1})_i \leq f(v_{2i})_i$.
 If $f(w)_i \leq f(v_{2i})_i$ then $-1 \leq f(v_{2i})_i - 1 - f(v)_i \leq f(w)_i - f(v)_i \leq f(v_{2i})_i - f(v)_i \leq 1$.
 Otherwise $f(w)_i \geq f(v_{2i-1})_i$ and $-1 \leq  f(v_{2i-1})_i - f(v)_i \leq f(w)_i - f(v)_i \leq f(v_{2i-1})_i + 1 - f(v)_i \leq 1$.
\end{proof}

The proof of the last lemma motivates us to describe maximal cliques by unit squares with all points being inside the square belonging to the maximal clique.

\begin{definition}
 Let $G$ be a unit square graph, $C \in \mathcal{M}(G)$ and let $f\colon V(G) \rightarrow \mathbb{R}^{2}$ be a realization.
 The \emph{center} of $C$ with respect to $f$ is the set $z_f(C) := \{p \in \mathbb{R}^{2}\;|\;\forall v \in C\colon \|f(v) - p\|_{\infty} \leq \frac{1}{2}\}$.
\end{definition}

Choose $v_1,\dots,v_4$ as in the last lemma and let $p_i = \frac{f(v_{2i-1})_i + f(v_{2i})_i}{2}$ for $i \in [2]$.
Then $p \in z_f(C)$ and so $z_f(C) \neq \emptyset$.
Further for every $p \in z_f(C)$ the unit square with center placed at $p$ contains exactly those points that belong to $C$.
This can easily be seen from the definition of the center and using the triangle inequation.
Thus, for two distinct maximal cliques $C,D$ we get $z_f(C) \cap z_f(D) = \emptyset$.

\begin{lemma}
 \label{la:clique-center-border}
 Let $G$ be a unit square graph and $C$ be a maximal clique of $G$.
 Further let $v_{2i-1} = \text{argmin}_{v \in C}f(v)_i$ and $v_{2i} = \text{argmax}_{v \in C}f(v)_i$ for $i \in [2]$.
 Then $z_f(C) = [f(v_2)_1 - \frac{1}{2}, f(v_1)_1 + \frac{1}{2}] \times [f(v_4)_2 - \frac{1}{2}, f(v_3)_2 + \frac{1}{2}]$.
\end{lemma}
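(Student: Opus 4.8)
The plan is to unwind the definition of $z_f(C)$ and to exploit that an $L_\infty$-ball is an axis-parallel square, so that the constraints on the two coordinates decouple completely. First I would note that a point $p = (p_1,p_2) \in \mathbb{R}^2$ lies in $z_f(C)$ precisely when $|f(v)_1 - p_1| \le \tfrac12$ and $|f(v)_2 - p_2| \le \tfrac12$ for every $v \in C$, since $\|f(v) - p\|_\infty = \max\{|f(v)_1 - p_1|,\, |f(v)_2 - p_2|\}$. The first family of inequalities, quantified over all $v \in C$, says exactly that $\max_{v \in C} f(v)_1 - \tfrac12 \le p_1 \le \min_{v \in C} f(v)_1 + \tfrac12$; by the choice of $v_1 = \argmin_{v \in C} f(v)_1$ and $v_2 = \argmax_{v \in C} f(v)_1$ this reads $f(v_2)_1 - \tfrac12 \le p_1 \le f(v_1)_1 + \tfrac12$. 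Symmetrically, the second family is equivalent to $f(v_4)_2 - \tfrac12 \le p_2 \le f(v_3)_2 + \tfrac12$, using $v_3 = \argmin_{v \in C} f(v)_2$ and $v_4 = \argmax_{v \in C} f(v)_2$.

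Since the condition on $p_1$ does not involve $p_2$ and vice versa, the set of admissible $p$ is exactly the Cartesian product of the two intervals, which is the rectangle claimed in the statement. Finally I would record that this rectangle is a genuine (non-degenerate) box rather than accidentally empty: as $C$ is a clique, any $v,v' \in C$ satisfy $\|f(v) - f(v')\|_\infty \le 1$, so $f(v_2)_1 - f(v_1)_1 \le 1$ and $f(v_4)_2 - f(v_3)_2 \le 1$, whence $f(v_2)_1 - \tfrac12 \le f(v_1)_1 + \tfrac12$ and $f(v_4)_2 - \tfrac12 \le f(v_3)_2 + \tfrac12$. (Non-emptiness is in any case already witnessed by the explicit center point exhibited after the definition of $z_f(C)$.)

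There is no real obstacle here: the only thing requiring a little care is the bookkeeping of the roles of $v_1,\dots,v_4$ — which index is an argmin and which an argmax for which coordinate — and the observation that maximality of $C$ is not actually used in the computation; the identity holds verbatim for an arbitrary clique, maximality being relevant only because $z_f$ was defined for maximal cliques in the first place. I would therefore present the argument as the two coordinatewise equivalences above followed by the one-line product observation.
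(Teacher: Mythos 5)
Your proposal is correct and follows essentially the same route as the paper: the paper likewise reduces membership in $z_f(C)$ to the two decoupled coordinatewise constraints and compares $p_1,p_2$ against the extreme values $f(v_2)_1, f(v_1)_1, f(v_4)_2, f(v_3)_2$ (the forward inclusion being immediate). Your additional remarks on non-emptiness and on maximality being unused are fine but not needed for the identity.
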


\begin{proof}
 Clearly $z_f(C) \subseteq [f(v_2)_1 - \frac{1}{2}, f(v_1)_1 + \frac{1}{2}] \times [f(v_4)_2 - \frac{1}{2}, f(v_3)_2 + \frac{1}{2}]$.
 So let $p \in [f(v_2)_1 - \frac{1}{2}, f(v_1)_1 + \frac{1}{2}] \times [f(v_4)_2 - \frac{1}{2}, f(v_3)_2 + \frac{1}{2}]$ and $w \in C$.
 If $f(w)_1 \geq p_1$ then $0 \leq f(w)_1 - p_1 \leq f(v_2)_1 - p_1 \leq \frac{1}{2}$.
 Otherwise $f(w)_1 \leq p_1$ and hence, $0 \leq p_1 - f(w)_1 \leq p_1 - f(v_1)_1 \leq \frac{1}{2}$.
 The argument for the second component is analogous.
 Thus, $\|w - p\|_{\infty} \leq \frac{1}{2}$.
\end{proof}

\section{Local Structure}
\label{sec:local}

The basic approach for our algorithm is group-theoretic.
A main obstacle for group theoretic approaches comes from large symmetric or alternating groups, that appear in the automorphism group.
For unit square graphs the central observation is that these groups can in a way only arise from cliques.
In this section we show how to cope with possibly very symmetric parts of the graphs and give a corresponding reduction to get rid of them.

\subsection{Connection to Interval graphs}

In order to obtain a better understanding of how the symmetric parts may look like we start by giving a translation from interval graphs to unit square graphs.
This is based on the description of maximal cliques introduced before.
Let $C_1,\dots,C_k$ be maximal cliques of a unit square graph, such that for all $i < j$ and all $p^{i} \in z_f(C_i)$, $p^{j} \in z_f(C_j)$ it holds that $p^{i}_b < p^{j}_b$ for $b \in [2]$.
Then it is not difficult to show that each vertex can only appear in consecutive maximal cliques.
More precisely, for each interval $I \subseteq [k]$, there is a non-empty set of points $P \subseteq \mathbb{R}^{2}$, such that for each $v \in V(G)$ with $f(v) \in P$ it holds $v \in C_i$ if and only if $i \in I$.
A visualization is given in Figure \ref{fig:interval-to-unit-square-2}.
This allows us to encode arbitrary interval graphs using unit square graphs.

Let $G$ be a graph.
Define the colored graph $G_{\mathcal{M}} = (V(G) \cup \mathcal{M}(G), E(G_{\mathcal{M}}), c_{G_{\mathcal{M}}})$ with
\begin{equation*}
 E(G_{\mathcal{M}}) = \{vC \mid C \in \mathcal{M}(G), v \in C\} \cup \{vw \mid v \neq w \in V(G)\} \cup \{CD \mid C \neq D \in \mathcal{M}(G)\}
\end{equation*}
and $c_{G_{\mathcal{M}}}(v) = c_G(v) + 1$ for $v \in V(G)$, $c_{G_{\mathcal{M}}}(C) = 1$ for $C \in \mathcal{M}(G)$.
For a group $\Gamma \leq \Sym(\Omega)$ and a set $A \subseteq \Omega$ the \emph{pointwise stabilizer} is the group $\Stab_\Gamma^{\bullet}(A) := \{\gamma \in \Gamma \mid \forall \alpha \in A\colon\alpha^{\gamma} = \alpha\}$.
If $A$ is invariant under $\Gamma$ (i.e.\ $A^{\gamma} = A$ for every $\gamma \in \Gamma$) we define the restriction of $\Gamma$ to $A$ as $\Gamma|_A := \{\gamma|_A \mid \gamma \in \Gamma\}$ where $\gamma|_A \colon A \rightarrow A$ with $\gamma|_A(\alpha) = \gamma(\alpha)$.

\begin{lemma}
 \label{la:clique-graph-aut}
 For every two graphs $G$ and $H$ it holds that
 \begin{enumerate}
  \item $G \cong H$ if and only if $G_{\mathcal{M}} \cong H_{\mathcal{M}}$,
  \item $\Stab_{\Aut(G_{\mathcal{M}})}^{\bullet}(V(G)) = \{1\}$ (here $1$ denotes the identify mapping),
  \item $\Aut(G_{\mathcal{M}})|_{V(G)} = \Aut(G)$.
 \end{enumerate}
\end{lemma}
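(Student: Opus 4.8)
The plan is to reduce all three claims to a single structural observation about $G_{\mathcal{M}}$: the colouring separates $V(G)$ from $\mathcal{M}(G)$, so each of these sets is preserved (setwise) by every automorphism of $G_{\mathcal{M}}$ and by every isomorphism $G_{\mathcal{M}} \to H_{\mathcal{M}}$; the induced subgraphs on $V(G)$ and on $\mathcal{M}(G)$ are complete and hence carry no information; and therefore all the content of $G_{\mathcal{M}}$ lies in the bipartite incidence between the two parts. I would first record two elementary facts. First, for distinct $v,w \in V(G)$ we have $vw \in E(G)$ if and only if some $C \in \mathcal{M}(G)$ contains both $v$ and $w$: an edge is a clique and so extends to a maximal clique, and conversely a maximal clique is closed under taking pairs. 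Second, a vertex $C$ of $G_{\mathcal{M}}$ has colour $1$ exactly when $C \in \mathcal{M}(G)$, and in that case the set of neighbours of the vertex $C$ inside $V(G)$ is precisely $C$; since distinct maximal cliques are distinct subsets of $V(G)$, the vertex $C$ is recovered from its $V(G)$-neighbourhood. Taken together, these say that $G$ (with its colouring, obtained by subtracting $1$) and the incidence data of $G_{\mathcal{M}}$ determine one another canonically.

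For item~1, in the forward direction I would extend an isomorphism $\varphi\colon G \to H$ to $\hat{\varphi}\colon G_{\mathcal{M}} \to H_{\mathcal{M}}$ by $\hat{\varphi}(v) = \varphi(v)$ for $v \in V(G)$ and $\hat{\varphi}(C) = \{\varphi(v) \mid v \in C\}$ for $C \in \mathcal{M}(G)$, which is well defined because an isomorphism maps $\mathcal{M}(G)$ bijectively onto $\mathcal{M}(H)$; one then checks directly that each of the three edge types ($vw$, $CD$, $vC$) and the colouring is preserved. Conversely, given an isomorphism $\psi\colon G_{\mathcal{M}} \to H_{\mathcal{M}}$, the colouring forces $\psi(V(G)) = V(H)$ and $\psi(\mathcal{M}(G)) = \mathcal{M}(H)$; setting $\varphi := \psi|_{V(G)}$ and using that $\psi$ carries the $V(G)$-neighbourhood of a colour-$1$ vertex $C$ onto the $V(H)$-neighbourhood of $\psi(C)$, the first fact applied in $G$ and in $H$ yields $vw \in E(G) \iff \varphi(v)\varphi(w) \in E(H)$, and $\varphi$ preserves the colouring of $G$ because $\psi$ preserves that of $G_{\mathcal{M}}$. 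Item~2 is the special case in which $\psi = \gamma \in \Aut(G_{\mathcal{M}})$ fixes $V(G)$ pointwise: then for every $C \in \mathcal{M}(G)$ the colour-$1$ vertex $\gamma(C)$ has the same $V(G)$-neighbourhood as $C$, namely $C$ itself, so by the second fact $\gamma(C) = C$, and hence $\gamma = 1$. Item~3 then follows by applying both constructions with $H = G$: restricting automorphisms gives $\Aut(G_{\mathcal{M}})|_{V(G)} \subseteq \Aut(G)$ exactly as in the converse direction of item~1, while the extension $\varphi \mapsto \hat{\varphi}$ gives the reverse inclusion $\Aut(G) \subseteq \Aut(G_{\mathcal{M}})|_{V(G)}$.

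I do not anticipate a real obstacle: the whole argument is the two elementary facts plus routine verification that a short list of adjacencies and colours is respected. The point that most deserves care is the identification, inside $G_{\mathcal{M}}$ and $H_{\mathcal{M}}$, of a maximal clique $C$ viewed as a subset of vertices with the $V(G)$-neighbourhood of the corresponding vertex $C$; this is what allows a permutation of the clique-vertices to be read off a permutation of $V(G)$ in items~1 and~3, and what makes the pointwise stabiliser in item~2 trivial. One should additionally keep an eye on the degenerate cases — an isolated vertex forms a singleton maximal clique, and if $V(G)=\emptyset$ then $\mathcal{M}(G) = \{\emptyset\}$ — but these are covered by the same reasoning.
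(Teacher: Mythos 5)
Your proposal is correct and follows the same route as the paper, which disposes of the lemma in one line by citing exactly your first fact ($vw \in E(G)$ iff some maximal clique contains both) together with the definition of $G_{\mathcal{M}}$. Your write-up merely makes explicit the routine verifications (colour classes separate $V(G)$ from $\mathcal{M}(G)$, a clique-vertex is determined by its $V(G)$-neighbourhood) that the paper leaves implicit.
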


\begin{proof}
 This follows directly from the definition of $G_{\mathcal{M}}$ and the fact that $vw \in E(G)$ if and only if there is some maximal clique $C \in \mathcal{M}(G)$ with $v,w \in C$.
\end{proof}

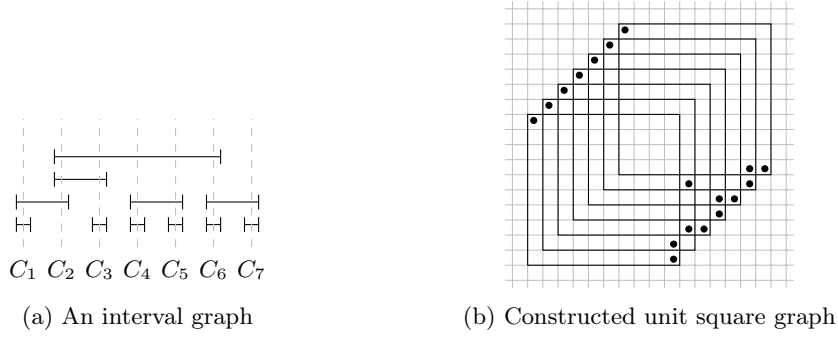
\begin{figure}
 \centering
 \begin{subfigure}[b]{.45\linewidth}
  \centering
  \begin{tikzpicture}
   \draw[|-|] (0.4,0.6) -- (0.6,0.6);
   \draw[|-|] (0.4,0.9) -- (1.1,0.9);
   \draw[|-|] (0.9,1.2) -- (1.6,1.2);
   \draw[|-|] (1.4,0.6) -- (1.6,0.6);
   \draw[|-|] (1.9,0.6) -- (2.1,0.6);
   \draw[|-|] (2.4,0.6) -- (2.6,0.6);
   \draw[|-|] (1.9,0.9) -- (2.6,0.9);
   \draw[|-|] (0.9,1.5) -- (3.1,1.5);
   \draw[|-|] (2.9,0.6) -- (3.1,0.6);
   \draw[|-|] (3.4,0.6) -- (3.6,0.6);
   \draw[|-|] (2.9,0.9) -- (3.6,0.9);
   
   \draw[dashed, gray!50] (0.5,0.3) -- (0.5,2);
   \draw[dashed, gray!50] (1.0,0.3) -- (1.0,2);
   \draw[dashed, gray!50] (1.5,0.3) -- (1.5,2);
   \draw[dashed, gray!50] (2.0,0.3) -- (2.0,2);
   \draw[dashed, gray!50] (2.5,0.3) -- (2.5,2);
   \draw[dashed, gray!50] (3.0,0.3) -- (3.0,2);
   \draw[dashed, gray!50] (3.5,0.3) -- (3.5,2);
   
   \node at (0.5,0) {{\small $C_1$}};
   \node at (1.0,0) {{\small $C_2$}};
   \node at (1.5,0) {{\small $C_3$}};
   \node at (2.0,0) {{\small $C_4$}};
   \node at (2.5,0) {{\small $C_5$}};
   \node at (3.0,0) {{\small $C_6$}};
   \node at (3.5,0) {{\small $C_7$}};
  \end{tikzpicture}
  \caption{An interval graph}
  \label{fig:interval-to-unit-square-1}
 \end{subfigure}
 \begin{subfigure}[b]{.45\linewidth}
  \centering
  \begin{tikzpicture}
   \draw[step=0.2, gray!50, very thin] (0.1,0.1) grid (3.9,3.9);
   
   \draw (0.4,0.4) -- (2.4,0.4) -- (2.4,2.4) -- (0.4,2.4) -- (0.4,0.4);
   \draw (0.6,0.6) -- (2.6,0.6) -- (2.6,2.6) -- (0.6,2.6) -- (0.6,0.6);
   \draw (0.8,0.8) -- (2.8,0.8) -- (2.8,2.8) -- (0.8,2.8) -- (0.8,0.8);
   \draw (1.0,1.0) -- (3.0,1.0) -- (3.0,3.0) -- (1.0,3.0) -- (1.0,1.0);
   \draw (1.2,1.2) -- (3.2,1.2) -- (3.2,3.2) -- (1.2,3.2) -- (1.2,1.2);
   \draw (1.4,1.4) -- (3.4,1.4) -- (3.4,3.4) -- (1.4,3.4) -- (1.4,1.4);
   \draw (1.6,1.6) -- (3.6,1.6) -- (3.6,3.6) -- (1.6,3.6) -- (1.6,1.6);
   
   \draw[fill] (0.48,2.32) circle (0.04);
   \draw[fill] (0.68,2.52) circle (0.04);
   \draw[fill] (0.88,2.72) circle (0.04);
   \draw[fill] (1.08,2.92) circle (0.04);
   \draw[fill] (1.28,3.12) circle (0.04);
   \draw[fill] (1.48,3.32) circle (0.04);
   \draw[fill] (1.68,3.52) circle (0.04);
   
   \draw[fill] (2.32,0.48) circle (0.04);
   \draw[fill] (2.32,0.68) circle (0.04);
   \draw[fill] (2.52,0.88) circle (0.04);
   \draw[fill] (2.72,0.88) circle (0.04);
   \draw[fill] (2.92,1.08) circle (0.04);
   \draw[fill] (2.92,1.28) circle (0.04);
   \draw[fill] (3.12,1.28) circle (0.04);
   \draw[fill] (2.52,1.48) circle (0.04);
   \draw[fill] (3.32,1.48) circle (0.04);
   \draw[fill] (3.32,1.68) circle (0.04);
   \draw[fill] (3.52,1.68) circle (0.04);
  \end{tikzpicture}
  \caption{Constructed unit square graph}
  \label{fig:interval-to-unit-square-2}
 \end{subfigure}
 \caption{From interval to unit square graphs}
 \label{fig:interval-to-unit-square}
\end{figure}

We now use the fact that a graph $G$ is an interval graph if and only if there is a linear order on the maximal cliques, such that each vertex appears in consecutive maximal cliques \cite{fg65}.
For a vertex $v \in V(G)$ let $\mathcal{M}_G(v) = \mathcal{M}(v) = \{C \in \mathcal{M}(G) \mid v \in C\}$.

\begin{lemma}
 \label{la:interval-to-unit-square}
 Let $G$ be an interval graph.
 Then $G_{\mathcal{M}}$ is a colored unit square graph.
\end{lemma}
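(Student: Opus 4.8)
The plan is to exhibit an explicit two-dimensional $L_\infty$-realization of $G_{\mathcal M}$. By the characterization of interval graphs in terms of a consecutive ordering of the maximal cliques recalled just before the statement, there is a linear order $C_1,\dots,C_k$ of $\mathcal M(G)$ such that for every $v\in V(G)$ the set $\mathcal M(v)$ forms an interval, say $\mathcal M(v)=\{C_{a(v)},\dots,C_{b(v)}\}$ with $1\le a(v)\le b(v)\le k$. Fix $\varepsilon=1/k$ and define $f\colon V(G_{\mathcal M})\to\mathbb R^2$ by
\[
 f(C_i)=(i\varepsilon,\,i\varepsilon)\quad(i\in[k]),\qquad
 f(v)=\bigl(1+(a(v)-\tfrac12)\varepsilon,\ (b(v)+\tfrac12)\varepsilon-1\bigr)\quad(v\in V(G)).
\]
The geometric picture, matching Figure \ref{fig:interval-to-unit-square-2}, is that the cliques sit along the diagonal so closely packed that any two of them are within $L_\infty$-distance $1$, while each vertex is pushed up and to the right into the unique region that meets exactly those clique-squares whose index lies between $a(v)$ and $b(v)$ --- the first coordinate of $f(v)$ being responsible for cutting off the indices below $a(v)$ and the second coordinate for cutting off the indices above $b(v)$.

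Next I would verify the three kinds of edges prescribed by $E(G_{\mathcal M})$. For two cliques, $\|f(C_i)-f(C_j)\|_\infty=|i-j|\varepsilon\le k\varepsilon=1$, so $\mathcal M(G)$ becomes a clique in the realization. For two vertices, $\|f(v)-f(v')\|_\infty=\varepsilon\cdot\max\{|a(v)-a(v')|,\,|b(v)-b(v')|\}\le k\varepsilon=1$, so $V(G)$ also becomes a clique; note that connected twins are sent to a common point, which is harmless, since they are adjacent and have the same closed neighborhood. The core of the argument is the vertex--clique incidence: writing $f(v)=(x,y)$, one has $x-i\varepsilon=1+(a(v)-i-\tfrac12)\varepsilon$ and $y-i\varepsilon=(b(v)-i+\tfrac12)\varepsilon-1$. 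If $i<a(v)$ then $a(v)-i-\tfrac12\ge\tfrac12$, so $x-i\varepsilon>1$ and hence $\|f(v)-f(C_i)\|_\infty>1$; if $i>b(v)$ then $b(v)-i+\tfrac12\le-\tfrac12$, so $y-i\varepsilon<-1$ and hence again $\|f(v)-f(C_i)\|_\infty>1$; and if $a(v)\le i\le b(v)$ then, using $(b(v)-a(v)+\tfrac12)\varepsilon<1$, one checks that $|x-i\varepsilon|<1$ and $|y-i\varepsilon|<1$, so $\|f(v)-f(C_i)\|_\infty<1$. Therefore $\|f(v)-f(C_i)\|_\infty\le 1$ exactly when $a(v)\le i\le b(v)$, i.e.\ exactly when $v\in C_i$, which is precisely the condition for $vC_i$ to be an edge of $G_{\mathcal M}$.

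Combining the three checks, $f$ is a two-dimensional $L_\infty$-realization of the underlying graph of $G_{\mathcal M}$, so $G_{\mathcal M}$ is a (colored) unit square graph, the coloring playing no role in the definition. The only point requiring care is the $\varepsilon$-bookkeeping in the incidence check: the forbidden vertex--clique distances must come out strictly above $1$ while all the required distances stay at most $1$, which is exactly why the half-step offsets $\pm\tfrac12\varepsilon$ are built into the coordinates of $f(v)$ and why $\varepsilon$ is taken as small as $1/k$; everything else is a routine computation.
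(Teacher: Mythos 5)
Your proposal is correct and follows essentially the same route as the paper: both exploit the consecutive-clique ordering of an interval graph and exhibit an explicit realization with the maximal cliques packed along a diagonal and each vertex placed so that its first coordinate cuts off the cliques with index below $a(v)$ and its second coordinate those above $b(v)$; your coordinates differ from the paper's only by a translation and the extra $\pm\tfrac{\varepsilon}{2}$ offsets that make the relevant distances strictly rather than weakly separated from $1$. The verification of the three edge types matches the paper's computation.
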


\begin{proof}
 Let $<$ be a linear order on $\mathcal{M}(G)$, such that each vertex appears in consecutive maximal cliques.
 Let $k = |\mathcal{M}(G)|$ and $C_1 < C_2 < \dots < C_k$ be the maximal cliques of $G$.
 For each $v \in V(G)$ define $a_v, b_v \in [k]$ in such a way that $\mathcal{M}(v) = \{C_i \;|\; a_v \leq i \leq b_v\}$.
 
 Consider the following realization $f: V(G_{\mathcal{M}}) \rightarrow \mathbb{R}^{2}$ with
 \begin{equation*}
  f(C_i) = (\frac{i}{k} - 1, \frac{i}{k})
 \end{equation*}
 and
 \begin{equation*}
  f(v) = (\frac{a_v}{k}, \frac{b_v}{k} - 1)
 \end{equation*}
 for all $v \in V(G)$.
 Clearly all vertices are connected to each other as well as all maximal cliques.
 So let $v \in V(G)$.
 Then $|\frac{a_v}{k} - \frac{i}{k} + 1| = |\frac{a_v - i}{k} + 1| \leq 1$ if and only if $a_v \leq i$.
 Further $|\frac{i}{k} - \frac{b_v}{k} + 1| = |\frac{i - b_v}{k} + 1| \leq 1$ if and only if $i \leq b_v$.
 So there is an edge between $v \in V(G)$ and $C_i \in \mathcal{M}$ if and only if $a_v \leq i \leq b_v$ if and only if $v \in C_i$.
\end{proof}

A visualization of the last lemma is given in Figure \ref{fig:interval-to-unit-square}.

\begin{corollary}
 \label{cor:translation-interval-to-unit-square}
 For each colored interval graph $G$ one can compute in polynomial time some colored unit square graph $H$ with $\Aut(G) \cong \Aut(H)$.
\end{corollary}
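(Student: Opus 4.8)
The plan is to set $H := G_{\mathcal{M}}$ and check the two substantive claims: that $H$ is computable in polynomial time and is a colored unit square graph, and that $\Aut(G)\cong\Aut(H)$. For the first point, the only nontrivial ingredient in building $G_{\mathcal{M}}$ is the set $\mathcal{M}(G)$ of maximal cliques. Here I would invoke the classical fact that an interval graph (being chordal) on $n$ vertices has at most $n$ maximal cliques, all of which can be listed in polynomial time; hence $G_{\mathcal{M}}$ has size polynomial in $|V(G)|$ and is computable in polynomial time, and the coloring $c_{G_{\mathcal{M}}}$ is read off directly from $c_G$ as in the definition. That $G_{\mathcal{M}}$ is then a colored unit square graph is exactly Lemma \ref{la:interval-to-unit-square}. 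Note that the corollary only asserts the \emph{existence} of a realization, so no further work is needed; if one wanted an explicit realization, the map $f$ from the proof of Lemma \ref{la:interval-to-unit-square} is itself computable in polynomial time once a suitable linear order on $\mathcal{M}(G)$ has been fixed.

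For the second point, I would read off the isomorphism $\Aut(G)\cong\Aut(G_{\mathcal{M}})$ from Lemma \ref{la:clique-graph-aut}. The coloring of $G_{\mathcal{M}}$ gives color $1$ precisely to the vertices coming from $\mathcal{M}(G)$ and colors at least $2$ to the vertices coming from $V(G)$, so $V(G)$ is invariant under $\Aut(G_{\mathcal{M}})$ and the restriction $\rho\colon \Aut(G_{\mathcal{M}})\to\Sym(V(G))$, $\gamma\mapsto\gamma|_{V(G)}$, is a well-defined group homomorphism. By part~3 of Lemma \ref{la:clique-graph-aut} its image equals $\Aut(G)$, so $\rho$ is onto $\Aut(G)$; by part~2 its kernel is $\Stab_{\Aut(G_{\mathcal{M}})}^{\bullet}(V(G))=\{1\}$, so $\rho$ is injective. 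Hence $\rho$ restricts to an isomorphism of $\Aut(G_{\mathcal{M}})$ onto $\Aut(G)$, which is the assertion.

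There is essentially no hard step left: all the geometric content sits in Lemma \ref{la:interval-to-unit-square} and all the group-theoretic content in Lemma \ref{la:clique-graph-aut}. The one place where the interval-graph hypothesis is genuinely needed for the corollary, beyond Lemma \ref{la:interval-to-unit-square}, is the polynomial bound on $|\mathcal{M}(G)|$, which guarantees that $G_{\mathcal{M}}$ can be written down efficiently — for a general graph the number of maximal cliques can be exponential.
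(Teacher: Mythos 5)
Your proposal is correct and follows exactly the route of the paper, which derives the corollary directly from Lemma \ref{la:clique-graph-aut} and Lemma \ref{la:interval-to-unit-square}; you merely spell out the details the paper leaves implicit (the polynomial bound on $|\mathcal{M}(G)|$ for interval graphs and the fact that the restriction map $\gamma \mapsto \gamma|_{V(G)}$ is an isomorphism by parts~2 and~3 of Lemma \ref{la:clique-graph-aut}). No gaps.
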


\begin{proof}
 This follows directly from Lemma \ref{la:clique-graph-aut} and \ref{la:interval-to-unit-square}.
\end{proof}

In particular this construction implies that there are twin-free unit square graphs where the automorphism group contains arbitrarily large symmetric groups which can not be handled by the group theoretic approach due to Luks.
For example consider the following graph $G_{n,k}$ for $n,k \in \mathbb{N}$.
The vertex set $V(G_{n,k}) = [n]^{\leq k}$ is the set of all words over the alphabet $[n]$ of length at most $k$ and there is an edge $vw \in E(G_{n,k})$ if $v$ is a prefix of $w$ (this is interpreted for an undirected graph).
First, $G_{n,k}$ is an interval graph.
To verify this consider the set $[n]^{k}$ of words of length $k$ with the natural lexicographic order.
Then each vertex $v \in V(G_{n,k})$ can be represented by the interval $I_{n,k}(v) = \{w \in [n]^{k} \mid v \text{ is prefix of }w\}$.
It is easy to check that two vertices are connected if and only if the corresponding intervals intersect.
The automorphism group of $G_{n,k}$ is a wreath product of the automorphism group of $G_{n,k-1}$ by a symmetric group on $n$ points.

One of the main contributions of this work is to show that within automorphism groups of unit square graphs such symmetric groups only appear in a local setting.
Here, local refers to a small area in the realization of a graph $G$.
In the presented example the vertices of each color class are close together and in particular they induce a clique.
The main target for this section is to present a method how to cope with the local parts of the graph that may admit large symmetric groups in the automorphism group.
For this purpose we have to analyze the structure of clique-partitions of the vertices.

\subsection{Clique-Partitions}

\begin{definition}
 Let $G$ be a graph and $\mathcal{P}$ be a partition of the vertices.
 We call $\mathcal{P}$ a \emph{clique-partition} if $X$ is a clique for each $X \in \mathcal{P}$.
\end{definition}

In the following let $G$ be a unit square graph with realization $f:V(G) \rightarrow \mathbb{R}^{2}$.
We first define the graph $G_{\mathcal{M}}^{*} = (V(G) \cup \mathcal{M}(G), E(G_{\mathcal{M}}^{*}), c_{G_{\mathcal{M}}^{*}})$ with
\begin{equation*}
 E(G_{\mathcal{M}}^{*}) = \{vC \mid C \in \mathcal{M}(G), v \in C\} \cup E(G)
\end{equation*}
and $c_{G_{\mathcal{M}}^{*}} = c_{G_{\mathcal{M}}}$ as defined in the previous subsection.
Let $\mathcal{P}$ be a clique-partition of the vertices, which is refined by the color refinement algorithm applied to the graph $G_{\mathcal{M}}^{*}$.
More precisely let $\mathcal{P}^{*}$ be the coarsest partition of $V(G) \cup \mathcal{M}(G)$ that is stable with respect to $G_{\mathcal{M}}^{*}$ and refines the partition $\mathcal{P} \cup \{\mathcal{M}(G)\}$.
The partition $\mathcal{P}$ is called \emph{clique-stable} if $\mathcal{P}^{*} \cap 2^{V(G)} = \mathcal{P}$.
% For $v,w \in V(G_{\mathcal{M}}^{*})$ let $v \simeq_0 w$ if $c_{G_{\mathcal{M}}^{*}}(v) = c_{G_{\mathcal{M}}^{*}}(w)$ and $v \simeq_{\mathcal{P} \cup \{\mathcal{M}(G)\}} w$, where $\simeq_{\mathcal{P} \cup \{\mathcal{M}(G)\}}$ is the equivalence relation, that corresponds to the partition $\mathcal{P} \cup \{\mathcal{M}(G)\}$.
% For $i \geq 0$ let $\mathcal{P}_i$ be the partition into the equivalence classes of the relation $\simeq_i$.
% Then define $v \simeq_{i+1} w$ if $v \simeq_i w$ and $|N[v] \cap X| = |N[w] \cap X|$ for all $X \in \mathcal{P}_i$.
% Let $N \in \mathbb{N}$ be minimal, such that $\mathcal{P}_N = \mathcal{P}_{N+1}$.
% A clique-partition $\mathcal{P}$ of the vertices of $G$ is called \emph{clique-stable}, if $\mathcal{P}_N \cap 2^{V(G)} = \mathcal{P}$.
In following we analyze the structure of $\mathcal{P}^{*}$ with respect to the realization $f$.
For this purpose let $\simeq$ be the equivalence relation on $V(G) \cup \mathcal{M}(G)$ that corresponds to $\mathcal{P}^{*}$.

\begin{lemma}
 Let $C,D$ be distinct maximal cliques with $C \simeq D$. Then $z_f(C)_i \cap z_f(D)_i = \emptyset$ for $i \in [2]$.
\end{lemma}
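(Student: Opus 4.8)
The plan is to argue by contradiction: suppose $C \simeq D$ are distinct maximal cliques but their centers $z_f(C)$ and $z_f(D)$ are not separated in some coordinate, say without loss of generality the first, so that $z_f(C)_1 \cap z_f(D)_1 \neq \emptyset$. Recall from Lemma~\ref{la:clique-center-border} that each center is a closed axis-parallel rectangle, so $z_f(C)_1$ and $z_f(D)_1$ are closed intervals; if they intersect then one of the two intervals contains a point of the other, and I would use this overlap to produce a vertex that distinguishes $C$ from $D$ under color refinement, contradicting $C \simeq D$.

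**The key geometric step.**

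The main work is to exploit the overlap in the first coordinate to find a vertex $v$ with a controlled adjacency pattern to $C$ and $D$. Write $z_f(C)_1 = [\alpha_C, \beta_C]$ and $z_f(D)_1 = [\alpha_D, \beta_D]$; by Lemma~\ref{la:clique-center-border} we have $\beta_C - \alpha_C \leq 1$ and $\beta_D - \alpha_D \leq 1$, and $\beta_C = f(v_1^C)_1 + \tfrac12$, $\alpha_C = f(v_2^C)_1 - \tfrac12$ where $v_1^C, v_2^C$ realize the min and max of the first coordinate over $C$, and similarly for $D$. Since $C \neq D$ we know $z_f(C) \cap z_f(D) = \emptyset$ (established in the text right after the center definition), so the rectangles are disjoint even though their first-coordinate projections overlap; hence their second-coordinate projections $z_f(C)_2$ and $z_f(D)_2$ are disjoint intervals, say $z_f(C)_2$ lies entirely below $z_f(D)_2$. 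Now I would pick a point $p$ in the first-coordinate overlap and look at the "column" of vertices whose first coordinate is near $p$. The idea is that $C$ and $D$ sit in overlapping horizontal bands but disjoint vertical bands; a vertex placed with first coordinate in the overlap of $z_f(C)_1 \cap z_f(D)_1$ and second coordinate appropriately chosen can be adjacent to all of one clique (it lies in that clique's center-square) while being forced non-adjacent to at least one vertex of the other clique because the vertical gap exceeds $1$ in the relevant direction — or alternatively such a vertex must lie in both cliques, contradicting maximality and disjointness of $C$ and $D$. I expect the cleanest route is: the overlap in coordinate $1$ together with $C \neq D$ forces, via the argument in Lemma~\ref{la:clique-common-neighborhood} and Remark~\ref{re:induced-unit-interal-by-realization}, that $C$ and $D$ have a common vertex or that one of the extreme vertices $v_j^C$ witnesses a different count of neighbors in a refinement class than the corresponding $v_j^D$, breaking the stability of $\mathcal{P}^*$.

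**Where I expect the obstacle.**

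The hard part will be making the combinatorial distinguishing argument airtight: color refinement only sees adjacency to color classes, and both $C$ and $D$ are in the single class $\mathcal{M}(G)$ initially, so I cannot directly "see" the geometry. The real content is to translate the geometric overlap into a statement about the stable partition $\mathcal{P}^*$ — for instance, showing that if the first-coordinate projections overlap, then the union $C \cup D$ (or a suitable subset picked out by the extreme vertices in coordinate $1$) is contained in a region of the form $[a_1,b_1]\times[a_2,b_2]$ with $b_1 - a_1 \leq 1$, so by Remark~\ref{re:induced-unit-interal-by-realization} it induces a unit interval graph; within a unit interval graph two distinct maximal cliques always have a linear (hence asymmetric) separation, and one side must contain a vertex belonging to exactly one of $C, D$, and that vertex will witness different neighbor-counts for $C$ and $D$ into the refinement class it belongs to. Handling the two sub-cases of how the intervals $z_f(C)_1$ and $z_f(D)_1$ overlap (nested versus staggered) and checking that the relevant region indeed has width at most $1$ will require the careful use of the identities from Lemma~\ref{la:clique-center-border}; that bookkeeping, together with verifying that the distinguishing vertex is not a connected twin that got merged, is the step I would allocate the most care to.
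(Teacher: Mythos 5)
Your setup is sound (assume the first-coordinate projections meet, note the centers themselves are disjoint rectangles, so the second-coordinate projections are disjoint, say $z_f(C)_2$ below $z_f(D)_2$), and this matches the paper's starting point. But the core of the proof is missing, and the mechanisms you sketch in its place do not work. Nothing forces the ``vertical gap'' between the two cliques to exceed $1$, so non-adjacency cannot be obtained that way; and a vertex lying in both cliques contradicts nothing, since distinct maximal cliques may well intersect (only their centers are disjoint). The paper's key step, which your sketch never supplies, is the following: pick a point $p^C \in z_f(C)$ and $p^D \in z_f(D)$ with $p^C_1 = p^D_1$ and $p^C_2 < p^D_2$, let $v = \argmin_{v \in C} f(v)_2$ be the extreme vertex of the \emph{lower} clique in the direction away from $D$, and show that if $v$ were adjacent to some $w \in D \setminus C$, then $0 < f(w)_2 - f(v)_2 \leq 1$ would place $f(w)$ within $L_\infty$-distance $\frac{1}{2}$ of the point $p = (p^C_1, f(v)_2 + \frac{1}{2})$, which lies in $z_f(C)$ by Lemma \ref{la:clique-center-border}; then $C \cup \{w\}$ would be a clique, contradicting maximality of $C$. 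Hence $v$ has no neighbor in $D \setminus C$ at all. This maximality argument is what replaces your (false) gap-exceeds-one claim.

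Your final step is also assumed rather than proved: you say the distinguishing vertex ``will witness different neighbor-counts for $C$ and $D$ into the refinement class it belongs to,'' but stability of $\mathcal{P}^*$ asserts precisely that these counts are \emph{equal}; the contradiction has to be extracted by combining the non-adjacency above with the clique-partition property of the vertex classes. Concretely: $v \notin D$ (otherwise $v$ would be adjacent to the nonempty set $D \setminus C$), and for $X = [v]_{\simeq}$ stability gives $|C \cap X| = |D \cap X|$; every vertex of $D \cap X$ is equivalent to $v$, hence adjacent to $v$ because $X$ is a clique, hence not in $D \setminus C$, so $D \cap X \subseteq C \cap X$; equality of cardinalities then forces $v \in D \cap X$, a contradiction. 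Your alternative detour through Remark \ref{re:induced-unit-interal-by-realization} (that $C \cup D$ fits in a strip of width $1$ and induces a unit interval graph) is a correct observation but does not by itself yield this counting contradiction, and you yourself flag that step as unfinished. As it stands the proposal identifies the right objects (extreme vertices, the center rectangles, stability plus the clique-partition property) but leaves the decisive geometric and counting arguments as gaps.
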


\begin{proof}
 Without loss of generality assume $z_f(C)_1 \cap z_f(D)_1 \neq \emptyset$.
 Let $p^{C} \in z_f(C)$ and $p^{D} \in z_f(D)$ such that $p^{C}_1 = p^{D}_1$.
 Again without loss of generality assume  $p^{C}_2 < p^{D}_2$.
 Let $v = \text{argmin}_{v \in C} f(v)_2$ and $w \in D \setminus C$.
 Then $\|p^{C} - f(w)\|_{\infty} > \frac{1}{2}$ and $\|p^{D} - f(w)\|_{\infty} \leq \frac{1}{2}$.
 Suppose $vw \in E(G)$. Then $0 < f(w)_2 - f(v)_2 \leq 1$.
 Let $p = (p^{C}_1, f(v)_2 + \frac{1}{2})$.
 Then $\|p - f(w)\|_{\infty} \leq \frac{1}{2}$, since $p -f(w) = (p^{D}_1 - f(w)_1,f(v)_2 - f(w)_2 + \frac{1}{2})$.
 Further $p \in z_f(C)$ by Lemma \ref{la:clique-center-border}.
 Thus, $C \cup \{w\}$ forms a clique contradicting the maximality of $C$.
 So there is no edge between $v$ and any $w \in D \setminus C$.
 But then $v \not\simeq w$, because $\simeq$ describes a clique-partition, and thus $C \not\simeq D$.
\end{proof}

The last lemma gives us the possibility to order equivalent cliques with respect to some realization $f$.
Each $z_f(C)_i$ forms an interval according to Lemma \ref{la:clique-center-border}.
So let $C \simeq D$ be two maximal cliques and define $z_f(C)_i \leq z_f(D)_i$ if for some (and thus for all) $p^{C}_i \in z_f(C)_i$, $p^{D}_i \in z_f(C)_i$ it holds $p^{C}_i \leq p^{D}_i$.
For each equivalence class of maximal cliques this defines two linear orders on the maximal cliques.
A central observation is that the two linear orders either coincide or one is the reverse order of the other.

\begin{lemma}
 \label{la:order-clique-classes}
 For each $\mathcal{C} \in \mathcal{M}(G)/_{\simeq}$ there is some $b \in \{-1,1\}$, such that for every $C,D \in \mathcal{C}$:
 \begin{equation}
  z_f(C)_1 \leq z_f(D)_1 \Leftrightarrow b \cdot z_f(C)_2 \leq b \cdot z_f(D)_2.
 \end{equation}
 The value $b$ is called the \emph{orientation} of $\mathcal{C}$ with respect to $f$.
\end{lemma}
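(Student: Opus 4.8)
The plan is to reduce the statement to a \emph{betweenness} condition on the centers and then rule out the forbidden configuration by a clique‑maximality argument combined with a transfer of vertices along the classes of the stable partition $\mathcal{P}^{*}$.

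First I would fix $i \in \{1,2\}$ and note that, by the previous lemma, the intervals $z_f(C)_i$ with $C \in \mathcal{C}$ are pairwise disjoint, hence linearly ordered; list the cliques of $\mathcal{C}$ as $C_1,\dots,C_m$ so that $z_f(C_1)_1 < \dots < z_f(C_m)_1$. Among three pairwise disjoint intervals on a line exactly one lies between the other two, so the asserted equivalence with $b=1$ (resp.\ $b=-1$) is precisely the statement that the sequence $z_f(C_1)_2,\dots,z_f(C_m)_2$ is increasing (resp.\ decreasing), and a sequence of pairwise disjoint intervals is monotone if and only if for all $i<j<k$ the interval $z_f(C_j)_2$ lies between $z_f(C_i)_2$ and $z_f(C_k)_2$. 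Thus it suffices to derive a contradiction from the existence of three cliques $X,Y,Z \in \mathcal{C}$ with $z_f(X)_1 < z_f(Y)_1 < z_f(Z)_1$ for which $z_f(Y)_2$ is not the middle one among $z_f(X)_2, z_f(Y)_2, z_f(Z)_2$. Applying the reflection $(x,y)\mapsto(x,-y)$, which sends unit square graphs to unit square graphs and preserves $G_{\mathcal{M}}^{*}$, $\mathcal{P}^{*}$ and $\simeq$ while negating second coordinates (it only swaps the roles of $\argmin$ and $\argmax$ in the second coordinate of Lemma \ref{la:clique-center-border} and negates the orientation $b$), I may assume the \emph{valley} case $z_f(Y)_2 < z_f(X)_2$ and $z_f(Y)_2 < z_f(Z)_2$.

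Next I would extract the combinatorial consequences of $X \simeq Y \simeq Z$. In $G_{\mathcal{M}}^{*}$ the neighbourhood of a clique‑vertex $C$ is exactly the vertex set of $C$, so stability of $\mathcal{P}^{*}$ gives $|C \cap K| = |D \cap K|$ for every class $K \subseteq V(G)$ of $\mathcal{P}^{*}$ and all $C,D$ lying in the class $\mathcal{C}$; in particular every $\mathcal{P}^{*}$‑class meeting $Y$ also meets $X$ and $Z$. Moreover each such $K$ is a clique of $G$, being contained in a class of the clique‑partition $\mathcal{P}$, so any two vertices in the same $\mathcal{P}^{*}$‑class are adjacent in $G$.

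Finally, the core of the proof. I would set $v := \argmin_{w \in Y} f(w)_2$ and let $K$ be its $\mathcal{P}^{*}$‑class. From the explicit description of $z_f(Y)_2$ in Lemma \ref{la:clique-center-border} one gets $f(v)_2 = \max z_f(Y)_2 - \tfrac12 < \min z_f(X)_2 - \tfrac12$ and likewise $f(v)_2 < \min z_f(Z)_2 - \tfrac12$, so $v \notin X$ and $v \notin Z$. By the previous paragraph $X$ and $Z$ each contain a vertex of $K$; choose $u_X \in X \cap K$ and $u_Z \in Z \cap K$. Since $K$ is a clique, $\{v,u_X,u_Z\}$ is a clique, hence contained in some maximal clique $M$ with $v \in M$. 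The plan is now to use the horizontal separation $z_f(X)_1 < z_f(Y)_1 < z_f(Z)_1$ together with the position of $z_f(M)$ (which is within $L_\infty$‑distance $\tfrac12$ of $f(v), f(u_X)$ and $f(u_Z)$) and the precise form of $z_f(Y)$ from Lemma \ref{la:clique-center-border} to produce a point $p \in z_f(Y)$ with $\|f(u_X) - p\|_{\infty} \le \tfrac12$ (or the symmetric conclusion for $u_Z$); then $Y \cup \{u_X\}$ would be a clique, contradicting maximality of $Y$, unless $u_X \in Y$ — and running this for every vertex of $X \cap K$ (and of $Z \cap K$) forces $X \cap K \subseteq Y$ and $Z \cap K \subseteq Y$, which together with $|X\cap K| = |Y \cap K| = |Z \cap K|$ collapses $X$, $Y$, $Z$ to a single clique, a contradiction. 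This final geometric step is the main obstacle: the conclusion of the previous lemma (``disjoint centres'') is by itself too weak, and one has to combine the exact center description with careful bookkeeping of where $v$, $u_X$, $u_Z$ and $M$ can sit — possibly choosing $v$, $u_X$, $u_Z$ to be extremal vertices of their cliques — to pin down precisely which vertex the valley configuration is forced to add to $Y$.
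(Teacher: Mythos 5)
Your reduction to a single bad triple, the use of stability of $\mathcal{P}^{*}$ to get $|X\cap K|=|Y\cap K|=|Z\cap K|$ for every vertex class $K$, the observation that vertex classes are cliques, and the idea of contradicting maximality via the center description of Lemma \ref{la:clique-center-border} all match the skeleton of the paper's proof. The genuine gap is that the decisive step is not carried out: you announce a plan to ``produce a point $p \in z_f(Y)$ with $\|f(u_X)-p\|_\infty\le\tfrac12$'' and explicitly label this geometric verification as ``the main obstacle''. But that verification \emph{is} the lemma — it is exactly what the paper's Claims 1 and 2 establish (there with a different pivot: $v$ is taken to be the leftmost vertex of the leftmost clique, a vertex $w$ of the peak clique serves as a witness forcing the transferred vertex $u$ of the third clique to lie in the vertical slab of the leftmost clique's center, and $u$ is then absorbed into that clique). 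In addition, the maximal clique $M\supseteq\{v,u_X,u_Z\}$ you introduce plays no identifiable role in the sketch, and the closing contradiction is misstated: nothing ``collapses $X,Y,Z$ to a single clique''; the contradiction one should aim for is $X\cap K=Y\cap K$ (or $Z\cap K=Y\cap K$) combined with $v\in Y\cap K$, $v\notin X\cup Z$.

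For what it is worth, your valley normalization can in fact be completed, and rather cleanly, so the approach itself is not wrong — it is just unfinished. Writing $z_f(Y)=[a_1,b_1]\times[a_2,b_2]$ with $b_2=f(v)_2+\tfrac12$, membership of $u_X$ in $X$ gives $f(u_X)_2\ge \max z_f(X)_2-\tfrac12> b_2-\tfrac12$, while $vu_X\in E(G)$ gives $f(u_X)_2\le b_2+\tfrac12$; so the second coordinate of $u_X$ (and symmetrically of $u_Z$) is automatically compatible with $z_f(Y)$, and only the first coordinate can fail: for $u_X$ only the bound $f(u_X)_1\ge a_1-\tfrac12=\max_{y\in Y}f(y)_1-1$, for $u_Z$ only $f(u_Z)_1\le b_1+\tfrac12=\min_{y\in Y}f(y)_1+1$. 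If both failed, then $f(u_Z)_1-f(u_X)_1>2-\bigl(\max_{y\in Y}f(y)_1-\min_{y\in Y}f(y)_1\bigr)\ge 1$ because $Y$ is a clique, contradicting $u_Xu_Z\in E(G)$ (or $u_X=u_Z$). Hence for every pair $(u_X,u_Z)\in(X\cap K)\times(Z\cap K)$ one of the two vertices lies within $L_\infty$-distance $\tfrac12$ of a point of $z_f(Y)$ and is absorbed into $Y$ by maximality; the counting $|X\cap K|=|Y\cap K|=|Z\cap K|$ together with $v\in Y\cap K$, $v\notin X\cup Z$ then yields the contradiction. Had this argument been included, your proof would be a correct and arguably more symmetric alternative to the paper's; as submitted, the core of the proof is missing.
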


\begin{figure}
 \centering
 \begin{tikzpicture}
  \draw[step=0.2, gray!50, very thin] (0.1,0.1) grid (5.3,4.9);
  
  \draw (0.8,1.0) -- (3.8,1.0) -- (3.8,4.0) -- (0.8,4.0) -- (0.8,1.0);
  \draw (1.2,1.4) -- (4.2,1.4) -- (4.2,4.4) -- (1.2,4.4) -- (1.2,1.4);
  \draw (1.6,0.6) -- (4.6,0.6) -- (4.6,3.6) -- (1.6,3.6) -- (1.6,0.6);
  
  \draw[fill] (0.9,2.5) circle (0.03) node[above] {$v$};
  \draw[dashed, line width=1.5pt, color=gray] (3.8,0.4) -- (3.8,4.6);
  
  \draw[fill] (2.7,4.2) circle (0.03) node[right] {$w$};
  \draw[dashed, line width=1.5pt, color=gray] (0.4,1.0) -- (5.0,1.0);
  
 \end{tikzpicture}
 \caption{Each equivalence class of maximal cliques is diagonally ordered}
 \label{fig:unit-square-clique-order}
\end{figure}
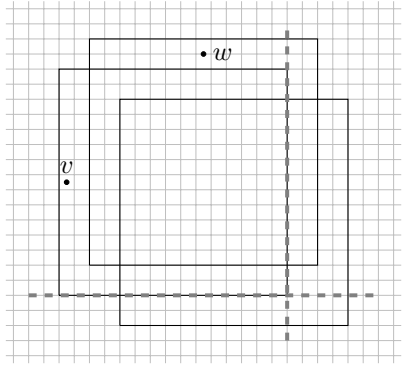

\begin{proof}
 Suppose not and let $\mathcal{C} \in \mathcal{M}(G)/_{\simeq}$ be an equivalence class violating the condition.
 Let $<$ be the linear order on $\mathcal{C}$ induced by ordering the intervals $z_f(C)_1$ for $C \in \mathcal{C}$ in the natural way from left to right.
 Let $C_1 < C_2 < \dots < C_m$ be the elements of $\mathcal{C}$ and define $b \in \{-1,1\}$, such that
 \begin{equation*}
  b \cdot z_f(C_1)_2 \leq b \cdot z_f(C_2)_2.
 \end{equation*}
 Without loss of generality assume $b = 1$.
 Let $i \in [m]$ be minimal so that $z_f(C_{i+1})_2 \nleq z_f(C_{i+2})_2$.
 Consider the case where $z_f(C_{i})_2 \geq z_f(C_{i+2})_2$ (cf.\ Figure \ref{fig:unit-square-clique-order}).
 The other case is $z_f(C_{i})_2 \leq z_f(C_{i+2})_2 \leq z_f(C_{i+1})_2$ and is analogous by a rotation of $90$ degree.
 
 Let $v = \text{argmin}_{v \in C_i} f(v)_1$.
 Since $C_i \simeq C_{i+1}$ and $v \notin C_{i+1}$ there is some $w \in C_{i+1} \setminus C_i$ with $v \simeq w$.
 In particular $vw \in E(G)$.
 Using the same argument there is some $u \in C_{i+2} \setminus C_i$ with $u \simeq v$ and thus, $uv,uw \in E(G)$.
 Additionally considering the positions of the maximal cliques we get $f(v)_1 \leq f(u)_1 \leq f(v)_1 + 1$.
 \vspace*{5pt}\\
 \textit{Claim 1:} There is some $p \in z_f(C_i)$, such that $|p_1 - f(u)_1| \leq \frac{1}{2}$.
 \vspace*{5pt}\\
 Choose $p = (f(v)_1 + \frac{1}{2}, \text{min}_{v' \in C_i} f(v')_2 + \frac{1}{2})$.
 Then $p \in z_f(C_i)$ by Lemma \ref{la:clique-center-border}.
 Further $|p_1 - f(u)_1| = |f(v)_1 - f(u)_1 + \frac{1}{2}| \leq \frac{1}{2}$.
 \vspace*{5pt}\\
 \textit{Claim 2:} For every $p \in z_f(C_i)$ it holds that $|p_2 - f(u)_2| \leq \frac{1}{2}$.
 \vspace*{5pt}\\
 Let $p \in z_f(C_i)$.
 Then $p_2 + \frac{1}{2} \geq f(u)_2$.
 First show $p_2 + \frac{1}{2} \leq f(w)_2$.
 Let $p' = (f(v)_1 + \frac{1}{2}, \text{min}_{v' \in C_i} f(v')_2 + \frac{1}{2}$.
 It is $f(v)_1 \leq f(w)_1 \leq f(v)_1 + 1$ and thus, $|p'_1 - f(w)| \leq \frac{1}{2}$.
 Since $w \notin C_i$ we get $|p'_2 - f(w)| > \frac{1}{2}$ and the considering positions of the maximal cliques even $f(w)_2 - p'_2 > \frac{1}{2}$.
 Hence $p_2 + \frac{1}{2} \geq f(u)_2$ by Lemma \ref{la:clique-center-border}.
 As a consequence we also get $f(w)_2 \geq f(u)_2 \geq f(w)_2 - 1$.
 So all together $-\frac{1}{2} \leq f(u)_2 -\frac{1}{2} - f(u)_2 \leq p_2 - f(u)_2 \leq p_2 - f(w)_2 + 1 \leq f(w)_2 - \frac{1}{2} - f(w)_2 + 1 = \frac{1}{2}$.
 \vspace*{5pt}\\
 So there is some $p \in z_f(C_i)$ with $\|p - f(u)\|_{\infty} \leq \frac{1}{2}$.
 But this is a contradiction to $u \notin C_i$.
 A visualization is also given in Figure \ref{fig:unit-square-clique-order}.
\end{proof}

\begin{corollary}
 \label{cor:order-clique-classes}
 Let $\mathcal{C} \in \mathcal{M}(G)/_{\simeq}$ and for $C,D \in \mathcal{C}$ define $C <_{f,\mathcal{C}} D$ if $z_f(C)_1 \leq z_f(D)_1$.
 Then $\mathcal{M}(v) \cap \mathcal{C}$ forms an interval with respect to $<_{f,\mathcal{C}}$ for each $v \in V(G)$.
\end{corollary}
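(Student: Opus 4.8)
The plan is to unfold the definition of ``interval with respect to $<_{f,\mathcal{C}}$'': it suffices to take three cliques $C <_{f,\mathcal{C}} E <_{f,\mathcal{C}} D$ in $\mathcal{C}$ with $v \in C$ and $v \in D$, and to prove $v \in E$. (Recall that by the lemma preceding Lemma~\ref{la:order-clique-classes}, distinct cliques in $\mathcal{C}$ have disjoint centre-projections $z_f(\cdot)_1$, so $<_{f,\mathcal{C}}$ is a strict total order and this notion of interval is well defined.) Throughout I would use the fact, established right after Lemma~\ref{la:clique-common-neighborhood}, that the unit square centred at any point of $z_f(C)$ contains exactly the points realising $C$; in particular, for $w \in V(G)$ and $p \in z_f(C)$ one has $w \in C$ as soon as $\|f(w) - p\|_{\infty} \le \tfrac12$, and conversely $v \in C$ forces $|f(v)_i - p_i| \le \tfrac12$ for every $p \in z_f(C)$ and $i \in [2]$.

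First I would invoke Lemma~\ref{la:order-clique-classes} to control the geometry. Let $b \in \{-1,1\}$ be the orientation of $\mathcal{C}$ with respect to $f$. From $z_f(C)_1 \le z_f(E)_1 \le z_f(D)_1$ and Lemma~\ref{la:order-clique-classes} (applied to the pair $C,E$ and to the pair $E,D$) we obtain $b\cdot z_f(C)_2 \le b\cdot z_f(E)_2 \le b\cdot z_f(D)_2$. Hence in each coordinate $i \in [2]$ the interval $z_f(E)_i$ lies between $z_f(C)_i$ and $z_f(D)_i$; after possibly exchanging the roles of $C$ and $D$ in the second coordinate we may assume $z_f(C)_i \le z_f(E)_i \le z_f(D)_i$ in both coordinates. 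Moreover, by Lemma~\ref{la:clique-center-border} each $z_f(E)_i$ is a nonempty closed interval and $z_f(E) = z_f(E)_1 \times z_f(E)_2$.

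Then I would build a witness centre $p^E \in z_f(E)$ for $v$ componentwise: let $p^E_i$ be the point of $z_f(E)_i$ closest to $f(v)_i$. I would bound $|p^E_i - f(v)_i| \le \tfrac12$ by a three-case argument. If $f(v)_i \in z_f(E)_i$ then $p^E_i = f(v)_i$ and the bound is trivial. If $f(v)_i$ lies below $z_f(E)_i$, then $p^E_i$ is the left endpoint of $z_f(E)_i$; using $z_f(E)_i \le z_f(D)_i$ together with $v \in D$ (so that every point of $z_f(D)_i$, hence every point of $z_f(E)_i$, is at most $f(v)_i + \tfrac12$) gives $0 < p^E_i - f(v)_i \le \tfrac12$. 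If $f(v)_i$ lies above $z_f(E)_i$, the symmetric argument using $z_f(C)_i \le z_f(E)_i$ and $v \in C$ gives $0 < f(v)_i - p^E_i \le \tfrac12$. Consequently $p^E := (p^E_1,p^E_2) \in z_f(E)$ and $\|f(v) - p^E\|_{\infty} \le \tfrac12$, whence $v \in E$, as desired.

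The only genuinely delicate point is the passage from the single coordinate used to define $<_{f,\mathcal{C}}$ to \emph{both} coordinates simultaneously: this is exactly what the orientation Lemma~\ref{la:order-clique-classes} delivers, and without it the betweenness of $z_f(E)$ in the second coordinate --- which is precisely what makes the three-case bound close --- would be unavailable. Everything else is routine manipulation of the triangle inequality and of the explicit box description of $z_f(E)$ from Lemma~\ref{la:clique-center-border}.
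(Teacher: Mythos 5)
Your proof is correct and follows essentially the same route as the paper: both arguments reduce to showing $v \in E$ for $C <_{f,\mathcal{C}} E <_{f,\mathcal{C}} D$ with $v \in C \cap D$, invoke Lemma~\ref{la:order-clique-classes} to transfer betweenness of $z_f(E)$ from the first coordinate to the second, and then sandwich a point of $z_f(E)$ within $L_\infty$-distance $\tfrac12$ of $f(v)$ using $v \in C$ and $v \in D$. The only cosmetic difference is that you construct the closest point of the box $z_f(E)$ to $f(v)$, whereas the paper shows the same bound for an arbitrarily chosen point of $z_f(E)$.
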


\begin{proof}
 Let $v \in V(G)$, such that there are distinct $C,D \in \mathcal{M}(v) \cap \mathcal{C}$.
 Without loss of generality let $C <_{f,\mathcal{C}} D$.
 Let $E \in \mathcal{C}$ with $C <_{f,\mathcal{C}} E <_{f,\mathcal{C}} D$.
 Pick points $p^{A} \in z_f(A)$ for $A \in \{C,D,E\}$.
 Then $p^{C}_1 < p^{E}_1 < p^{D}_1$.
 If $f(v)_1 \geq p^{E}_1$ then $|f(v)_1 - p^{E}_1| < |f(v)_1 - p^{C}_1| \leq \frac{1}{2}$.
 Otherwise $|f(v)_1 - p^{E}_1| < |f(v)_1 - p^{D}_1| \leq \frac{1}{2}$.
 
 Now let $b \in \{-1,1\}$ be the orientation of $\mathcal{C}$.
 Consider $b = 1$. The other case is analogous.
 Then $p^{C}_2 < p^{E}_2 < p^{D}_2$ and using the same argument as before $|f(v)_2 - p^{E}_2| \leq \frac{1}{2}$.
 So $v \in E$ and thus, $\mathcal{M}(v) \cap \mathcal{C}$ forms an interval with respect to $<_{f,\mathcal{C}}$.
\end{proof}

\begin{lemma}
 \label{la:equivalent-vertex-direction}
 Let $C \simeq D$ be two maximal cliques with $z_f(C)_i \leq z_f(D)_i$ for $i \in [2]$.
 Further let $v \in C \setminus D$ and $w \in D \setminus C$ with $v \simeq w$.
 Then $f(v)_i < f(w)_i$ for $i \in [2]$.
\end{lemma}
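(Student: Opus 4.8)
The plan is to prove only $f(v)_1 < f(w)_1$; the inequality for the second coordinate then follows by applying the same argument to the realization obtained from $f$ by exchanging the two coordinate axes, since this exchange is an $L_\infty$-isometry and leaves $G$, the clique $C,D$, the class $\mathcal{C}\in\mathcal{M}(G)/_{\simeq}$ of $C$ (which equals that of $D$), the relation $\simeq$, and the value of the orientation of $\mathcal{C}$ unchanged. Before starting I would collect three routine facts. First, since distinct $\simeq$-equivalent cliques have disjoint centres in each coordinate and $z_f(C)_i\le z_f(D)_i$, the orientation of $\mathcal{C}$ is $b=1$ and, using Lemma~\ref{la:clique-center-border}, $\min_{u\in C}f(u)_i<\min_{u\in D}f(u)_i$ and $\max_{u\in C}f(u)_i<\max_{u\in D}f(u)_i$ for $i\in\{1,2\}$. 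Second, $\simeq$ refines a clique-partition and $v\ne w$ (as $v\in C\setminus D$, $w\in D\setminus C$), so $vw\in E(G)$ and $\|f(v)-f(w)\|_\infty\le 1$. Third, for a maximal clique $M$ and a vertex $u$, maximality of $M$ together with the fact that any two vertices of $M$ are at $L_\infty$-distance $\le 1$ gives the bounding-box membership criterion $u\in M$ iff $f(u)_i\in[\min_{x\in M}f(x)_i,\max_{x\in M}f(x)_i]$ for both $i$; I will use this freely.

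Assume towards a contradiction that $f(v)_1\ge f(w)_1$. Since $w\in D$, $f(w)_1\ge\min_{u\in D}f(u)_1>\min_{u\in C}f(u)_1$, and $f(w)_1\le f(v)_1\le\max_{u\in C}f(u)_1$, so $f(w)_1$ lies in the first bounding interval of $C$; as $w\notin C$ this forces $f(w)_2>\max_{u\in C}f(u)_2$, and picking a vertex $c\in C$ with $cw\notin E(G)$ even gives $f(w)_2>\min_{u\in C}f(u)_2+1$. Symmetrically, $f(v)_1\ge f(w)_1\ge\min_{u\in D}f(u)_1$ and $f(v)_1\le\max_{u\in C}f(u)_1<\max_{u\in D}f(u)_1$ put $f(v)_1$ in the first bounding interval of $D$, whence $v\notin D$ yields $f(v)_2<\min_{u\in D}f(u)_2$. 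Thus we are in a rigid \emph{problematic configuration}: $f(v)$ sits in the low part of the bounding box of $C$, $f(w)$ sits strictly above the bounding box of $C$ but inside its $x$-range, and $f(v)_1,f(w)_1$ both lie in $[\min_{u\in D}f(u)_1,\max_{u\in C}f(u)_1]$. (Note this configuration is not yet contradictory from geometry alone, which is exactly why $v\simeq w$ must be used.)

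Now I would bring in $v\simeq w$. By stability of the coarsest stable partition on $G^{*}_{\mathcal{M}}$ we have $|\mathcal{M}(v)\cap\mathcal{C}|=|\mathcal{M}(w)\cap\mathcal{C}|$, and by Corollary~\ref{cor:order-clique-classes} both sets are intervals with respect to $<_{f,\mathcal{C}}$; since $C$ lies in the first but not the second, $D$ in the second but not the first, and $C<_{f,\mathcal{C}}D$, these two intervals have equal length with $\mathcal{M}(v)\cap\mathcal{C}$ strictly to the left of $\mathcal{M}(w)\cap\mathcal{C}$ at both endpoints. I would then induct on the $<_{f,\mathcal{C}}$-distance between $C$ and $D$. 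For the inductive step I use the transport principle "if $X\simeq Y$ are maximal cliques and $x\in X$ then $Y$ contains a vertex $\simeq x$" (because in $G^{*}_{\mathcal{M}}$ the neighbourhood of the clique-vertex $Y$ is exactly $Y$), together with a counting argument and the above interval structure, to locate — whenever $C$ and $D$ are not consecutive in $<_{f,\mathcal{C}}$ — an intermediate clique $E$ with $C<_{f,\mathcal{C}}E<_{f,\mathcal{C}}D$ and a vertex $x\simeq v$ with $v\in C\setminus E$, $x\in E\setminus(C\cup D)$, $w\in D\setminus E$; applying the lemma inductively to the pairs $(C,E,v,x)$ and $(E,D,x,w)$, which have strictly smaller distance, gives $f(v)_1<f(x)_1<f(w)_1$, contradicting $f(v)_1\ge f(w)_1$. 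When no such $E$ exists one is left with a configuration of at most three consecutive cliques of $\mathcal{C}$ (the boundary case being that $v,w$ share the middle one), which I would settle by a direct geometric computation in the spirit of the proof of Lemma~\ref{la:order-clique-classes}: choosing suitable extremal vertices of $C$ and $D$, combining the coordinate bounds from the previous paragraph with $vw\in E(G)$ and Lemma~\ref{la:clique-center-border}, one produces a point $p\in z_f(C)$ with $\|p-f(w)\|_\infty\le\tfrac12$, contradicting $w\notin C$.

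The hard part is the last paragraph: the reduction to consecutive cliques via the transport/counting argument, and in particular the bounded-size base case where one must juggle the equal-length interval structure with the geometry of at most three overlapping cliques. Everything before that — the symmetrization, the derivation of the problematic configuration, and the orientation and bounding-box bookkeeping — is a direct consequence of Lemma~\ref{la:clique-center-border}, Lemma~\ref{la:order-clique-classes}, and Corollary~\ref{cor:order-clique-classes}.
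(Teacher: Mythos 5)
Your preparatory steps are fine: the reduction to the first coordinate by swapping axes, the bounding-box membership criterion for maximal cliques, and the derivation of the ``problematic configuration'' (in particular $f(w)_1\in[\min_{u\in C}f(u)_1,\max_{u\in C}f(u)_1]$, $f(w)_2>\min_{u\in C}f(u)_2+1$, and $f(v)_2<\min_{u\in D}f(u)_2$) all check out. But the part you yourself flag as the hard part is where the proof is missing, and both halves of your plan for it have concrete defects. For the inductive step, the existence of an intermediate clique $E$ with $v\notin E$, $w\notin E$ and some $x\simeq v$, $x\in E\setminus(C\cup D)$ is not established and can simply fail: if $\mathcal{M}(v)\cap\mathcal{C}$ and $\mathcal{M}(w)\cap\mathcal{C}$ are two long overlapping intervals, say occupying positions $[1,10]$ and $[2,11]$ in $<_{f,\mathcal{C}}$, with $C$ at position $1$ and $D$ at position $11$, then every clique strictly between $C$ and $D$ contains both $v$ and $w$, the distance between $C$ and $D$ is large, and your fallback claim that one is then ``left with a configuration of at most three consecutive cliques'' is false. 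For the base case, the proposed contradiction cannot take the form you describe: in the problematic configuration you have already derived $f(w)_2>\min_{u\in C}f(u)_2+1$, while Lemma~\ref{la:clique-center-border} gives $p_2\le\min_{u\in C}f(u)_2+\tfrac12$ for every $p\in z_f(C)$, so no point $p\in z_f(C)$ can satisfy $\|p-f(w)\|_\infty\le\tfrac12$; moreover the configuration (even together with the interval structure of two or three overlapping cliques) is geometrically realizable, so no purely geometric computation in the spirit of Lemma~\ref{la:order-clique-classes} can finish the argument.

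The actual contradiction has to be combinatorial, and this is what the paper does in one shot, without any induction over the chain of cliques. It takes the extreme vertex $v'=\argmin_{u\in C}f(u)_1$, notes $v'\notin D$, transports it by stability to some $w'\in D\setminus C$ with $v'\simeq w'$, and then shows: under the assumption $f(w)_2\le f(v)_2$, every vertex $u'\simeq v'$ with $wu'\in E(G)$ also satisfies $vu'\in E(G)$, whereas $v'$ itself is adjacent to $v$ (both lie in $C$) but not to $w$. Hence $v$ and $w$ have different numbers of neighbours in the $\simeq$-class of $v'$, contradicting $v\simeq w$ and the stability of the partition. If you want to salvage your outline, replace the induction and the geometric base case by this kind of neighbour-counting argument against a suitably chosen extremal vertex of $C$.
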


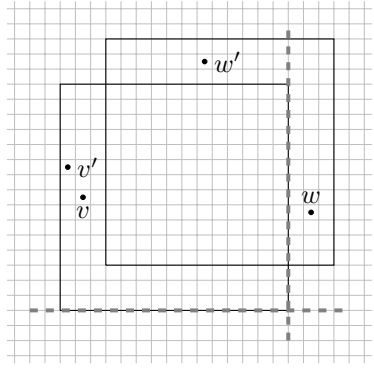
\begin{figure}
 \centering
 \begin{tikzpicture}
  \draw[step=0.2, gray!50, very thin] (0.1,0.1) grid (4.9,4.9);
  
  \draw (0.8,0.8) -- (3.8,0.8) -- (3.8,3.8) -- (0.8,3.8) -- (0.8,0.8);
  \draw (1.4,1.4) -- (4.4,1.4) -- (4.4,4.4) -- (1.4,4.4) -- (1.4,1.4);
  
  \draw[fill] (1.1,2.3) circle (0.03) node[below] {$v$};
  \draw[fill] (0.9,2.7) circle (0.03) node[right] {$v'$};
  \draw[dashed, line width=1.5pt, color=gray] (0.4,0.8) -- (4.6,0.8);
  
  \draw[fill] (4.1,2.1) circle (0.03) node[above] {$w$};
  \draw[fill] (2.7,4.1) circle (0.03) node[right] {$w'$};
  \draw[dashed, line width=1.5pt, color=gray] (3.8,0.4) -- (3.8,4.6);

 \end{tikzpicture}
 \caption{Extending the orientation on equivalence classes of cliques to equivalent vertices}
 \label{fig:equivalent-vertex-direction}
\end{figure}

\begin{proof}
 Let $p^{D} \in z_f(D)$.
 Further choose $i \in [2]$ in such a way that $|f(v)_i - p^{D}_i| > \frac{1}{2}$.
 Without loss of generality assume $i = 1$.
 Then $f(w)_1 > f(v)_1$ because $f(v)_1 < p^{D}_1 - \frac{1}{2}$.
 Let $v' = \text{argmin}_{v \in C}f(v)_1$ and $p^{C} = (f(v') + \frac{1}{2}, \text{min}_{v \in C}f(v)_2 + \frac{1}{2})$.
 Then $p^{C} \in z_f(C)$ by Lemma \ref{la:clique-center-border} and $p^{C}_1 < p^{D}_1$ which implies $v' \notin D$.
 Let $w' \in D \setminus C$ with $v' \simeq w'$.
 Then $f(v')_1 < f(w')_1 \leq f(v')_1 + 1$ and thus, $|p^{C}_1 - f(w')_1| \leq \frac{1}{2}$.
 Hence, $f(w') > p^{C}_2 + \frac{1}{2}$.
 Let $u' \simeq v'$.
 Then $u'v' \in E(G)$ and $u'w' \in E(G)$.
 So $f(u')_1 \leq p^{C}_1 + \frac{1}{2}$ and $f(u')_2 \geq p^{C}_2 - \frac{1}{2}$.
 Now suppose towards a contradiction that $f(w)_2 \leq f(v)_2$.
 Then $|f(w)_2 - p^{C}_2| \leq \frac{1}{2}$ because $p^{C}_2 < p^{D}_2$.
 \vspace*{5pt}\\
 \textit{Claim:} If $wu' \in E(G)$ then $vu' \in E(G)$.
 \vspace*{5pt}\\
 Let $wu' \in E(G)$.
 If $f(u')_1 \geq f(v)_1$ then $0 \leq f(u')_1 - f(v)_1 \leq p^{C}_1 + \frac{1}{2} - f(v)_1 \leq 1$.
 Otherwise $f(u')_1 < f(v)_1$ and $0 < f(v)_1 - f(u')_1 < f(w)_1 - f(u')_1 \leq 1$.
 So $|f(v)_1 - f(u')_1| \leq 1$.
 Similarly $|f(v)_2 - f(u')_2| \leq 1$ and thus, $vu' \in E(G)$.
 \vspace*{5pt}\\
 But $wv' \notin E(G)$ because $f(w)_1 - f(v') > 1$.
 Otherwise $f(w) - p^{C}_1 \leq \frac{1}{2}$ and $w \in C$.
 This is a contradiction to $\simeq$ being a stable partition.
 A visualization is also given in Figure \ref{fig:equivalent-vertex-direction}.
\end{proof}

\begin{corollary}
 \label{cor:equivalent-vertex-direction}
 Let $v,w \in V(G)$ with $v \simeq w$ and $\mathcal{C} \in \mathcal{M}(G)/_{\simeq}$, such that $\mathcal{M}(v) \cap \mathcal{C} \neq \mathcal{M}(w) \cap \mathcal{C}$.
 Further let $b \in \{-1,1\}$ be the orientation $\mathcal{C}$.
 Then $f(v)_i \neq f(w)_i$ for $i \in [2]$ and 
 \begin{equation}
  f(v)_1 < f(w)_1 \;\Leftrightarrow\; b \cdot f(v)_2 < b \cdot f(w)_2.
 \end{equation}
\end{corollary}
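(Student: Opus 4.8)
The plan is to reduce the whole statement to a single application of Lemma~\ref{la:equivalent-vertex-direction}, using Corollary~\ref{cor:order-clique-classes} to pin down the right pair of maximal cliques. First I would dispose of the orientation sign: replacing the realization $f$ by its reflection $\tilde f$ with $\tilde f(u) = (f(u)_1, -f(u)_2)$ gives another realization of $G$ (the $L_\infty$-distances are unchanged), in which the class $\mathcal{C}$ has orientation $-b$, and in which the statement to be proved translates into the statement for $(f,b)$ (the first coordinates are untouched, the second coordinates are negated, and negating the second coordinates also negates $b$ on the right-hand side of the equivalence). Hence I may assume $b = 1$, so that whenever $C <_{f,\mathcal{C}} D$ we have $z_f(C)_i \le z_f(D)_i$ for both $i \in [2]$.

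Next I would record two consequences of $v \simeq w$. Since $\mathcal{C}$ is a single class of the stable partition $\mathcal{P}^{*}$ of $G_{\mathcal{M}}^{*}$, and the only edges of $G_{\mathcal{M}}^{*}$ between $V(G)$ and $\mathcal{M}(G)$ are the incidences $vC$ with $v\in C$, stability yields $|\mathcal{M}(v)\cap\mathcal{C}| = |\mathcal{M}(w)\cap\mathcal{C}| =: m$; as $\mathcal{M}(v)\cap\mathcal{C} \neq \mathcal{M}(w)\cap\mathcal{C}$, this forces $m \ge 1$, so both sets are nonempty. By Corollary~\ref{cor:order-clique-classes} each of them is an interval of $<_{f,\mathcal{C}}$ with exactly $m$ elements, and since they are distinct intervals of equal size their $<_{f,\mathcal{C}}$-minima differ. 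After possibly swapping the roles of $v$ and $w$ — which is harmless, because together with the inequality part of the conclusion the whole statement is symmetric in $v$ and $w$ — I may assume that the $<_{f,\mathcal{C}}$-minimum $C$ of $\mathcal{M}(v)\cap\mathcal{C}$ is strictly smaller than the $<_{f,\mathcal{C}}$-minimum of $\mathcal{M}(w)\cap\mathcal{C}$; consequently the $<_{f,\mathcal{C}}$-maximum $D$ of $\mathcal{M}(w)\cap\mathcal{C}$ is strictly larger than the $<_{f,\mathcal{C}}$-maximum of $\mathcal{M}(v)\cap\mathcal{C}$.

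Then I would just check that $C,D$ meet the hypotheses of Lemma~\ref{la:equivalent-vertex-direction}: $C,D\in\mathcal{C}$, so $C\simeq D$; $v\in C$ but $v\notin D$ (since $D$ lies strictly beyond the end of the interval $\mathcal{M}(v)\cap\mathcal{C}$), so $v\in C\setminus D$; symmetrically $w\in D\setminus C$; $v\simeq w$ is given; and $C <_{f,\mathcal{C}} D$ together with $b=1$ gives $z_f(C)_i \le z_f(D)_i$ for $i\in[2]$. Lemma~\ref{la:equivalent-vertex-direction} then yields $f(v)_i < f(w)_i$ for both $i$, which in particular gives $f(v)_i\neq f(w)_i$, and since $b=1$ both $f(v)_1 < f(w)_1$ and $b\cdot f(v)_2 < b\cdot f(w)_2$ hold, so the claimed equivalence holds trivially.

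I do not anticipate a genuine obstacle: the geometric content is already contained in Lemma~\ref{la:equivalent-vertex-direction} and Corollary~\ref{cor:order-clique-classes}, and what remains is combinatorial bookkeeping. The only points that need care are (i) verifying that the reflection $\tilde f$ is again a realization and flips the orientation of $\mathcal{C}$, and (ii) justifying that the conclusion, including the strict-inequality equivalence, is symmetric in $v$ and $w$, so that both ``without loss of generality'' reductions are legitimate.
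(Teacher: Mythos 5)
Your proof is correct and follows essentially the same route as the paper's: reduce to orientation $b=1$ and apply Lemma~\ref{la:equivalent-vertex-direction} to a pair $C \in \mathcal{M}(v)\cap\mathcal{C} \setminus \mathcal{M}(w)$, $D \in \mathcal{M}(w)\cap\mathcal{C} \setminus \mathcal{M}(v)$, whose existence comes from the equal counts $|\mathcal{M}(v)\cap\mathcal{C}| = |\mathcal{M}(w)\cap\mathcal{C}|$ guaranteed by stability. The paper is terser (it picks any such $C,D$ and leaves the reflection/swapping symmetry implicit), while your use of Corollary~\ref{cor:order-clique-classes} to pin down the extremal cliques is correct but not actually needed.
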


\begin{proof}
 Without loss of generality assume $b = 1$.
 Let $\mathcal{C}(u) = \mathcal{M}(u) \cap \mathcal{C}$ for $u \in V(G)$.
 Since $|\mathcal{C}(v)| = |\mathcal{C}(w)|$ there are $C \in \mathcal{C}(v)\setminus\mathcal{C}(w)$ and $D \in \mathcal{C}(w)\setminus\mathcal{C}(v)$.
 Then the statement follows from Lemma \ref{la:equivalent-vertex-direction}.
\end{proof}

\begin{corollary}
 \label{cor:unit-square-weakly-ordered}
 Let $\mathcal{C}, \mathcal{D} \in \mathcal{M}(G)/_{\simeq}$.
 Further let $C,C' \in \mathcal{C}$ with $C <_{f,\mathcal{C}} C'$ and $D,D' \in \mathcal{D}$,
 such that there are vertices $v,w \in V(G)$ with $v \simeq w$ and $v \in (C \setminus C') \cap (D \setminus D')$, $w \in (C' \setminus C) \cap (D' \setminus D)$.
 Then $D <_{f,\mathcal{D}} D'$.
\end{corollary}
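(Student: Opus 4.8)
The plan is to reduce the statement to Lemma~\ref{la:equivalent-vertex-direction}, applying it once to the class $\mathcal{C}$ and once, in contrapositive form, to the class $\mathcal{D}$. The only technical nuisance is that Lemma~\ref{la:equivalent-vertex-direction} is phrased for pairs of cliques whose centers are ordered the same way in \emph{both} coordinates, i.e.\ for $\simeq$-classes of orientation $+1$, so I would first isolate the following orientation-free claim: if $\mathcal{A}\in\mathcal{M}(G)/_{\simeq}$ and $A<_{f,\mathcal{A}}A'$ for $A,A'\in\mathcal{A}$, and if $v\simeq w$ with $v\in A\setminus A'$ and $w\in A'\setminus A$, then $f(v)_1<f(w)_1$.

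To prove the claim I would let $b$ be the orientation of $\mathcal{A}$ with respect to $f$. If $b=-1$, pass to the realization $f'=\rho\circ f$ with $\rho(x,y)=(x,-y)$. Since $\|\rho(p)-\rho(q)\|_{\infty}=\|p-q\|_{\infty}$, the map $f'$ is again a realization of $G$; the graph $G_{\mathcal{M}}^{*}$, the partition $\mathcal{P}$, its refinement $\mathcal{P}^{*}$, and hence $\simeq$, do not depend on $f$ and so are unchanged, and the first coordinates of all points are unchanged as well. Furthermore $z_{f'}(C)=\rho(z_f(C))$ for every maximal clique $C$, so $z_{f'}(C)_1=z_f(C)_1$ while the second coordinate is negated; consequently $<_{f',\mathcal{B}}$ equals $<_{f,\mathcal{B}}$ for every class $\mathcal{B}$ and the orientation of every $\simeq$-class is flipped. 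Thus I may assume $b=1$, in which case $z_f(A)_1\leq z_f(A')_1$ (because $A<_{f,\mathcal{A}}A'$) and $z_f(A)_2\leq z_f(A')_2$ (because the orientation is $+1$, by Lemma~\ref{la:order-clique-classes}). Now Lemma~\ref{la:equivalent-vertex-direction} yields $f(v)_i<f(w)_i$ for $i\in[2]$, in particular $f(v)_1<f(w)_1$, proving the claim.

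Given the claim, the corollary follows quickly. Applying it to $\mathcal{A}=\mathcal{C}$, $A=C$, $A'=C'$ (using $v\in C\setminus C'$, $w\in C'\setminus C$ and $v\simeq w$) gives $f(v)_1<f(w)_1$. Suppose for contradiction that $D<_{f,\mathcal{D}}D'$ fails. Since $v\in D\setminus D'$ we have $D\neq D'$, and the projections of the centers of distinct $\simeq$-equivalent maximal cliques onto the first axis are disjoint (the lemma preceding Lemma~\ref{la:order-clique-classes}), so $<_{f,\mathcal{D}}$ is a linear order on $\mathcal{D}$ and hence $D'<_{f,\mathcal{D}}D$. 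Applying the claim to $\mathcal{A}=\mathcal{D}$, $A=D'$, $A'=D$ with the roles of $v$ and $w$ exchanged (using $w\in D'\setminus D$, $v\in D\setminus D'$ and $w\simeq v$) gives $f(w)_1<f(v)_1$, contradicting the inequality obtained from $\mathcal{C}$. Therefore $D<_{f,\mathcal{D}}D'$. The step I expect to require the most care is the reflection bookkeeping — verifying that $\simeq$ and the orders $<_{f,\mathcal{B}}$ genuinely survive the reflection while every orientation flips; once that is in place, the rest is just two invocations of Lemma~\ref{la:equivalent-vertex-direction}.
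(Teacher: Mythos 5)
Your proposal is correct and takes essentially the same route as the paper: both arguments normalize the orientation and then obtain the ordering of $D$ and $D'$ from Lemma \ref{la:equivalent-vertex-direction}. The only differences are cosmetic — you make the paper's implicit ``without loss of generality $b_{\mathcal{C}}=1$'' explicit via the reflection $(x,y)\mapsto(x,-y)$ (checking that $\simeq$ and the orders $<_{f,\mathcal{B}}$ are preserved while orientations flip), and you replace the paper's appeal to Corollary \ref{cor:equivalent-vertex-direction}, which pins down $b_{\mathcal{D}}=1$, by a second, contrapositive application of Lemma \ref{la:equivalent-vertex-direction}.
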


\begin{proof}
 Let $b_{\mathcal{C}},b_{\mathcal{D}} \in \{-1,1\}$ be the orientations of $\mathcal{C}$ and $\mathcal{D}$ respectively.
 Without loss of generality assume $b_{\mathcal{C}} = 1$.
 Then $b_{\mathcal{D}} = 1$ by Corollary \ref{cor:equivalent-vertex-direction}.
 So $D <_{f,\mathcal{D}} D'$ by Lemma \ref{la:equivalent-vertex-direction}.
\end{proof}

We are now ready to prove the main structural lemma on clique-stable partitions.

\begin{lemma}
 \label{la:internally-stable-partition}
 Let $G$ be a twin-free unit square graph and let $\mathcal{P}$ be a clique-stable partition.
 Further let $f\colon V(G) \rightarrow \mathbb{R}^{2}$ be a realization.
 Then the following properties hold:
 \begin{enumerate}
  \item For each $X \in \mathcal{P}$ there exists $b \in \{-1,1\}$, such that
   \begin{equation}
    \label{eq:vertex-diagonal-order}
    f(v)_1 \leq f(w)_1 \;\Leftrightarrow\; b \cdot f(v)_2 \leq b \cdot f(w)_2
   \end{equation}
   for all $v,w \in X$. The value $b$ is called the \emph{orientation of $X$}, which is denoted by $\ori_{G,f}(X)$
   (the value $b$ is unique unless $|X| = 1$, in this case we define $\ori_{G,f}(X) = 1$).
  \item Let $X,Y \in \mathcal{P}$ with $\ori_{G,f}(X) \neq \ori_{G,f}(Y)$.
   Then either $xy \in E(G)$ for all $x \in X,y \in Y$ or there is no $x \in X,y \in Y$ with $xy \in E(G)$.
  \item Let $X,Y \in \mathcal{P}$ with $\ori_{G,f}(X) = \ori_{G,f}(Y)$ and suppose $k = |\{xy \in X \times Y \mid xy \in E(G)\}| \geq 1$.
   Further let $X = \{x_1,\dots,x_s\}$, such that $f(x_i)_1 \leq f(x_{i+1})_1$ for all $i \in [s]$, and $Y = \{y_1,\dots,y_t\}$, such that $f(y_j)_1 \leq f(y_{j+1})_1$ for all $j \in [t]$.
   Then
   \begin{equation}
    \label{eq:diagonal-edges}
    x_iy_j \in E(G) \;\;\Leftrightarrow\;\; \left\lceil \frac{i \cdot t}{k} \right\rceil = \left\lceil \frac{j \cdot s}{k} \right\rceil.
   \end{equation}
 \end{enumerate}
\end{lemma}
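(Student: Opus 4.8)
The plan is to read off the \emph{orientation} of a class $X\in\mathcal P$ from the orientations of the $\simeq$-classes of maximal cliques (Lemma~\ref{la:order-clique-classes}), and then to push this one-dimensional information into the bipartite structure between two classes via the orders $<_{f,\mathcal C}$, Lemma~\ref{la:equivalent-vertex-direction}, Corollary~\ref{cor:equivalent-vertex-direction} and Corollary~\ref{cor:unit-square-weakly-ordered}, together with the stability of $\mathcal P$.

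For part~1, fix $X\in\mathcal P$ with $|X|\geq 2$. Since $G$ is twin-free and $\mathcal M(v)=\mathcal M(w)$ would force $N[v]=N[w]$, any two distinct $v,w\in X$ lie together with some vertex in a maximal clique that contains only one of them; as $v\simeq w$ and $\simeq$ is stable, $v$ and $w$ have equally many clique-neighbours in every $\simeq$-class, so there is a $\simeq$-class $\mathcal C$ of maximal cliques with $\mathcal M(v)\cap\mathcal C\neq\mathcal M(w)\cap\mathcal C$. Corollary~\ref{cor:equivalent-vertex-direction} then yields $f(v)_i\neq f(w)_i$ for $i\in[2]$ and $f(v)_1<f(w)_1\Leftrightarrow \ori_{G,f}(\mathcal C)\cdot f(v)_2<\ori_{G,f}(\mathcal C)\cdot f(w)_2$. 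First I would check that $\ori_{G,f}(\mathcal C)$ is independent of the choice of $\mathcal C$: a second distinguishing class $\mathcal C'$ describes the same (already determined) relation between $f(v)$ and $f(w)$, and since all coordinates involved are pairwise distinct this forces $\ori_{G,f}(\mathcal C)=\ori_{G,f}(\mathcal C')$; call the common sign $b(v,w)$. To see $b$ is constant on $X$ I would take distinct $u,v,w\in X$ with $f(u)_1<f(v)_1<f(w)_1$, choose a class $\mathcal C$ distinguishing the extremes $u,w$, and use Corollary~\ref{cor:order-clique-classes} (each $\mathcal M(\cdot)\cap\mathcal C$ is an interval of the same length) to argue that $\mathcal C$ distinguishes $v$ from $u$ or from $w$; a short case split, pairing this with some class distinguishing the remaining pair and using well-definedness of $b$ on that pair, gives $b(u,v)=b(v,w)=b(u,w)$. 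Hence there is one value $b=\ori_{G,f}(X)$, and uniqueness for $|X|\geq 2$ is clear since the first coordinates on $X$ are pairwise distinct.

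For part~2 assume $\ori_{G,f}(X)\neq\ori_{G,f}(Y)$, say $\ori_{G,f}(X)=1$, $\ori_{G,f}(Y)=-1$, and suppose there is both an edge and a non-edge between $X$ and $Y$. As $\mathcal P$ is stable for $G$, every $x\in X$ has the same number $d$ of neighbours in $Y$ with $0<d<|Y|$, and symmetrically for $Y$. Ordering $X=\{x_1,\dots,x_s\}$ by first coordinate (hence also by second, since $\ori_{G,f}(X)=1$) and $Y=\{y_1,\dots,y_t\}$ by first coordinate (hence by \emph{decreasing} second coordinate), the index set $\{j: x_iy_j\in E(G)\}$ is the intersection of two intervals: one from the first-coordinate constraint, whose endpoints are non-decreasing in $i$, and one from the second-coordinate constraint, whose endpoints are non-increasing in $i$. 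Analysing these two oppositely-sliding windows of constant intersection size $d$ — in particular tracking the behaviour of the extreme vertices $y_1,y_t$ of $Y$ and using that $X$ and $Y$ are cliques and Lemma~\ref{la:clique-center-border} for the clique centres — I would show the mixed pattern forces some vertex of $Y$ to have no neighbour in $X$ while another has one (contradicting stability) or produces a forbidden induced subgraph ($K_{2,3}$, $\overline{3K_2}$ via Corollary~\ref{cor:forbidden-k23-3k2-t2}, or $K_{1,5}$ via Lemma~\ref{la:forbid-k15}). So the adjacency between $X$ and $Y$ is complete or empty. For part~3, assume (after reflecting a coordinate) $\ori_{G,f}(X)=\ori_{G,f}(Y)=1$ and $k\geq 1$. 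With $X,Y$ ordered by first coordinate, $\{j:x_iy_j\in E(G)\}$ is an interval $[a_i,b_i]$ (intersection of two coordinate constraints, each an interval since both coordinates on $Y$ increase in $j$), and $a_i,b_i$ are non-decreasing in $i$; symmetrically $N(y_j)\cap X$ is an interval with non-decreasing endpoints. Thus the $s\times t$ adjacency matrix is a ``staircase'': consecutive ones in every row and column with monotone supports. Stability of $\mathcal P$ for $G$ makes all row sums equal and all column sums equal, so by the staircase shape they are $k/s$ and $k/t$ (in particular $s\mid k$, $t\mid k$). An elementary argument shows such a matrix is unique: it is the block pattern in which $x_iy_j\in E(G)$ iff $x_i$ and $y_j$ lie in blocks of the same index, rows grouped into blocks of size $k/t$ and columns into blocks of size $k/s$; writing $\lceil it/k\rceil=\lceil i/(k/t)\rceil$ and $\lceil js/k\rceil=\lceil j/(k/s)\rceil$, this is precisely~\eqref{eq:diagonal-edges}.

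The hard part, I expect, is part~2: converting the ``oppositely-sliding windows'' picture into a genuine contradiction, since plain stability of $G$ alone does not rule out a nontrivial sliding window and no single forbidden subgraph closes the argument either; one has to combine the constant-degree constraints with the clique structure of $X$ and $Y$ and the geometry of the clique centres. Part~1 is conceptually clean once the independence of $\ori_{G,f}(\mathcal C)$ from the distinguishing class and the three-vertex consistency step are in place, and part~3 is essentially bookkeeping once the staircase shape and the equal row/column sums are established.
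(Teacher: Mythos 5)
Parts 1 and 3 of your proposal are sound and essentially match the paper's argument: for part 1 the paper also reduces everything to Corollary~\ref{cor:equivalent-vertex-direction} (it orders $X$ by first coordinate, takes a minimal break point $v_i,v_{i+1},v_{i+2}$ and a $\simeq$-class of maximal cliques distinguishing $v_{i+1}$ from $v_{i+2}$, and derives the same kind of sign clash you get from your three-vertex case split), and for part 3 your staircase picture with equal row/column sums is exactly the paper's Claims 1--2 followed by a block-by-block induction, so the ``elementary uniqueness'' step you leave implicit is routine.

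The genuine gap is part 2, and you flag it yourself: you never convert the ``oppositely-sliding windows'' picture into a contradiction, and the tools you propose for doing so are not the ones that work. Forbidden subgraphs (Corollary~\ref{cor:forbidden-k23-3k2-t2}, Lemma~\ref{la:forbid-k15}) play no role here, and Lemma~\ref{la:clique-center-border} is not even applicable, since $X$ and $Y$ are classes of the clique-partition, not maximal cliques, so their centers $z_f$ need not be defined. The missing idea is a direct extremal-vertex argument: assume $\ori_{G,f}(X)=1$, $\ori_{G,f}(Y)=-1$ and consider $x=\argmax_{x\in X}f(x)_1$, $y=\argmax_{y\in Y}f(y)_1$ and $y^{*}=\argmin_{y\in Y}f(y)_1$. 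Since each class is a clique, its coordinates vary by at most $1$ in each dimension, and by the orientations $x$ maximizes both coordinates on $X$ while $y$ maximizes the first and minimizes the second coordinate on $Y$. A four-case analysis on the relative position of $f(x)$ and $f(y)$ (resp.\ $f(y^{*})$) then shows that a single vertex ($x$, $y$ or $y^{*}$) is adjacent to \emph{all} vertices of the other class, where the only combinatorial input is that clique-stability plus the existence of one edge gives every vertex of $X$ a neighbour in $Y$ and vice versa; constant degrees then immediately force complete adjacency. Some argument of this kind is needed, because the purely combinatorial picture you describe (two constant-size windows sliding in opposite directions, equal row and column sums) does not by itself rule out a non-trivial bipartite pattern; one has to use the metric constraints coming from the unit boxes containing $X$ and $Y$, which is precisely what the extremal case analysis does. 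As it stands, part 2 -- and hence the lemma -- is not proved in your proposal.
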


\begin{proof}
 Let $X \in \mathcal{P}$ and let $X = \{v_1,\dots,v_k\}$ with $f(v_i)_1 \leq f(v_j)_1$ for all $i \leq j \in [k]$.
 Without loss of generality assume $f(v_1)_2 \leq f(v_2)_2$.
 In this case $b = 1$.
 Suppose towards a contradiction that there is some $i \in [k]$ with $f(v_{i+1})_2 > f(v_{i+2})_2$.
 Choose $i \in [k]$ minimal with this property.
 Without loss of generality assume that $f(v_i)_2 \leq f(v_{i+2})_2$.
 The other case is analogous.
 Let $\mathcal{P}^{*}$ be a stable partition of the graph $G_{\mathcal{M}}^{*}$, such that $\mathcal{P}^{*} \cap 2^{V(G)} = \mathcal{P}$.
 Let $\mathcal{C} \in \mathcal{P}_{N} \cap 2^{\mathcal{M}(G)}$, such that $\mathcal{M}(v_{i+1}) \cap \mathcal{C} \neq \mathcal{M}(v_{i+2}) \cap \mathcal{C}$.
 Since $|\mathcal{M}(v_{i+1}) \cap \mathcal{C}| = |\mathcal{M}(v_{i+2}) \cap \mathcal{C}|$ there are $C,C' \in \mathcal{C}$ with $v_{i+1} \in C \setminus C'$ and $v_{i+2} \in C' \setminus C$.
 Let $b$ be the orientation of $\mathcal{C}$.
 Then $b = -1$ by Corollary \ref{cor:equivalent-vertex-direction}.
 Furthermore $\mathcal{M}(v_{i}) \cap \mathcal{C} = \mathcal{M}(v_{i+1}) \cap \mathcal{C}$ and $\mathcal{M}(v_{i}) \cap \mathcal{C} = \mathcal{M}(v_{i+2}) \cap \mathcal{C}$ by Corollary \ref{cor:equivalent-vertex-direction}.
 But this is a contradiction.
 
 So let $X,Y \in \mathcal{P}$ with $\ori_{G,f}(X) \neq \ori_{G,f}(Y)$.
 Suppose there is some edge between $X$ and $Y$.
 Without loss of generality assume $\ori_{G,f}(X) = 1$.
 Consider $x = \argmax_{x \in X} f(x)_1$ and $y = \argmax_{y \in Y} f(y)_1$.
 \vspace*{10pt}\\
 \textit{Case 1:} $f(x)_1 \leq f(y)_1$ and $f(x)_2 \leq f(y)_2$.
 \vspace*{5pt}\\
 Let $y' \in Y$.
 First, there is some $x ' \in X$ with $x'y \in E(G)$, because the partition $\mathcal{P}$ is stable.
 Then $f(y)_1 - 1 \leq f(x')_1 \leq f(x)_1 \leq f(y)_1$.
 Additionally $f(y)_1 - 1 \leq f(y')_1 \leq f(y)_1$.
 Hence, $|f(x)_1 - f(y')_1| \leq 1$.
 Furthermore, by stability, there is some $x'' \in X$ with $x''y' \in E(G)$.
 Then $f(y')_2-1 \leq f(x'')_2 \leq f(x)_2 \leq f(y)_2 \leq f(y')_2$ and thus, $|f(x)_2 - f(y')_2| \leq 1$.
 So $xy' \in E(G)$ for all $y' \in Y$.
 By stability it follows that $x'y' \in E(G)$ for all $x' \in X,y' \in Y$.
 \vspace*{10pt}\\
 \textit{Case 2:} $f(x)_1 \geq f(y)_1$ and $f(x)_2 \leq f(y)_2$.
 \vspace*{5pt}\\
 The proof is analogous to the first case by switching the roles of $X$ and $Y$.
 \vspace*{10pt}\\
 \textit{Case 3:} $f(x)_1 \leq f(y)_1$ and $f(x)_2 \geq f(y)_2$.
 \vspace*{5pt}\\
 Let $y^{*} = \argmin_{y \in Y}f(y)_1$ and first suppose $f(x)_2 \geq f(y^{*})_2$.
 Let $x' \in X$.
 Then, by stability, there is some $y' \in Y$ with $x'y' \in E(G)$.
 If $f(x')_1 \leq f(y')_1$ then $f(y')_1 - 1 \leq f(x')_1 \leq f(y')_1$ and $f(y')_1 - 1 \leq f(y^{*})_1 \leq f(y')_1$.
 Otherwise $f(y^{*})_1 \leq f(y')_1 \leq f(x')_1 \leq f(x)_1 \leq f(y)_1 \leq f(y^{*})_1 + 1$.
 So in both cases it holds that $|f(x')_1 - f(y^{*})_1| \leq 1$.
 Furthermore there is some $y''$ with $xy'' \in E(G)$.
 So $f(x)_2 \geq f(y^{*})_2 \geq f(y'')_2 \geq f(x)_2 + 1$.
 Additionally $f(x)_2 \geq f(x')_2 \geq f(x)_2 + 1$ and thus, $|f(x')_2 - f(y^{*})_2| \leq 1$.
 So $y^{*}x' \in E(G)$ for all $x' \in X$.
 By stability it follows again that $x'y' \in E(G)$ for all $x' \in X,y' \in Y$.
 
 So consider $f(x)_2 \leq f(y^{*})_2$.
 Then $f(y)_2 +1 \geq f(y^{*})_2 \geq f(x)_2 \geq f(y)_2$.
 Let $y' \in Y$.
 Then $f(y)_2 +1 \geq f(y')_2 \geq f(y)_2$ and consequently, $|f(x)_2 -f(y')_2| \leq 1$.
 Furthermore, there is some $x ' \in X$ with $x'y \in E(G)$.
 Then $f(y)_1 - 1 \leq f(x')_1 \leq f(x)_1 \leq f(y)_1$.
 Additionally $f(y)_1 - 1 \leq f(y')_1 \leq f(y)_1$.
 Hence, $|f(x)_1 - f(y')_1| \leq 1$.
 So $xy' \in E(G)$ for all $y' \in Y$.
 By stability it follows once more that $x'y' \in E(G)$ for all $x' \in X,y' \in Y$.
 \vspace*{10pt}\\
 \textit{Case 4:} $f(x)_1 \geq f(y)_1$ and $f(x)_2 \geq f(y)_2$.
 \vspace*{5pt}\\
 The proof is analogous to the third case by switching the roles of $X$ and $Y$.
 \vspace*{10pt}
 
 It remains to prove the third statement.
 Let $X,Y \in \mathcal{P}$ with $\ori_{G,f}(X) = \ori_{G,f}(Y)$.
 Without loss of generality assume that $\ori_{G,f}(X) = 1$.
 First, $|N(x) \cap Y| = \frac{k}{s}$ and $|N(y) \cap X| = \frac{k}{t}$ for each $x \in X,y \in Y$, because $\mathcal{P}$ is clique-stable.
 For a vertex $x \in X$ let $I_x = \{j \in [t] \mid xy_j \in E(G)\}$.
 \vspace*{10pt}\\
 \textit{Claim 1:} For each $x \in X$ the set $I_x$ is an interval.
 \vspace*{5pt}\\
 Let $j_1 < j_2 < j_3 \in [t]$ with $j_1,j_3 \in I_x$.
 If $f(y_{j_2})_1 \leq f(x)_1$ then $f(y_{j_1})_1 \leq f(y_{j_2})_1 \leq f(x)_1 \leq f(y_{j_1})_1 + 1$. 
 Otherwise $f(y_{j_3})_1 - 1 \leq f(x)_1 \leq f(y_{j_2})_1 \leq f(y_{j_3})_1$.
 In both cases $|f(x) - f(y_{j_2})_1| \leq 1$.
 The same argument also works for the second dimension by Lemma \ref{la:internally-stable-partition}.
 So $j_2 \in I_x$.
 \vspace*{10pt}\\
 For $y \in Y$ the set $I_y$ is defined analogously, so $I_y = \{i \in [s] \mid x_iy \in E(G)\}$.
 Clearly the previous claim also holds for all $y \in Y$.
 \vspace*{10pt}\\
 \textit{Claim 2:} Let $i < i' \in [s]$ and $j < j' \in [t]$, such that there are edges $x_iy_{j'}, x_{i'}y_j \in E(G)$.
  Then $x_iy_j, x_{i'}y_{j'} \in E(G)$.
 \vspace*{5pt}\\
 Let $r \in [2]$.
 If $f(x_i)_r \leq f(y_j)_r$ then $f(y_{j'})_r - 1 \leq f(x_i)_r \leq f(y_j)_r \leq f(y_{j'})_r$.
 Otherwise $f(x_{i'})_r - 1 \leq f(y_j)_r \leq f(x_i)_r \leq f(x_{i'})_r$.
 In both cases $|f(x_i)_r - f(y_j)_r| \leq 1$.
 Hence, $x_iy_j \in E(G)$.
 Analogously it holds that $x_{i'}y_{j'} \in E(G)$.
 \vspace*{10pt}\\
 For $r \in [\frac{s \cdot t}{k}]$ let $X^{r} = \{x_i \mid \lceil \frac{i \cdot t}{k} \rceil = r\}$ and $Y^{r} = \{y_j \mid \lceil \frac{j \cdot s}{k} \rceil = r\}$.
 Then $|X^{r}| = \frac{k}{t}$ and $|Y^{r}| = \frac{k}{s}$.
 It suffices to show that for all $r \in [\frac{s \cdot t}{k}]$ it holds that $xy \in E(G)$ for all $x \in X^{r}, y \in Y^{r}$.
 Let $r \in [\frac{s \cdot t}{k}]$ and suppose the statement holds for all $r' < r$.
 Let $i \in [s]$ be minimal with $x_i \in X^{r}$ and let $j \in [t]$ be minimal with $y_j \in Y^{r}$.
 We first show that $x_iy_j \in E(G)$
 Let $i' \in [s], j' \in [t]$ be such that $y_jx_{i'} \in E(G)$ and $x_iy_{j'} \in E(G)$.
 But then $i \leq i'$ since for every $i'' < i$ the vertex $x_{i''}$ already has the correct number of neighbors.
 Analogously it holds that $j \leq j'$.
 So $x_iy_j \in E(G)$ by Claim 2.
\end{proof}

Now let $\mathcal{P}$ be a canonical, clique-stable partition of the graph $G$.
We define the vertex and edge-colored graph $G[\mathcal{P}]=(\mathcal{P},E,c_V,c_E)$ with $E = \{XY \mid X \neq Y \in \mathcal{P}\}$,
\begin{equation*}
 c_V\colon\mathcal{P} \rightarrow \mathbb{N}\colon X \mapsto |X|
\end{equation*}
and
\begin{equation*}
 c_E\colon E \rightarrow \mathbb{N}\colon XY \mapsto |\{xy \in X \times Y \mid xy \in E(G)\}|.
\end{equation*}
For $\gamma \in \Aut(G)$ define the permutation $\gamma^{\mathcal{P}} := \varphi(\gamma) \in \Sym(\mathcal{P})$ where $\varphi\colon\Aut(G) \rightarrow \Sym(\mathcal{P})$ is the natural action of $\Aut(G)$ on $\mathcal{P}$.
Note that $\varphi$ is well-defined since the partition $\mathcal{P}$ is canonical.

\begin{theorem}
 \label{thm:global-to-local}
 Let $G$ be a twin-free unit square graph and let $\mathcal{P}$ be a canonical, clique-stable partition.
 Further let $\delta \in \Aut(G[\mathcal{P}])$.
 Then there is some $\gamma \in \Aut(G)$ with $\gamma^{\mathcal{P}} = \delta$.
\end{theorem}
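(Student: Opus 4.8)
\emph{Proof idea.} The plan is to fix one realization $f\colon V(G)\to\mathbb{R}^2$ of $G$ — everything will take place with respect to this single $f$ — and to show that $\delta$ admits the ``index‑preserving'' lift. Concretely, I would equip each class $X\in\mathcal{P}$ with a canonical linear order derived from $f$, and then let $\gamma$ send the $i$‑th vertex of $X$ to the $i$‑th vertex of $\delta(X)$. What makes such a naive lift work is Lemma~\ref{la:internally-stable-partition}: once the classes are canonically ordered, the adjacency pattern between any two classes $X,Y$ is determined entirely by the three numbers $|X|$, $|Y|$ and $k_{XY}:=|\{xy\in X\times Y\mid xy\in E(G)\}|$ — i.e.\ exactly by the vertex‑colors and edge‑colors of $G[\mathcal{P}]$, which are preserved by $\delta$.

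First I would make the orders well defined. For a class $X$ with $|X|\ge 2$ and $b=\ori_{G,f}(X)$, if two distinct $x,x'\in X$ had $f(x)_1=f(x')_1$, then \eqref{eq:vertex-diagonal-order} would force $f(x)_2=f(x')_2$ as well, hence $f(x)=f(x')$ and $N[x]=N[x']$, contradicting twin‑freeness; so $v\mapsto f(v)_1$ is injective on $X$ and induces a strict linear order. Write $X=(x_1,\dots,x_{|X|})$ in this order (for $|X|=1$ there is nothing to order), and likewise for $\delta(X)$, which has the same cardinality because $\delta\in\Aut(G[\mathcal{P}])$. Then define $\gamma(x_i):=$ the $i$‑th vertex of $\delta(X)$, for every $X$ and $i$. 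Since $\delta$ permutes $\mathcal{P}$, $\gamma$ is a bijection of $V(G)$, and it carries each $X$ onto $\delta(X)$, so $\gamma^{\mathcal{P}}=\delta$ — provided $\gamma$ is an automorphism.

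The heart of the argument is the edge‑preservation check, by cases on $u\in X$, $v\in Y$, using $|\delta(X)|=|X|$, $|\delta(Y)|=|Y|$, $k_{\delta(X)\delta(Y)}=k_{XY}$. If $X=Y$, both are cliques; if $X\ne Y$ and $k_{XY}\in\{0,|X||Y|\}$, then $G[X\cup Y]$ and $G[\delta(X)\cup\delta(Y)]$ are each edgeless, resp.\ complete bipartite, between the two sides — so $uv\in E(G)\Leftrightarrow\gamma(u)\gamma(v)\in E(G)$ in each of these cases. The remaining case $0<k_{XY}<|X||Y|$ is the one I expect to need the most care: here $G[X\cup Y]$ is neither edgeless nor complete between the parts, so Lemma~\ref{la:internally-stable-partition}(2) yields $\ori_{G,f}(X)=\ori_{G,f}(Y)$, and part~(3) gives $x_iy_j\in E(G)\Leftrightarrow\lceil i|Y|/k_{XY}\rceil=\lceil j|X|/k_{XY}\rceil$; and since $k_{\delta(X)\delta(Y)}=k_{XY}$ also lies strictly between $0$ and $|\delta(X)||\delta(Y)|=|X||Y|$, the \emph{same} case applies on the image side, forcing $\ori_{G,f}(\delta(X))=\ori_{G,f}(\delta(Y))$ and the \emph{identical} formula with the \emph{identical} parameters (note that $\ori_{G,f}(X)$ and $\ori_{G,f}(\delta(X))$ themselves may differ — this is harmless). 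Hence, writing $u=x_i$, $v=y_j$, we get $uv\in E(G)$ iff the $i$‑th vertex of $\delta(X)$ is adjacent to the $j$‑th vertex of $\delta(Y)$, i.e.\ iff $\gamma(u)\gamma(v)\in E(G)$. This proves $\gamma\in\Aut(G)$ with $\gamma^{\mathcal{P}}=\delta$; apart from the flagged point, the whole proof is a direct unwinding of Lemma~\ref{la:internally-stable-partition}.
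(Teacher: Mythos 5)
Your proposal is correct and takes essentially the same route as the paper's proof: fix a single realization, order each class by the first coordinate, define the index-preserving lift $v_{X,i}\mapsto v_{X^{\delta},i}$, and verify edge preservation through Lemma~\ref{la:internally-stable-partition}, using that $\delta$ preserves the vertex colors $|X|$ and edge colors $k_{XY}$ of $G[\mathcal{P}]$. Your tie-breaking observation via twin-freeness and your case split by the value of $k_{XY}$ (rather than by orientations) are only cosmetic differences from the paper's argument.
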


\begin{proof}
 Let $f\colon V(G) \rightarrow \mathbb{R}^{2}$ be a realization of $G$.
 For $X \in \mathcal{P}$ let $X = \{v_{X,1},\dots,v_{X,|X|}\}$, such that $f(v_{X,i})_1 \leq f(v_{X,j})_1$ for all $i \leq j \in [|X|]$.
 Define
 \begin{equation*}
  \gamma\colon V(G) \rightarrow V(G)\colon v_{X,i} \mapsto v_{X^{\delta},i}.
 \end{equation*}
 First observe that $\gamma$ is well-defined since $\mathcal{P}$ forms a partition and $|X| = |X^{\delta}|$ for all $X \in \mathcal{P}$.
 Furthermore $\gamma$ is bijective because $\delta$ is bijective.
 Thus, $\gamma \in \Sym(V(G))$ and clearly $\gamma^{\mathcal{P}} = \delta$.
 So it remains to prove that $\gamma$ preserves the edge relation.
 
 Let $vw \in E(G)$ and let $X,Y \in \mathcal{P}$, such that $v \in X$ and $w \in Y$.
 If $X = Y$ then $v^{\gamma},w^{\gamma} \in X^{\delta}$ and thus, $v^{\gamma}w^{\gamma} \in E(G)$ because $X$ is a clique.
 So suppose $X \neq Y$.
 First consider the case that $\ori_{G,f}(X) \neq \ori_{G,f}(Y)$.
 Then $c_E(XY) = |X|\cdot|Y|$ by Lemma \ref{la:internally-stable-partition}.
 Hence, $c_E(X^{\delta}Y^{\delta}) = |X|\cdot|Y| = |X^{\delta}|\cdot|Y^{\delta}|$ and $xy \in E(G)$ for every $x \in X^{\delta},y \in Y^{\delta}$.
 In particular $v^{\gamma}w^{\gamma} \in E(G)$.
 
 So suppose $\ori_{G,f}(X) = \ori_{G,f}(Y)$.
 Without loss of generality assume that $\ori_{G,f}(X) = 1$.
 Let $k = c_E(XY)$, $s = |X|$ and $t = |Y|$.
 Then $k \geq 1$ and suppose $k < |X|\cdot|Y|$ (otherwise the statement follows with same argument as before).
 Further let $x_i = v_{X,i}$ for $i \in [s]$ and $y_j = v_{Y,j}$ for $j \in [t]$.
 Now pick $i \in [s], j \in [t]$, such that $v = x_i = v_{X,i}$ and $w = y_j = v_{Y,j}$.
 Then $\lceil \frac{i \cdot s_Y}{k} \rceil = \lceil \frac{j \cdot s_X}{k} \rceil$ by Lemma \ref{la:internally-stable-partition}.
 Further $s = |X^{\delta}|$, $t = |Y^{\delta}|$ and $k = c_E(X^{\delta}Y^{\delta}) < |X^{\delta}| \cdot |Y^{\delta}|$.
 Hence, $\ori_{G,f}(X^{\delta}) = \ori_{G,f}(Y^{\delta})$ by Lemma \ref{la:internally-stable-partition}.
 So $v^{\gamma}w^{\gamma} = v_{X^{\delta},i}v_{Y^{\delta},j} \in E(G)$ again by Lemma \ref{la:internally-stable-partition}.
\end{proof}

Intuitively, the last theorem states that each automorphism of $G[\mathcal{P}]$ naturally extends to an automorphism of $G$.
In particular, the graph $G$ can be uniquely reconstructed from the graph $G[\mathcal{P}]$.
This is the main result on the local structure of unit square graphs which allows us, for a canonical, clique-stable partition $\mathcal{P}$, to restrict to the graph $G[\mathcal{P}]$.
For the remainder of this work the goal is to compute a canonical, clique-stable partition $\mathcal{P}$, such that the automorphism group of $G[\mathcal{P}]$ is a $\Gamma_t$-group for some constant $t$.
To achieve this goal we require the graph $G$ to have some singleton vertex $v_0$ (a vertex with a unique color).
More precisely, for such a graph we construct the desired partition $\mathcal{P}$ and a canonical, $t$-circle-bounded graph $H$, such that $\mathcal{P} \subseteq V(H)$ and $\mathcal{P}$ is invariant under $\Aut(H)$.
This results in a good supergroup of the group $\Aut(G[\mathcal{P}])$ which can be used by Luks' algorithm to compute the real automorphism group.
To compute the graph $H$ we devise an algorithm that iteratively extends $H$ taking vertices with larger and larger distances to $v_0$ into account.
While doing so the crucial subproblem is to compute canonical clique-partitions for neighborhoods of cliques.
This problem is addressed in the next section.

\section{Neighborhoods}
\label{sec:neighborhoods}

In order to obtain canonical clique-partitions for neighborhoods we essentially proceed in two steps.
First, we use some combinatorial partitioning techniques to obtain some initial coloring of the vertices.
Then, considering each color class separately, the main contribution is to prove that each color class can either be described by a co-bipartite graph or a proper circular arc graph.
In both cases it is easy to compute a canonical clique-partition.

\subsection{Two structure Lemmas}

We first require two general graph-theoretic lemmas.

\begin{lemma}
 \label{la:forbid-complement-c6}
 Let $G$ be a graph, such that
 \begin{enumerate}
  \item \label{item:forbid-complement-c6-1} $G[N[v]]$ is a unit interval graph for each $v \in V(G)$,
  \item \label{item:forbid-complement-c6-2} $G$ has no induced subgraph isomorphic to $C_4 \cup K_1$.
 \end{enumerate}
 Further let $X = \{w_1 \in V(G) \mid \exists w_2,\dots,w_6\colon G[w_1,\dots,w_6] \cong \overline{C_6}\}$.
 Then $G[X]$ is co-bipartite.
\end{lemma}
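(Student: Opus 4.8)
The plan is to show that the complement $\overline{G[X]}$ has no odd cycle, which by definition means $G[X]$ is co-bipartite. Equivalently, I want to show that the non-edge relation on $X$ contains no odd cycle. The natural strategy is to use the hypotheses to establish strong local restrictions on how the vertices of $X$ can be non-adjacent, and then translate "$G[X]$ co-bipartite" into a statement about a $2$-coloring of $X$ in which each color class is a clique.

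First I would record what it means for a vertex $w_1$ to lie in $X$: there exist $w_2,\dots,w_6$ so that $G[w_1,\dots,w_6]\cong\overline{C_6}$. Recall $\overline{C_6}$ consists of two disjoint triangles $\{a_1,a_2,a_3\}$ and $\{b_1,b_2,b_3\}$ together with a perfect matching $a_ib_i$ between them; so membership in $X$ means $w_1$ sits in such a configuration, and in particular has a non-neighbor (its matching partner) which itself lies in a triangle disjoint-up-to-matching from $w_1$'s triangle. The key point I would aim for is: \emph{for each $w\in X$, the set $N_X(w)\cap X$ together with $w$ fails to be all of $X$ only in a controlled way}, and more precisely that the non-neighbors of $w$ inside $X$ form a clique. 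If I can prove that the non-adjacency relation restricted to $X$ is such that around every vertex the non-neighbors form a clique, then $\overline{G[X]}$ is a disjoint union of "triangle-free at every neighborhood" — actually what I really want is that $\overline{G[X]}$ is bipartite, so I should push for: every non-neighbor set is a clique in $G$ (hence an independent set in $\overline{G}$), which is exactly the statement that $\overline{G[X]}$ has no $P_3$ centered at any vertex... that is too strong. Let me instead aim directly for the absence of odd cycles in $\overline{G[X]}$.

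The cleaner route, which I would actually pursue: use hypothesis \eqref{item:forbid-complement-c6-1} (each closed neighborhood is a unit interval graph, hence in particular $K_{1,3}$-free and $\overline{T_2}$-free and contains no induced $C_{n+4}$) together with \eqref{item:forbid-complement-c6-2} (no induced $C_4\cup K_1$) to show that $\overline{G[X]}$ contains no induced $C_5$ and no induced $C_{2k+1}$ for any $k\ge 1$, and no induced $P_4$-type obstruction forcing a larger odd cycle; by the characterization that a graph is bipartite iff it has no odd cycle, and since it suffices to forbid odd \emph{induced} cycles once we know the graph is, say, weakly chordal or has no induced $C_4\cup K_1$ in the complement, I would argue as follows. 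Take any induced odd cycle $z_1,\dots,z_{2k+1}$ in $\overline{G[X]}$; I claim $k=1$ is impossible and $k\ge 2$ is impossible. For $k\ge 2$: in $G$, consecutive $z_i$'s are non-adjacent while "distance-$2$" ones along the complement cycle may or may not be adjacent, but an induced $C_m$ in $\overline{G}$ with $m\ge 6$ would be an induced $\overline{C_m}$ in $G$, and $\overline{C_m}$ for $m \ge 6$ contains $\overline{C_6}$ as... no, it contains it as an induced subgraph only for $m=6$; for $m \ge 7$ I would extract instead an induced $C_4 \cup K_1$ or a violation of the unit-interval condition on some closed neighborhood, using that in $\overline{C_m}$ every vertex has two non-neighbors forming a non-edge, which after complementing gives a $C_4$ plus an extra vertex. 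For $k=2$ (an induced $C_5$ in $\overline{G[X]}$, i.e. induced $C_5 = \overline{C_5}$ in $G[X]$): here I would use that each $z_i\in X$, pull in its $\overline{C_6}$-witness, and combine with the $C_5$ to manufacture either a $K_{1,3}$ or an induced $C_4\cup K_1$ or a forbidden cycle inside some $N[v]$ — this is where the $\overline{C_6}$-membership of the $z_i$ is essential, since a bare $\overline{C_5}$ is not by itself forbidden.

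The main obstacle I anticipate is precisely the $C_5$ / small-odd-cycle case: ruling out an induced $\overline{C_5}$ spanned by vertices of $X$ requires genuinely using that every vertex of the pentagon extends to a $\overline{C_6}$, and the bookkeeping of which of the witness vertices $w_2,\dots,w_6$ coincide with or are adjacent to the pentagon vertices is delicate; I expect to need hypothesis \eqref{item:forbid-complement-c6-2} ($C_4\cup K_1$-free) several times here and possibly the full strength of "$N[v]$ is a unit interval graph" (not just $K_{1,3}$-free) to kill the remaining configurations. The large-odd-cycle case should by contrast reduce fairly mechanically to the $C_4\cup K_1$ hypothesis, since an induced $C_{m}$ in $\overline{G}$ with $m$ large forces, among any five consecutive vertices, an induced $C_4\cup K_1$ in $G$ after taking complements appropriately. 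Once all induced odd cycles in $\overline{G[X]}$ are excluded, $\overline{G[X]}$ is bipartite and we are done.
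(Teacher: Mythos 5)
Your overall reduction (bipartiteness of $\overline{G[X]}$ follows from excluding induced odd cycles, since a shortest odd cycle is induced) is sound, but the plan has a concrete false step and leaves unproved exactly the cases that carry all the content. The false step is the ``mechanical'' treatment of long odd cycles: if $z_1,\dots,z_m$ is an induced cycle in $\overline{G[X]}$, then $G[\{z_1,\dots,z_m\}]\cong\overline{C_m}$, and $\overline{C_m}$ contains \emph{no} induced $C_4\cup K_1$ at all --- a fifth vertex non-adjacent in $\overline{C_m}$ to an induced $C_4$ would have to be adjacent in $C_m$ to four vertices, which is impossible in a cycle. So hypothesis (2) cannot be invoked inside the cycle; any extra vertex would have to come from the $\overline{C_6}$-witnesses, i.e.\ precisely the bookkeeping you hoped to avoid. (For $m\ge 9$ you can instead use hypothesis (1): $N[z_1]$ contains $z_3,\dots,z_{m-1}$, which induce $\overline{P_{m-3}}$, and $\overline{P_k}$ for $k\ge 5$ contains an induced $C_4$, contradicting unit-intervality. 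But for $m=7$ this fails too: $\overline{C_7}$ satisfies both hypotheses --- its closed neighborhoods are gems, which are unit interval --- so $\overline{C_7}$ can only be excluded via the $X$-membership witnesses, contrary to your claim that only the $C_5$ case needs them.)

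The remaining cases, $m=3$ (three pairwise non-adjacent vertices of $X$), $m=5$, and $m=7$, are where the lemma actually lives, and your proposal does not prove any of them: the triangle case is asserted (``$k=1$ is impossible'') but never argued, and the $C_5$ case is only flagged as delicate. Handling them requires an argument of the following shape, which is in fact the paper's entire proof and settles co-bipartiteness directly without discussing odd cycles: fix one witness copy $W\cong\overline{C_6}$ (two triangles joined by a perfect matching); hypothesis (2) applied to the induced $C_4$'s of $W$ forces every $v\in X\setminus W$ to have neighbors in two distinct matching pairs; claw-freeness and $C_4$-freeness of $G[N[v]]$ (hypothesis (1)) then force $v$ to be complete to one of the two triangles of $W$; and the same two hypotheses show that any two vertices complete to a common triangle are adjacent. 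Thus $X$ splits into two cliques according to which triangle a vertex is complete to. If you want to keep your odd-cycle framing, you would still have to carry out essentially this witness analysis to kill the $3K_1$, $C_5$ and $\overline{C_7}$ configurations, so as written the proposal is a plan with its central argument missing rather than a proof.
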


\begin{proof}
 Suppose $X \neq \emptyset$ and let $W = \{w_{i,j} \mid i \in [2],j \in [3]\} \subseteq X$, such that $G[W] \cong \overline{C_6}$ as depicted in Figure \ref{fig:graph-c6}.
 If $X = W$ then $G[X] \cong \overline{C_6}$ which is co-bipartite.
 So assume $W \subsetneq X$ and let $v \in X \setminus W$.
 Then there are $i_1,i_2 \in [2]$ and $j_1,j_2 \in [3]$ with $j_1 \neq j_2$, such that $vw_{i_s,j_s} \in E(G)$ for $s \in [2]$, because of property \ref{item:forbid-complement-c6-2}.
 \vspace*{10pt}\\
 \textit{Claim:} There is some $i \in [2]$, such that $vw_{i,j} \in E(G)$ for all $j \in [3]$.
 \vspace*{5pt}\\
 Suppose this does not hold and consider some vertex $w_{i,j}$ with $vw_{i,j} \in E(G)$.
 Then there is some $j' \in [3]$, such that $vw_{i,j'} \notin E(G)$.
 So $vw_{3-i,j} \in E(G)$, because otherwise the graph $G[v,w_{i,j},w_{i,j'},w_{3-i,j}]$ would be isomorphic to a claw contradicting property \ref{item:forbid-complement-c6-1}.
 Hence, $vw_{i,j} \in E(G)$ if and only if $vw_{3-i,j} \in E(G)$ for each $j \in [3]$.
 But then $G[N[v]]$ contains a cycle of length $4$, since there are two distinct $j_1,j_2 \in [3]$ with $vw_{i,j_1},vw_{i,j_2} \in E(G)$.
 This is a contradiction to property \ref{item:forbid-complement-c6-1}.
 \vspace*{10pt}\\
 Now let $v' \in X \setminus W$ be another vertex, such that $v'w_{i,j} \in E(G)$ for all $j \in [3]$.
 Suppose that $vv' \notin E(G)$ and let $j \in [3]$.
 Then $uw_{3-i,j} \in E(G)$ for some $u \in \{v,v'\}$, since otherwise the graph $G[w_{i,j},v,v',w_{3-i,j}]$ would be isomorphic to a claw.
 So for some $u \in \{v,v'\}$ there are distinct $j_1,j_2 \in [3]$ with $uw_{3-i,j_1},uw_{3-i,j_2} \in E(G)$.
 But this again contradicts the fact, that $G[N[u]]$ does not contain a cycle of length $4$.
 
 Now let $X_1 = \{v \in X \mid \forall j \in [3]\colon vw_{1,j} \in E(G)\}$ and $X_2 = X \setminus X_1$.
 Then $X_1$ is a clique by the comment below the Claim.
 For every $v \in X_2$ it holds that $vw_{2,j} \in E(G)$ for all $j \in [3]$ by the Claim.
 So $X_2$ is also a clique using again the comment below the Claim.
\end{proof}

\begin{lemma}
 \label{la:forbid-net}
 Let $G$ be a graph, such that
 \begin{enumerate}
  \item \label{item:forbid-net-1} $G[N[v]]$ is a unit interval graph for each $v \in V(G)$,
  \item \label{item:forbid-net-2} $G$ has no induced subgraph isomorphic to $C_{n+4} \cup K_1$ for $n \geq 0$,
  \item \label{item:forbid-net-3} there are no $v,w \in V(G)$, such that $N[v] \subsetneq N[w]$.
 \end{enumerate}
 Then $G$ has no induced subgraph isomorphic to \texttt{net}.
\end{lemma}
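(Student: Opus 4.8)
The plan is to suppose, for contradiction, that $G$ contains an induced \texttt{net}: vertices $v_1,v_2,v_3$ forming a triangle together with $w_1,w_2,w_3$, where $w_i$ is adjacent to $v_i$ only among these six vertices. First I would isolate two reusable facts. By property~\ref{item:forbid-net-1}, for every $u$ the graph $G[N[u]]$ contains no induced $K_{1,3}$, and it also contains no induced path on five vertices (if $a{-}b{-}c{-}d{-}e$ were one, then $u$ --- which cannot be any of $a,\dots,e$ --- is adjacent to all five, so $\{u,a,c,e\}$ induces a $K_{1,3}$). By property~\ref{item:forbid-net-2}, every induced cycle of length at least four in $G$ is dominated: each vertex outside it has a neighbour on it. These two principles --- ``no claw (nor $P_5$) inside a closed neighbourhood'' and ``long induced cycles are dominating'' --- drive the whole argument.

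Next I would exploit property~\ref{item:forbid-net-3}. Since $v_2 \in N[v_1]\setminus N[w_1]$, the sets $N[w_1]$ and $N[v_1]$ are $\subseteq$-incomparable, so there is $x_1 \in N[w_1]\setminus N[v_1]$; thus $x_1 \sim w_1$, $x_1 \not\sim v_1$, and $x_1$ is none of the six \texttt{net}-vertices. Obtain $x_2,x_3$ symmetrically. I then analyse how $x_1$ attaches to the \texttt{net}. If $x_1 \sim v_\ell$ for some $\ell \in \{2,3\}$, then ``no claw in $G[N[v_\ell]]$'', applied to the centre $v_\ell$ and the pairwise non-adjacent triple $\{v_1,w_\ell,x_1\}$, forces $x_1 \sim w_\ell$; now $\{x_1,w_1,v_1,v_\ell\}$ induces a $C_4$, so by domination the third pendant $w_{\ell'}$ meets it, i.e.\ $x_1 \sim w_{\ell'}$, whence $\{x_1,w_1,w_2,w_3\}$ is a claw inside $G[N[x_1]]$ --- a contradiction. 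Hence $x_1 \not\sim v_2,v_3$. If instead $x_1 \sim w_2$ (or $w_3$), then $\{x_1,w_1,v_1,v_2,w_2\}$ induces a $C_5$, forcing $x_1 \sim w_3$ by domination and again a claw in $G[N[x_1]]$. So $x_1$ is adjacent to $w_1$ only among the \texttt{net}-vertices, and likewise $x_2,x_3$. Finally $x_1,x_2,x_3$ are pairwise non-adjacent, since $x_1\sim x_2$ would make $\{x_1,w_1,v_1,v_2,w_2,x_2\}$ an induced $C_6$ missed entirely by $w_3$, contradicting property~\ref{item:forbid-net-2}.

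Thus $G$ contains, as an induced subgraph, the \texttt{net} plus pendants $x_1,x_2,x_3$, each $x_i$ hanging only off $w_i$, with the $x_i$ pairwise non-adjacent. To finish I would argue by extremality. Consider all induced paths $P=(p_0,\dots,p_m)$ with $p_0=v_1$, $p_1=w_1$, $p_2=x_1$ such that for every $i\ge 2$ the vertex $p_i$ is neither equal nor adjacent to any of $v_2,v_3,w_2,w_3,x_2,x_3$; such a path of length $2$ exists by the previous paragraph. Pick one of maximum length $m$. Applying property~\ref{item:forbid-net-3} to the pair $(p_m,p_{m-1})$, with witness $p_{m-2}\in N[p_{m-1}]\setminus N[p_m]$, yields $q \sim p_m$ with $q\not\sim p_{m-1}$ and $q\neq p_{m-1},p_m$. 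Re-running the case analysis above shows $q$ can be neither equal nor adjacent to any $p_j$ with $j\le m-2$, nor to any of $v_2,v_3,w_2,w_3,x_2,x_3$: each such adjacency produces either a long induced cycle built from the path $p_0,\dots,p_m,q$ closed up by a short detour through the \texttt{net}, together with a non-adjacent vertex among $w_2,w_3,x_2,x_3$ (contradicting domination), or --- in the exceptional case where that ``far'' vertex happens to be adjacent to $q$ --- a claw centred at $q$ inside $G[N[q]]$ (contradicting property~\ref{item:forbid-net-1}). But then $(p_0,\dots,p_m,q)$ is a strictly longer path of the same kind, contradicting maximality; since $G$ is finite, no induced \texttt{net} exists.

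The step I expect to be the real work is this last one: checking that every forbidden adjacency of $q$ genuinely yields an induced cycle of length at least four, keeping track of the order in which the non-adjacencies of $q$ must be established, and verifying in each configuration that some ``far'' vertex among $w_2,w_3,x_2,x_3$ stays non-adjacent to the whole cycle (or else that a claw appears in the appropriate closed neighbourhood). Everything else reduces to short, direct applications of the two principles isolated at the outset.
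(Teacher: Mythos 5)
Your proposal is correct and follows essentially the same route as the paper: a maximal induced path growing out of the \texttt{net}, extended by one more vertex via condition (3), with the contradiction obtained by closing an induced cycle that some pendant of the \texttt{net} misses entirely (condition (2)) unless that pendant attaches to the new vertex, which then produces a claw inside its closed neighbourhood (condition (1)). The paper's version is just leaner --- it anchors the path directly at $w_3$ without first constructing the auxiliary pendants $x_1,x_2,x_3$ and states the final contradiction as a claw at the extension vertex rather than as a violation of path maximality --- and the case analysis you deferred does close exactly as you describe.
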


\begin{proof}
 Assume towards a contradiction that $G$ has an induced subgraph isomorphic to \texttt{net}.
 So there are vertices $v_1,v_2,v_3,w_1,w_2,w_3 \in V(G)$, such that $G[v_1,v_2,v_3,w_1,w_2,w_3] \cong\texttt{net}$ as depicted in Figure \ref{fig:graph-net}.
 Further let $m \in \mathbb{N}$ be the maximal number, such that there is an induced path $u_1,\dots,u_m$ with $u_1 = w_3$ and $u_iw_j, u_iv_l \notin E(G)$ for all $i \in [m]$, $j \in [2]$, $l \in [3]$.
 Consider some $u_{m+1} \in V(G)$, such that $u_{m+1}u_m \in E(G)$ and $u_{m+1}u_{m-1} \notin E(G)$ (if $m=1$ then $u_0 = v_3$).
 Note that such a vertex always exists by property \ref{item:forbid-net-3}.
 Since $m$ was chose to be maximal there is some vertex $v \in X := \{v_1,v_2,v_3,w_1,w_2,w_3,u_1,\dots,u_{m-1}\}$ with $u_{m+1}v \in E(G)$.
 This gives a cycle $C \subseteq X \cup \{u_m,u_{m+1}\}$ of length $l \geq 4$ and some $i \in [2]$, such that $uw_i \notin E(G)$ for all $u \in C \setminus \{u_{m+1}\}$.
 By \ref{item:forbid-net-2} there is an edge $u_{m+1}w_i \in E(G)$.
 But then the induced subgraph $G[u_{m+1},u_{m},v,w_i]$ builds a claw contradicting \ref{item:forbid-net-1}.
\end{proof}

\subsection{Neighborhood graphs}

Before considering neighborhoods of cliques we first restrict to neighborhoods of vertices.
This occurs as a subcase when analyzing neighborhood of cliques.
Also the structure of neighborhoods tends to be simpler than for neighborhoods of cliques.

\begin{definition}
 A unit square graph is a \emph{neighborhood graph} if there is a realization $f\colon V(G) \rightarrow [-1,1]^{2}$.
\end{definition}

Note that every graph induced on a neighborhood of a vertex is indeed a neighborhood graph and every neighborhood graph can be turned into the neighborhood of a vertex by adding a universal vertex located at the origin.
In this subsection let $G$ be a neighborhood unit square graph with realization $f\colon V(G) \rightarrow [-1,1]^{2}$.
The goal is to prove that, after performing the $k$-dimensional Weisfeiler-Leman algorithm for sufficiently large $k$, each color class of vertices is co-bipartite or proper circular arc.
Building on the characterization of proper circular arc graphs in terms forbidden induced subgraphs, we have to consider $C_{n+4} \cup K_1$ and $S_3 \cup K_1$.

\begin{lemma}
 \label{la:forbid-circle}
 Let $X = \{v \in V(G) \mid \exists \ell \geq 4 \,\exists w_1,\dots,w_\ell: vw_i \notin E(G) \wedge G[w_1,\dots,w_\ell] \cong C_\ell\}$.
 Then $X \neq V(G)$.
\end{lemma}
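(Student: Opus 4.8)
The plan is to argue by contradiction: assume $X = V(G)$, so that \emph{every} vertex is non-adjacent to some induced cycle of length at least $4$, and extract a contradiction from the fact that all points lie in the bounded square $[-1,1]^2$.

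\textbf{Normalisation.} First I would translate the realization so that $\min_v f(v)_1 = \min_v f(v)_2 = -1$ (this keeps the image inside $[-1,1]^2$), and write the bounding box as $[-1,a]\times[-1,b]$. If $a<0$ (resp.\ $b<0$) then all points lie in a strip of width $\le 1$, so $G$ is a unit interval graph by Remark~\ref{re:induced-unit-interal-by-realization}, hence has no induced $C_\ell$ with $\ell\ge4$ and $X=\emptyset$; so I may assume $a,b\in[0,1]$. Splitting the plane along the lines $x=0$ and $y=0$ partitions $V(G)$ into four sets, each contained in a box of side at most $1$ and hence each a clique; consequently every induced cycle of $G$ has at most two vertices in each of the four parts, so induced cycles have length at most $8$. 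In particular ``$v\in X$'' becomes a bounded condition ($\ell\in\{4,\dots,8\}$), which will also be what makes $X$ a union of Weisfeiler--Leman colour classes later on.

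\textbf{Structure of the remote cycle of the leftmost vertex.} Let $v$ satisfy $f(v)_1=-1$ and let $C$ be an induced cycle of length $\ge 4$ with $v$ non-adjacent to all of $C$. Put $R=\{w\in C : f(w)_1>0\}$, $B=\{w\in C : f(w)_1\le 0,\ f(w)_2<f(v)_2-1\}$ and $T=\{w\in C : f(w)_1\le 0,\ f(w)_2>f(v)_2+1\}$; since $v$ is leftmost and non-adjacent to all of $C$, we get $C=R\cup B\cup T$. The points of $R$ lie in a vertical strip of width $a\le1$, so $G[R]$ is a unit interval graph by Remark~\ref{re:induced-unit-interal-by-realization}; the points of $B$ and of $T$ lie in boxes of side at most $1$ (here I use $f(v)_1=-1$ so that these boxes have width exactly $1$, together with $a,b\le1$), so $G[B]$ and $G[T]$ are cliques. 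As an induced cycle of length $\ge4$ is triangle-free, $|B|,|T|\le2$; as there is no edge between $B$ and $T$, and as $B\ne\emptyset$ forces $f(v)_2>0$ while $T\ne\emptyset$ forces $f(v)_2<0$, exactly one of $B,T$ is nonempty, say $T=\emptyset$ (reflect $y\mapsto-y$ otherwise). Since $C$ is neither unit interval nor a clique, both $R$ and $B$ are nonempty. The $B$-vertices are mutually adjacent, hence consecutive on the cycle, so $C$ is one block of $B$ followed by one block of $R$; thus $G[R]$ is an induced path, which — being unit interval — may be written $p_1\cdots p_m$ with $f(p_1)_2<\dots<f(p_m)_2$, with the endpoints $p_1,p_m$ being the $B$-neighbours. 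Because inside the strip $R$ adjacency depends only on the second coordinate, the requirement that a vertex $u\in B$ be adjacent to $p_1,p_m$ but to no interior $p_j$ forces $f(p_1)_1,f(p_m)_1\le f(u)_1+1<f(p_j)_1$ for $1<j<m$; i.e.\ the two $y$-extreme vertices of the path have the two smallest first coordinates of $R$.

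\textbf{Deriving the contradiction.} The remaining step is to use that \emph{every} vertex — in particular an interior vertex $p_j$ of this path, or the rightmost vertex of $R$, or a vertex of $B$ — also lies in $X$ and hence carries its own remote induced cycle of length $\ge4$. Each such cycle must avoid the $2\times2$ square around its ``centre'' vertex and, by the unit interval obstruction against $C_{n+4}$, must have first- and second-coordinate spread strictly larger than $1$; tracing where a cycle with these two properties can sit inside $[-1,a]\times[-1,b]$ shows that it is confined to an L-shaped region hugging a corner of the square, disjoint from where $C$ lives, which — after fixing $v$ and then choosing $C$ among the remote cycles of $v$ so as to minimise $\max_{w\in C}f(w)_1$ — produces either a cycle violating this minimality or a configuration of two ``full-size'' pieces at $L_\infty$-distance $>1$ that cannot both fit inside $[-1,1]^2$. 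Making this last argument precise, including the parallel case $|B|=2$ and the bookkeeping of how the remote cycles of several vertices overlap, is the main obstacle; the bounded cycle length supplied by the quadrant decomposition is what keeps the case analysis finite.
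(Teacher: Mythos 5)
Your opening observations (quadrant decomposition into four cliques, hence induced cycles of length at most $8$; the decomposition $C=R\cup B\cup T$ of a cycle remote from the leftmost vertex, with $R$ unit interval, $B,T$ cliques of size at most $2$, at most one of them nonempty, and the monotone-path structure of $R$) are essentially correct, but they do not prove the lemma: the entire contradiction is supposed to happen in your last paragraph, and there nothing is actually established. The assertions that every remote cycle is ``confined to an L-shaped region hugging a corner'', that it is ``disjoint from where $C$ lives'', and that minimising $\max_{w\in C}f(w)_1$ yields ``either a cycle violating this minimality or a configuration of two full-size pieces that cannot both fit inside $[-1,1]^2$'' are neither proved nor obvious; for instance, the remote cycle of an interior path vertex $p_j$ may well reuse vertices of $C$ itself, and when a vertex sits near the boundary of the square the forbidden $2\times2$ block around it constrains the box only weakly. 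You concede this yourself (``making this last argument precise \dots is the main obstacle''), so what you have is a correct structural analysis of one remote cycle plus an unexecuted plan, i.e.\ a genuine gap rather than a proof.

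The root of the difficulty is the extremal choice. Taking $v$ leftmost ($f(v)_1=-1$) cannot give a one-step contradiction, because the leftmost vertex genuinely can belong to $X$ (its remote cycle may hug the right half of the square); this is exactly what forces you into the unfinished second phase comparing remote cycles of several vertices. The paper instead takes $v=\argmin_{u\in V(G)}|f(u)_1|$, the vertex closest to the vertical line $x=0$, and the whole proof is local to this single vertex: with $f(v)\in[-1,0]^2$ w.l.o.g., the cycle vertex $w_i$ of minimal second coordinate must lie in $[0,1]\times[-1,0]$ (otherwise the cycle sits in a horizontal strip of width $1$ and would be unit interval, contradicting Remark \ref{re:induced-unit-interal-by-realization}), its two cycle neighbours cannot both have nonnegative first coordinate (they are nonadjacent but both within vertical distance $1$ of $w_i$), and the one with negative first coordinate, being nonadjacent to $v$ but adjacent to $w_i$, satisfies $f(v)_1<f(w_j)_1<0$, contradicting the minimality of $|f(v)_1|$. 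To repair your write-up you must either carry out in full the minimisation/packing argument you sketch (including the $|B|=2$ case and the overlap bookkeeping), or replace the extremal choice by the paper's, after which most of your structural analysis becomes unnecessary.
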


\begin{proof}
 Let $v = \argmin_{v \in V(G)}|f(v)_1|$.
 Suppose towards a contradiction that $v \in X$.
 Then there is some $\ell \geq 4$ and $w_1,\dots,w_\ell \in V(G)$, such that $vw_i \notin E(G)$ for all $i \in [\ell]$ and $w_iw_j \in E(G)$ if and only if $i-j \equiv \pm 1\mod \ell$ for all $i,j \in [\ell]$.
 Without loss of generality assume that $f(v) \in [-1,0] \times [-1,0]$.
 Let $i= \argmin_{i \in [\ell]} f(w_i)_2$.
 Since $G[w_1,\dots,w_\ell]$ is not a unit interval graph it holds that $f(w_i) \in [0,1] \times [-1,0]$, by Remark \ref{re:induced-unit-interal-by-realization}.
 Without loss of generality assume $i = 2$.
 Now consider the two neighbors $w_1$ and $w_3$.
 Note that $w_1w_3 \notin E(G)$ since $\ell \geq 4$.
 Then there is some $j \in \{1,3\}$, such that $f(w_j) \in [-1,0) \times [0,1]$.
 So in particular $f(w_j)_1 < 0$.
 Further $f(w_j)_1 + 1 \geq f(w_2)_1$ and $f(v)_1 + 1 < f(w_2)_1$.
 Altogether this means that $f(v)_1 < f(w_j)_1 < 0$ contradicting the definition of $v$.
\end{proof}

\begin{lemma}
 \label{la:forbid-s3}
 Let $X = \{v \in V(G) \mid \exists w_1,\dots,w_6: vw_i \notin E(G) \wedge G[w_1,\dots,w_6] \cong S_3\}$.
 Then $X \neq V(G)$.
\end{lemma}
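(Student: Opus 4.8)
The plan is to follow the strategy of Lemma~\ref{la:forbid-circle}. I would let $v = \argmin_{v \in V(G)}|f(v)_1|$ be a vertex whose first coordinate is closest to $0$ and assume towards a contradiction that $v \in X$, so there is an induced $S_3$ on vertices $w_1,\dots,w_6$ with $vw_i \notin E(G)$ for all $i$; following Figure~\ref{fig:graph-s3} I take $\{w_4,w_5,w_6\}$ to be the central triangle, $\{w_1,w_2,w_3\}$ the outer independent triple, and $w_1 \sim w_4,w_6$, $w_2 \sim w_4,w_5$, $w_3 \sim w_5,w_6$. Reflecting the realization along the coordinate axes changes neither $G$ nor the choice of $v$, so I may assume $f(v) \in [-1,0]^2$. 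Partition $[-1,1]^2$ into the quadrants $Q_1 = (0,1]\times(0,1]$, $Q_2 = [-1,0]\times(0,1]$, $Q_3 = [-1,0]^2$ and $Q_4 = (0,1]\times[-1,0]$. Since $f(v)+[-1,1]^2 \supseteq Q_3$, non-adjacency of $v$ to every $w_i$ forces each $f(w_i) \notin Q_3$, so each $w_i$ lies in $Q_1\cup Q_2\cup Q_4$.

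Next I would observe that any two vertices in a common quadrant $Q_j$ with $j\in\{1,2,4\}$ lie inside an axis-parallel $1\times1$ box and hence are adjacent. As $w_1,w_2,w_3$ are pairwise non-adjacent, they occupy three distinct quadrants among $Q_1,Q_2,Q_4$. Since $\Aut(S_3)$ induces the full symmetric group on the outer triple $\{w_1,w_2,w_3\}$ (it is generated by the rotation $(w_1w_2w_3)(w_4w_5w_6)$ and a reflection fixing one outer vertex), I may relabel so that $w_1\in Q_2$, $w_2\in Q_1$ and $w_3\in Q_4$; the inner vertices are then relabelled accordingly.

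The quantitative core is the dichotomy: since $f(v)\in[-1,0]^2$ and $f(w)\in[-1,1]^2$, one has $vw\notin E(G)$ iff $f(w)_1 > f(v)_1+1$ (call $w$ \emph{right-far}) or $f(w)_2 > f(v)_2+1$ (call $w$ \emph{up-far}). A vertex in $Q_4$ has second coordinate $\le 0 \le f(v)_2+1$, hence cannot be up-far; so $w_3$ is right-far, and the edge $w_3w_6$ gives $f(w_6)_1 \ge f(w_3)_1-1 > f(v)_1$. Now $w_6$ is non-adjacent to $w_2\in Q_1$, so $w_6\notin Q_1$, and $w_6\notin Q_3$, hence $w_6\in Q_2\cup Q_4$. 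If $w_6\in Q_2$, then $f(w_6)_1\le 0$, and with $f(v)_1\le 0$ and $f(w_6)_1>f(v)_1$ this gives $|f(w_6)_1| < |f(v)_1|$, contradicting the choice of $v$. If $w_6\in Q_4$, then $w_6$ is right-far (again it cannot be up-far), so the edge $w_1w_6$ gives $f(w_1)_1 \ge f(w_6)_1-1 > f(v)_1$; since $w_1\in Q_2$ we have $f(w_1)_1\le 0$, and again $|f(w_1)_1| < |f(v)_1|$, the same contradiction.

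I do not expect a genuine obstacle: once the quadrant decomposition and the right-far/up-far dichotomy are set up the argument is elementary, and in fact shorter than the proof of Lemma~\ref{la:forbid-circle}. The only points needing care are the symmetry reduction — verifying that $\Aut(S_3)$ realizes every permutation of the outer independent triple, so the normalization $w_1\in Q_2$, $w_2\in Q_1$, $w_3\in Q_4$ is legitimate — and keeping all inequalities strict, so that the boundary case $f(v)_1 = 0$ (where $f(w_6)_1 > f(v)_1 = 0$ is incompatible with $f(w_6)_1 \le 0$) is also covered by the final comparison with the minimality of $|f(v)_1|$.
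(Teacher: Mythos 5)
Your proof is correct, and it takes a genuinely different route from the paper's. The paper follows the same template as Lemma \ref{la:forbid-circle}: it looks at the vertex of the $S_3$ with minimal second coordinate, invokes Remark \ref{re:induced-unit-interal-by-realization} (together with the fact that $S_3$ is not a unit interval graph) to force that vertex into the lower-right quadrant, and then splits into cases according to whether this lowest vertex is an inner or an outer vertex of the sun, with further sub-cases locating $w_4,w_5,w_6$ and a second application of the Remark at the end. You instead partition $[-1,1]^2$ into four unit quadrants around the (reflected) position of $v$, note that two vertices in a common quadrant are necessarily adjacent, so the independent outer triple must occupy the three quadrants other than the one containing $f(v)$; after normalizing the labels via $\Aut(S_3)$ (which indeed induces the full symmetric group on the outer triple, as you verify), a two-case chase on the single inner vertex $w_6$ produces a vertex whose first coordinate is strictly closer to $0$ than $f(v)_1$, contradicting the choice of $v$. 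Both arguments hinge on the same global choice $v=\argmin_{u}|f(u)_1|$ and the same kind of final contradiction, but yours avoids the unit-interval Remark and the argmin over second coordinates, and its case analysis is shorter. What the paper's version buys is uniformity: it reuses verbatim the template of Lemma \ref{la:forbid-circle}, whereas your argument is tailored to the specific structure of $S_3$ (three pairwise non-adjacent vertices pinned to three distinct quadrants plus the prescribed inner adjacencies) and would not transfer as directly to the cycle case. Two small points you rightly flag and that do hold up: the relabelling is legitimate because the rotation $(w_1w_2w_3)(w_4w_5w_6)$ and a reflection fixing an outer vertex are automorphisms of $S_3$, and the boundary case $f(v)_1=0$ causes no harm since the comparison $|f(w_6)_1|<|f(v)_1|$ (resp.\ $|f(w_1)_1|<|f(v)_1|$) is derived from strict inequalities.
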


\begin{proof}
 Let $v = \argmin_{v \in V(G)}|f(v)_1|$.
 Suppose towards a contradiction that $v \in X$.
 Then there are vertices $w_1,\dots,w_6 \in V(G)$, such that $vw_i \notin E(G)$ for all $i \in [k]$ and $G[w_1,\dots,w_6] \cong S_3$ as depicted in Figure \ref{fig:graph-s3}.
 Without loss of generality assume that $f(v) \in [-1,0] \times [-1,0]$.
 Let $i= \argmin_{i \in [k]} f(w_i)_2$.
 Since $G[w_1,\dots,w_6]$ is not a unit interval graph it holds that $f(w_i) \in [0,1] \times [-1,0]$, by Remark \ref{re:induced-unit-interal-by-realization}.
 
 First suppose that $i \in \{4,5,6\}$.
 Then there are $j_1, j_2 \in \{1,2,3\}$ with $j_1 \neq j_2$, $w_iw_{j_1},w_iw_{j_2} \in E(G)$ and $w_{j_1}w_{j_2} \notin E(G)$.
 Further $f(w_{j_s}) \in [-1,0] \times [0,1]$ for some $s \in [2]$.
 But then $f(v)_1 < f(w_{j_s})_1 < 0$ contradicting the definition of $v$.
 
 So consider $i \in \{1,2,3\}$.
 Without loss of generality assume $i = 1$.
 Using the same argument as before $f(w_4), f(w_6) \in [0,1] \times [-1,1]$.
 Assume $f(w_6)_1 \leq f(w_4)_1$.
 First suppose that $f(w_6) \in [0,1] \times [0,1]$.
 It holds that $f(w_2) \in [-1,1] \times [0,1]$ and $f(w_2)_1 + 1 \geq f(w_4)_1$.
 But then $\|f(w_2) - f(w_6)\|_{\infty} \leq 1$ contradicting $w_2w_6 \notin E(G)$.
 So $f(w_6) \in [0,1] \times [-1,0]$.
 Using again the same argument as before $f(w_3), f(w_5) \in [0,1] \times [-1,1]$.
 More precisely $f(w_3), f(w_5) \in [0,1] \times [0,1]$ since there is no edge to $w_1$.
 Since $w_3w_4 \notin E(G)$ it holds that $f(w_4) \in [0,1] \times [-1,0]$.
 So $f(w_2) \in [0,1] \times [-1,1]$.
 But this is now a contradiction to Remark \ref{re:induced-unit-interal-by-realization} since $S_3$ is not a unit interval graph.
\end{proof}

In order to prove the main partitioning result for neighborhood graphs we also require, that for sufficiently large $k$ the $k$-dimensional Weisfeiler-Leman algorithm identifies all interval graphs (cf. \cite{laubner10}).

\begin{corollary}
 \label{cor:wl-neighborhood}
 There is some $k \in \mathbb{N}$, such that for each neighborhood unit square graph the following holds:
 After performing $k$-dimensional Weisfeiler-Leman each equivalence class of vertices induces a graph which is co-bipartite or proper circular arc with at most four connected components.
\end{corollary}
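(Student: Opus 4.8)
The plan is to fix one absolute constant $k$ that works for all neighborhood unit square graphs at once, and then, for a fixed neighborhood unit square graph $G$ with realization $f\colon V(G)\to[-1,1]^2$, to run the $k$-dimensional Weisfeiler-Leman algorithm and analyze one equivalence class $Y$ at a time, observing throughout that $G[Y]$ is again a neighborhood unit square graph. The starting point is a crude but decisive bound: splitting $[-1,1]^2$ into the four closed quadrants $[-1,0]^2$, $[-1,0]\times[0,1]$, $[0,1]\times[-1,0]$, $[0,1]^2$, each of $L_\infty$-diameter $1$, shows by pigeonhole (as in Lemma~\ref{la:forbid-k15}) that every neighborhood unit square graph has independence number at most $4$. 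Hence $G[Y]$ has at most four connected components, and every induced cycle in $G[Y]$ has length at most $9$, since $C_\ell$ has independence number $\lfloor \ell/2\rfloor$. In particular the only forbidden configurations involving long cycles that we need, such as $K_1\cup C_{n+4}$ inside $G[Y]$, collapse to finitely many bounded ones. I would therefore choose $k$ large enough that $k$-dimensional Weisfeiler-Leman identifies all interval graphs (as needed in the sequel) and, in addition, is able to detect membership in, and the existence of, each of the finitely many relevant bounded subgraphs (induced copies of $C_\ell$ for $4\le\ell\le 9$, of $S_3$, $K_{1,3}$, \texttt{net}, $\overline{C_6}$, $\overline{T_2}$, together with an attached non-neighbor), so that each of these properties is constant on every equivalence class.

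Next I would assemble the structural facts about $G[Y]$. Since $Y$ is in particular a class of a stable partition, all vertices of $Y$ have the same degree inside $G[Y]$, so there are no $v,w\in Y$ with $N_{G[Y]}[v]\subsetneq N_{G[Y]}[w]$; this is hypothesis~(3) of Lemma~\ref{la:forbid-net}. Applying Lemma~\ref{la:unit-interval-neighborhood-existence} to $G[Y]$ yields a vertex of $Y$ whose closed neighborhood in $G[Y]$ is a unit interval graph; since $N_{G[Y]}[v]$ failing to be a unit interval graph is witnessed by one of the bounded configurations $C_{\le 9}$, $S_3$, $K_{1,3}$, \texttt{net} and hence is constant on $Y$, it follows that $G[Y][N_{G[Y]}[v]]$ is a unit interval graph for every $v\in Y$. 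Applying Lemma~\ref{la:forbid-circle} to $G[Y]$ gives a vertex of $Y$ that is not the isolated vertex of an induced $K_1\cup C_\ell$ in $G[Y]$; by the cycle-length bound and constancy on $Y$, no vertex of $Y$ is, so $G[Y]$ has no induced $K_1\cup C_{n+4}$ at all, and Lemma~\ref{la:forbid-s3} analogously rules out $K_1\cup S_3$. Now Lemma~\ref{la:forbid-net}, whose three hypotheses have just been verified, shows $G[Y]$ has no induced \texttt{net}, and Corollary~\ref{cor:forbidden-k23-3k2-t2} shows it has no induced $\overline{T_2}$.

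It remains to handle $\overline{C_6}$, which splits into two cases. If $G[Y]$ contains an induced $\overline{C_6}$, I would invoke Lemma~\ref{la:forbid-complement-c6}: its hypotheses (all closed neighborhoods are unit interval, and no induced $C_4\cup K_1$) have been established, so the set of vertices of $G[Y]$ lying in some induced $\overline{C_6}$ induces a co-bipartite graph; as this set is constant on $Y$ and nonempty, it equals $Y$, and $G[Y]$ is co-bipartite. Otherwise $G[Y]$ has no induced $\overline{C_6}$, and then $G[Y]$ satisfies all hypotheses of Proposition~\ref{prop:pca-forbidden-characterization}, so $G[Y]$ is a proper circular arc graph. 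In either case $G[Y]$ is co-bipartite or proper circular arc and has at most four connected components, which is the claim.

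The main obstacle, and the part that needs the most care, is making the reductions to bounded configurations fully rigorous: one must argue precisely that the properties that $N_{G[Y]}[v]$ is not a unit interval graph, that $v$ is the isolated vertex of an induced $K_1\cup C_\ell$ in $G[Y]$, and that $v$ lies in an induced $\overline{C_6}$ of $G[Y]$, are all invariant under $k$-dimensional Weisfeiler-Leman for a suitable absolute $k$. This is exactly where the independence-number bound does the real work, since it is what replaces arbitrarily long induced cycles by configurations of bounded size, together with the fact that high-dimensional Weisfeiler-Leman identifies interval graphs; once these invariances are in place, the remainder is just combining the forbidden-subgraph lemmas of this section with Proposition~\ref{prop:pca-forbidden-characterization}.
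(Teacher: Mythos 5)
Your proposal is correct and follows essentially the same route as the paper's proof: apply Lemmas \ref{la:unit-interval-neighborhood-existence}, \ref{la:forbid-circle} and \ref{la:forbid-s3} to the class $G[Y]$, use stability of the Weisfeiler-Leman colors to promote the resulting existential statements to all vertices of $Y$, rule out \texttt{net} via Lemma \ref{la:forbid-net} (using regularity of the class) and $\overline{T_2}$ via Corollary \ref{cor:forbidden-k23-3k2-t2}, and then split on the presence of an induced $\overline{C_6}$ to conclude co-bipartite (Lemma \ref{la:forbid-complement-c6}) or proper circular arc (Proposition \ref{prop:pca-forbidden-characterization}), with the component bound coming from the independence number being at most four. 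Your only addition is to make the ``bounded witness'' invariance argument explicit via the quadrant/pigeonhole bound on the independence number (whence induced cycles of bounded length), which the paper handles with the same idea but only in a parenthetical remark.
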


\begin{proof}
 Choose $k$ sufficiently large and let $X \subseteq V(G)$ be an equivalence class.
 Then $G[X]$ is a neighborhood unit square graph.
 By Lemma \ref{la:unit-interval-neighborhood-existence} there is some $v \in X$, such that $N_{G[X]}[v]$ induces a unit interval graph.
 Since $k$-dimensional Weisfeiler-Leman identifies all interval graphs this is true for all $v \in X$.
 From Lemma \ref{la:forbid-circle} it follows that there exists a vertex $v \in X$, such that every induced cycle contains at least one vertex being a neighbor of $v$.
 Again by stability of the set $X$ with respect to $k$-dimensional Weisfeiler-Leman this is true for all $v \in X$ (note that the maximal length of an induced cycle is at most $8$).
 So there is no induced subgraph isomorphic to $C_{n+4} \cup K_1$.
 Combining the same argument with Lemma \ref{la:forbid-s3} we get that $G[X]$ also has no induced subgraph isomorphic to $S_3 \cup K_1$.
 Since $G[X]$ is regular there are no vertices $v,w \in X$, such that $N_{G[X]}[v] \subsetneq N_{G[X]}[w]$.
 So we can apply Lemma \ref{la:forbid-net} and obtain that there is no induced subgraph isomorphic to \texttt{net}.
 Furthermore $G[X]$ has no induced subgraph isomorphic to $\overline{T_2}$ by Corollary \ref{cor:forbidden-k23-3k2-t2}.
 Now suppose there is an induced subgraph isomorphic to $\overline{C_6}$.
 Then, by stability, every vertex is part of an induced subgraph $\overline{C_6}$ and thus, $G[X]$ is co-bipartite by Lemma \ref{la:forbid-complement-c6}.
 Otherwise $G[X]$ is proper circular arc by Proposition \ref{prop:pca-forbidden-characterization}.
 The bound on the number of components follows from the fact that $K_{1,5}$ is not a unit square graph (cf.\ Lemma \ref{la:forbid-k15}).
\end{proof}

\begin{theorem}
 \label{thm:partition-neighborhood}
 Let $G$ be a neighborhood unit square graph.
 Then one can compute in polynomial time a canonical clique-partition $\mathcal{P}$ and a canonical colored graph $H$, such that
 \begin{enumerate}
  \item $\mathcal{P} = V(H)$,
  \item $H$ is $4$-circle-bounded,
  \item $\im(\varphi) \leq \Aut(H)$ where $\varphi\colon\Aut(G) \rightarrow \Sym(\mathcal{P})$ is the natural action of $\Aut(G)$ on $\mathcal{P}$.
 \end{enumerate}
\end{theorem}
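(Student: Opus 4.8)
The plan is to begin by running the $k$-dimensional Weisfeiler--Leman algorithm for the constant $k$ of Corollary~\ref{cor:wl-neighborhood}. This gives a canonical colouring of $V(G)$ into classes $X_1,\dots,X_\ell$, which I order canonically (say by their WL-colours), such that each $G[X_i]$ is co-bipartite or a proper circular arc graph with at most four connected components; moreover, since each $X_i$ is $k$-WL-stable, $G[X_i]$ and each of its connected components is regular. I will build, for every class $X_i$, a canonical clique-partition $\mathcal P_i$ of $X_i$ together with a canonical graph $H_i$ with vertex set $\mathcal P_i$, then set $\mathcal P=\bigcup_i\mathcal P_i$ and $H=\bigsqcup_i H_i$, giving every vertex of $H_i$ the single colour $i$ and placing no edge between vertices coming from different classes. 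With this layout the only nonempty graph $H_{i,Y}$ appearing in the definition of $4$-circle-boundedness is $H_{i,\emptyset}=H_i$, so it suffices to make each $H_i$ a disjoint union of at most four connected graphs of maximum degree two; and since every $\gamma\in\Aut(G)$ fixes the WL-classes setwise, it suffices that the permutation of $\mathcal P_i$ induced by $\gamma|_{X_i}\in\Aut(G[X_i])$ lie in $\Aut(H_i)$ for each $i$.

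It remains to define the pair $(\mathcal P_i,H_i)$, which I do by processing the connected components of $G[X_i]$ separately (and, when $G[X_i]$ is co-bipartite, by applying the same recipe to $Y:=G[X_i]$ itself). Consider such a $Y$. If $Y$ is a clique it becomes a single part and contributes one isolated vertex to $H_i$. If $Y$ is co-bipartite but not a clique, then $Y$ is regular with no universal vertex (regular plus universal forces completeness), so $\overline{G[Y]}$ has no isolated vertex; by Corollary~\ref{cor:forbidden-k23-3k2-t2} the graph $Y$ has no induced $\overline{3K_2}$, and picking a nonedge of $Y$ inside each connected component of $\overline{G[Y]}$ shows that three or more such components would induce an $\overline{3K_2}$, so $\overline{G[Y]}$ has at most two components $Z_1,\dots,Z_q$ with $q\le 2$. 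Each $\overline{G[Y]}[Z_p]$ is connected bipartite, hence has a bipartition $Z_p=Z_p^1\sqcup Z_p^2$ that is canonical up to swapping its two sides, and $Z_p^1,Z_p^2$ are cliques of $G$; these at most four cliques are the parts from $Y$, and the contribution of $Y$ to $H_i$ is the single edge $Z_1^1Z_1^2$ when $q=1$ and the complete bipartite graph $K_{2,2}$ with sides $\{Z_1^1,Z_1^2\}$ and $\{Z_2^1,Z_2^2\}$ when $q=2$ — in either case one connected graph of maximum degree two, defined canonically, whose automorphism group ($S_2$ resp.\ $S_2\wr S_2$) contains the action induced by $\Aut(G[Y])$, which only permutes the at most two components of $\overline{G[Y]}$ and swaps or fixes the sides of each bipartition. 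Finally, if $Y$ is neither a clique nor co-bipartite, then $Y$ is connected, has no universal vertex, and $\overline{G[Y]}$ is not bipartite, so Proposition~\ref{prop:pca-circular-order} applies and yields a canonical connected graph $H_Y$ of maximum degree two on the connected-twin classes of $Y$ (which are cliques of $G$); these twin classes together with $H_Y$ are the contribution of $Y$. In all three cases a component contributes exactly one connected graph of maximum degree two, so $H_i$ is the disjoint union of at most four such graphs — one per connected component of $G[X_i]$ in the proper circular arc case, and a single one in the co-bipartite case.

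With the construction in place the three properties follow readily. One has $\mathcal P=V(H)$ by definition, and $\mathcal P$ is a clique-partition because each part (a bipartition side, a connected-twin class, or a whole complete component/class) is a clique of $G$, and the parts partition $V(G)$ (the WL-classes do, and within each class the parts do). The graph $H$ is $4$-circle-bounded by the first paragraph together with the per-component bound just established. It is canonical, since each of its ingredients — the $k$-WL colouring, connected components, bipartitions of connected bipartite graphs, and the graphs supplied by Proposition~\ref{prop:pca-circular-order} — is isomorphism-invariant. And $\im(\varphi)\le\Aut(H)$: a $\gamma\in\Aut(G)$ fixes each $X_i$ setwise and maps each component $Y$ of $G[X_i]$ to a component $Y'$ by a graph isomorphism, which by canonicity of the per-component construction induces an isomorphism of the pieces $H_Y\to H_{Y'}$ respecting colours; hence $\varphi(\gamma)$ restricts to an automorphism of each $H_i$ and therefore of $H$. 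All steps run in polynomial time (including Proposition~\ref{prop:pca-circular-order}), so the desired data can be computed in polynomial time.

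The step I expect to be the main obstacle is precisely the bookkeeping that makes $H$ \emph{$4$}-circle-bounded rather than merely $t$-circle-bounded for some larger — or, if handled carelessly, unbounded — $t$: every connected component of $G[X_i]$ must be collapsed into a \emph{single} connected graph of maximum degree two. For co-bipartite components this is possible only because $\overline{3K_2}$-freeness forces the complement to have at most two components, and because one must use the connected graph $K_{2,2}$ on the four bipartition sides instead of two disjoint ``partner'' edges — the naive choice would let the number of pieces grow with the number of components and break the bound of four. One then has to check, as sketched above, that this more economical packaging does not shrink $\Aut(H_i)$ below the image of $\varphi$.
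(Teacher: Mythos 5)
Your proposal is correct and follows essentially the same route as the paper: run constant-dimensional Weisfeiler--Leman as in Corollary~\ref{cor:wl-neighborhood}, then handle each colour class either via the bipartitions of the (at most two, by $\overline{3K_2}$-freeness from Corollary~\ref{cor:forbidden-k23-3k2-t2}) components of the complement in the co-bipartite case, or via Proposition~\ref{prop:pca-circular-order} on connected-twin classes in the proper circular arc case. The only deviations are minor: your explicit treatment of components of a proper circular arc class that happen to be cliques or co-bipartite (where the hypothesis of Proposition~\ref{prop:pca-circular-order} fails) is a point the paper glosses over, while your $K_{2,2}$ packaging of the four bipartition sides is harmless but unnecessary --- the ``naive'' choice of two disjoint edges (plus possibly an isolated part), which is what the paper uses, already yields at most three connected pieces per class and so does not threaten the bound of four.
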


\begin{proof}
 Choose $k$ according to Corollary \ref{cor:wl-neighborhood} and let $X \subseteq V(G)$ be an equivalence class after performing $k$-dimensional Weisfeiler-Leman.
 Further let $c$ be the color of the equivalence class.
 First suppose $G[X]$ is co-bipartite.
 Let $t$ be the number of non-trivial connected components of $\overline{G[X]}$.
 Then $t \leq 2$ by Corollary \ref{cor:forbidden-k23-3k2-t2}.
 Let $Y_{j,1},Y_{j,2}$ be the unique bipartition of the $j$-th connected component, $j \in [t]$.
 Further let $Y$ be the set of isolated vertices in $\overline{G[X]}$ and $\mathcal{Y} = \{Y\}$ if $Y \neq \emptyset$ and $\mathcal{Y} = \emptyset$ otherwise.
 Define $\mathcal{P}_{X} = \{Y_{j,j'} \mid j \in [t], j' \in [2] \} \cup \mathcal{Y}$.
 Further let $H_{X} = \{\mathcal{P}_X,E(H_X),c_X\}$ with $YZ \in E(H_X)$ if there are $v \in Y, w \in Z$ with $vw \in E\left(\overline{G[X]}\right)$ and $c_X(Y) = c$.
 
 Otherwise $G[X]$ is proper circular arc according to Corollary \ref{cor:wl-neighborhood}.
 Let $X_1,\dots,X_t$ be the connected components of $G[X]$.
 Then $t \leq 4$ by Corollary \ref{cor:wl-neighborhood}.
 Let $i \in [t]$ and let $\mathcal{P}_{X,i}$ be the partition containing the equivalence classes of the connected twin relation for $G[X_i]$.
 Further let $H_{X,i}$ be the graph computed by Proposition \ref{prop:pca-circular-order} where each vertex is colored by $c$.
 Define $\mathcal{P}_X = \bigcup_{i \in [t]} \mathcal{P}_{X,i}$ and $H_X = \bigcup_{i \in [t]} H_{X,i}$.
 
 Finally let $\mathcal{P} = \bigcup_{X} \mathcal{P}_{X}$ and $H = \bigcup_{X} H_{X}$.
 It can easily be checked that $\mathcal{P}$ and $H$ have the desired properties.
\end{proof}

\subsection{Clique neighborhoods graphs}

Remember, that our goal is to compute a canonical clique-partition of a given unit square graph with singleton vertex $v_0$.
We first group the vertices according to their distance to $v_0$.
Then, for the first level of vertices which are all the neighbors of $v_0$, we use the previous theorem to compute a canonical clique-partition.
For all other levels we want to build up on the partition computed in the previous level.
More precisely, for a given clique in the partition of the previous level we want to partition its neighbors in the current level.
Hence, we need to consider neighborhoods of cliques.

Let $G$ be a colored unit square graph and let $X \subseteq V(G)$ be a clique, such that $V(G) = N[X] = \bigcup_{v \in X} N[v]$.
Further suppose there is some color $i$, such that $X = V_i(G)$, and there is some $k \in [|X|]$, such that $|N[v] \cap X| = k$ for all $v \in V(G) \setminus X$.
In this case $G$ is called a \emph{simple clique neighborhood graph with respect to $X$}.
In this subsection we aim to analyze the structure of $G[V \setminus X]$ and to obtain results similar to Theorem \ref{thm:partition-neighborhood}.
The most simple case occurs if $k = |X|$.
Then $G$ is a neighborhood unit square graph and thus, it can be handled by Theorem \ref{thm:partition-neighborhood}.
Suppose $k < |X|$.
Let $f:V(G) \rightarrow \mathbb{R}^{2}$ be a realization for $G$.
Without loss of generality assume that $f(x) \in [-\frac{1}{2},\frac{1}{2}]^{2}$ for all $x \in X$.
Then $f(v) \in [-\frac{3}{2},\frac{3}{2}]^{2} \setminus [-\frac{1}{2},\frac{1}{2}]^{2}$ for all $v \in V(G) \setminus X$.
Let $Q = \{(q_1,q_2) \in \{-1,0,1\}^{2} \mid |q_1| + |q_2| = 1\}$.
For $q \in Q$ let $P_q = [-\frac{1}{2} + q_1,\frac{1}{2} + q_1] \times [-\frac{1}{2} + q_2,\frac{1}{2} + q_2]$.

\begin{lemma}
 \label{la:clique-neighborhood-equivalent-neighbors}
 For $q \in Q$ consider $v,w \in V(G)$ with $f(v),f(w) \in P_q$. Then $N[v] \cap X = N[w] \cap X$.
\end{lemma}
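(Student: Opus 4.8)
The plan is to work with the fixed realization $f$ from the set-up (so $f(x)\in[-\frac12,\frac12]^2$ for $x\in X$, $f(v)\in[-\frac32,\frac32]^2\setminus[-\frac12,\frac12]^2$ for $v\in V(G)\setminus X$) and first reduce to a single direction. A rotation of the plane about the origin by a multiple of $90^\circ$ preserves $\|\cdot\|_\infty$, maps $[-\frac12,\frac12]^2$ onto itself, and permutes the four squares $P_q$, $q\in Q$. Applying such a rotation to $f$ therefore yields another realization of the same graph $G$ with the same normalization but with $P_q$ carried to $P_{q'}$. Hence it suffices to prove the claim for one $q$, say $q=(0,-1)$, i.e.\ $P_q=[-\frac12,\frac12]\times[-\frac32,-\frac12]$. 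We may also assume $v,w\in V(G)\setminus X$, which is the situation relevant for the sequel.

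The key step is to observe that, for a vertex $u$ with $f(u)\in P_q$, adjacency to $X$ is controlled by a single coordinate. For every $x\in X$ we have $f(u)_1,f(x)_1\in[-\frac12,\frac12]$, so $|f(u)_1-f(x)_1|\le 1$ holds automatically; and $f(u)_2\le-\frac12\le f(x)_2$, so $|f(u)_2-f(x)_2|=f(x)_2-f(u)_2$. Thus $\|f(u)-f(x)\|_\infty\le 1$ if and only if $f(x)_2\le f(u)_2+1$, which gives the identity
\[
 N[u]\cap X=\{\,x\in X\mid f(x)_2\le f(u)_2+1\,\}.
\]
In particular the set $N[u]\cap X$ depends monotonically on $f(u)_2$: if $f(v)_2\le f(w)_2$ then $N[v]\cap X\subseteq N[w]\cap X$.

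To finish, take $v,w\in V(G)\setminus X$ with $f(v),f(w)\in P_q$ and, after possibly swapping them, assume $f(v)_2\le f(w)_2$, so that $N[v]\cap X\subseteq N[w]\cap X$ by the displayed identity. Since $G$ is a simple clique neighborhood graph with respect to $X$, both $|N[v]\cap X|$ and $|N[w]\cap X|$ equal $k$; a subset of a finite set having the same cardinality as that set must equal it, so $N[v]\cap X=N[w]\cap X$, as desired. The only substantive point is the reduction that eliminates the first coordinate: once one sees that within a fixed $P_q$ the sets $N[\cdot]\cap X$ form a chain ordered by the second coordinate of the realization, the constancy of $|N[\cdot]\cap X|=k$ collapses the chain to a single set, and I do not expect any genuine obstacle beyond this observation.
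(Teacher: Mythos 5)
Your proof is correct and follows essentially the same route as the paper's: reduce to $q=(0,-1)$, observe that the first coordinates lie in $[-\frac12,\frac12]$ so adjacency to $x\in X$ is decided by the second coordinate, deduce the one-sided inclusion $N[v]\cap X\subseteq N[w]\cap X$ when $f(v)_2\le f(w)_2$, and conclude equality from $|N[v]\cap X|=|N[w]\cap X|=k$. Your explicit rotation argument and the restriction to $v,w\in V(G)\setminus X$ only make precise what the paper treats as an implicit ``without loss of generality.''
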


\begin{proof}
 Without loss of generality consider $q = (0,-1)$ an further assume that $f(v)_2 \leq f(w)_2$.
 Let $x \in N[v] \cap X$.
 Then $f(v)_2 \leq f(x)_2 \leq f(v)_2 + 1$. So $|f(w)_2 - f(x)_2| \leq 1$.
 Further $f(w)_1,f(x)_1 \in [-\frac{1}{2},\frac{1}{2}]$.
 Hence, $x \in N[w] \cap X$.
 So $N[v] \cap X \subseteq N[w] \cap X$ and $|N[v] \cap X| = |N[w] \cap X|$, which implies $N[v] \cap X = N[w] \cap X$.
\end{proof}

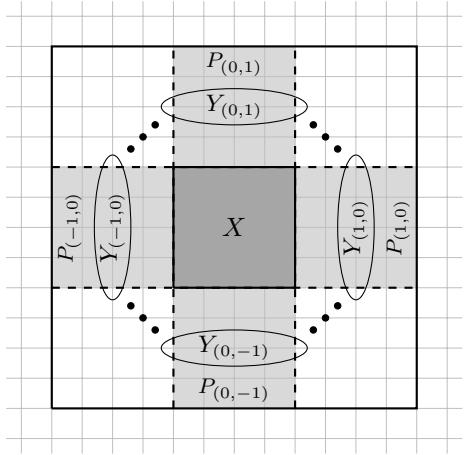
\begin{figure}
 \centering
 \begin{tikzpicture}[scale=0.8]
  \draw[fill, gray!30] (-1,1) rectangle (1,3);
  \draw[fill, gray!30] (-1,-1) rectangle (1,-3);
  \draw[fill, gray!30] (-1,1) rectangle (-3,-1);
  \draw[fill, gray!30] (1,-1) rectangle (3,1);
  \draw[step=0.5, gray!50, very thin] (-3.75,-3.75) grid (3.75,3.75);
  \draw[fill, gray!70] (-1,-1) rectangle (1,1);
  
  \draw[dashed, thick] (-3,1) -- (3,1);
  \draw[dashed, thick] (-3,-1) -- (3,-1);
  \draw[dashed, thick] (1,-3) -- (1,3);
  \draw[dashed, thick] (-1,-3) -- (-1,3);
  
  \draw[thick] (-3,-3) -- (-3,3) -- (3,3) -- (3,-3) -- (-3,-3);
  \draw[thick] (-1,-1) -- (-1,1) -- (1,1) -- (1,-1) -- (-1,-1);
  
  \draw (0,2) ellipse (1.2 and 0.3);
  \draw (0,-2) ellipse (1.2 and 0.3);
  \draw (2,0) ellipse (0.3 and 1.2);
  \draw (-2,0) ellipse (0.3 and 1.2);
  
  \draw[fill] (1.3,1.7) circle (1.5pt);
  \draw[fill] (1.5,1.5) circle (1.5pt);
  \draw[fill] (1.7,1.3) circle (1.5pt);
  
  \draw[fill] (1.3,-1.7) circle (1.5pt);
  \draw[fill] (1.5,-1.5) circle (1.5pt);
  \draw[fill] (1.7,-1.3) circle (1.5pt);
  
  \draw[fill] (-1.3,1.7) circle (1.5pt);
  \draw[fill] (-1.5,1.5) circle (1.5pt);
  \draw[fill] (-1.7,1.3) circle (1.5pt);
  
  \draw[fill] (-1.3,-1.7) circle (1.5pt);
  \draw[fill] (-1.5,-1.5) circle (1.5pt);
  \draw[fill] (-1.7,-1.3) circle (1.5pt);

  \node at (0,2) {{\small $Y_{(0,1)}$}};
  \node at (0,-2) {{\small $Y_{(0,-1)}$}};
  \node[rotate=90] at (2,0) {{\small $Y_{(1,0)}$}};
  \node[rotate=90] at (-2,0) {{\small $Y_{(-1,0)}$}};
  
  \node at (0,2.7) {{\footnotesize $P_{(0,1)}$}};
  \node at (0,-2.7) {{\footnotesize $P_{(0,-1)}$}};
  \node[rotate=90] at (2.7,0) {{\footnotesize $P_{(1,0)}$}};
  \node[rotate=90] at (-2.7,0) {{\footnotesize $P_{(-1,0)}$}};
  
  \node at (0,0) {$X$};
 \end{tikzpicture}
 \caption{Simple clique neighborhood graph with respect to $X$}
 \label{fig:neighborhood-clique}
\end{figure}

For $v,w \in V(G) \setminus X$ define $v \sim_{\texttt{snh}} w$ if $N[v] \cap X = N[w] \cap X$.
Let $\mathcal{P}_{\texttt{snh}}$ be the corresponding partition into equivalence classes.
Further for $q \in Q$ let $Y_q \in \mathcal{P}_{\texttt{snh}}$, such that $f(v) \in P_q$ for some $v \in Y_q$.
If there is no $v \in V(G)$ with $f(v) \in P_q$ then we let $Y_q = \emptyset$.

\begin{lemma}
 \label{la:clique-neighborhood-corner-structure}
 Let $Z \in \mathcal{P}_{\texttt{snh}}$, such that $G[Z]$ is not a disjoint union of cliques.
 Then there is some $q \in Q$, such that $Z = Y_q$.
\end{lemma}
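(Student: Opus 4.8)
The plan is to prove the contrapositive: assuming $Z \neq Y_q$ for every $q \in Q$, I would show that $G[Z]$ is a disjoint union of cliques. The first step is to observe that then no vertex of $Z$ has its $f$-image in any of the four side squares $P_q$. Indeed, by Lemma~\ref{la:clique-neighborhood-equivalent-neighbors} all vertices with $f$-image in a fixed $P_q$ have the same neighbourhood in $X$, so $Y_q$ is well-defined as their common $\sim_{\texttt{snh}}$-class; if some $v \in Z$ had $f(v) \in P_q$, then $Y_q$ would be the class of $v$, that is $Z$, contradicting $Z \neq Y_q$.

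The second step sets up the geometry. Recall that $f(v) \in [-\tfrac32,\tfrac32]^2 \setminus [-\tfrac12,\tfrac12]^2$ for every $v \in V(G) \setminus X$, in particular for every $v \in Z$. Writing $I_+ = [\tfrac12,\tfrac32]$ and $I_- = [-\tfrac32,-\tfrac12]$, this \emph{frame} region is covered by the four side squares $P_q$ together with the four \emph{corner squares} $C_{\sigma\tau} := I_\sigma \times I_\tau$ for $\sigma,\tau \in \{+,-\}$, and the four corner squares are pairwise disjoint. Hence every vertex of $Z$ has its $f$-image in exactly one corner square, which partitions $Z$ into (at most) four parts $Z_{\sigma\tau}$. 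Moreover, since no vertex of $Z$ lies in any $P_q$, inspecting the intersections $C_{\sigma\tau} \cap P_q$ shows that for $v \in Z_{\sigma\tau}$ the first coordinate of $f(v)$ is $>\tfrac12$ when $\sigma = +$ and $<-\tfrac12$ when $\sigma = -$, and symmetrically for the second coordinate; this is the key numerical fact.

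The third step finishes the argument. Each $Z_{\sigma\tau}$ is a clique, since the corner square $C_{\sigma\tau}$ has $L_\infty$-diameter at most $1$, so any two vertices with $f$-image in it are adjacent. And there is no edge between two distinct parts $Z_{\sigma\tau}$ and $Z_{\sigma'\tau'}$: they differ in at least one sign, say $\sigma \neq \sigma'$, and then for $v \in Z_{\sigma\tau}$ and $w \in Z_{\sigma'\tau'}$ the first coordinates of $f(v)$ and $f(w)$ lie strictly on opposite sides of $[-\tfrac12,\tfrac12]$, so $|f(v)_1 - f(w)_1| > 1$ and $vw \notin E(G)$. Therefore $G[Z]$ is the disjoint union of the at most four cliques $Z_{\sigma\tau}$, contradicting the hypothesis that $G[Z]$ is not a disjoint union of cliques.

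The part I expect to require the most care is the bookkeeping in the second step: one must track precisely which boundary segments of each corner square $C_{\sigma\tau}$ are absorbed into which side square $P_q$, both in order to conclude that a $P_q$-avoiding vertex sits strictly in the \emph{outer} part of its corner and in order to conclude that two distinct corners are then separated by more than $1$ in a suitable coordinate. This is routine, but easy to get wrong at the boundaries.
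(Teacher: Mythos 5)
Your proposal is correct and follows essentially the same route as the paper: prove the contrapositive, use Lemma \ref{la:clique-neighborhood-equivalent-neighbors} to conclude that no vertex of $Z$ maps into any side square $P_q$, and then observe that the remaining vertices fall into the four corner regions, which yields adjacency within a corner and (strict) non-adjacency across corners. Your version merely spells out the boundary bookkeeping that the paper leaves implicit in its sets $P'_{q^v}$.
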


\begin{proof}
 Let $Z \in \mathcal{P}_{\texttt{snh}}$, such that $Z \neq Y_q$ for all $q \in Q$.
 Then $f(v) \notin P_q$ for all $q \in Q$, $v \in Z$ by Lemma \ref{la:clique-neighborhood-equivalent-neighbors}.
 In other words, for each $v \in Z$ there is some $q^{v} \in \{-1,1\}^{2}$ with $f(v) \in P_{q^{v}}' = [-\frac{1}{2} + q_1^{v},\frac{1}{2} + q_1^{v}] \times [-\frac{1}{2} + q_2^{v},\frac{1}{2} + q_2^{v}] \setminus (\bigcup_{q \in Q} P_q)$.
 This means $vw \in E(G)$ if and only if $q^{v} = q^{w}$ for all $v,w \in Z$.
 Hence, $G[Z]$ is a disjoint union of cliques.
\end{proof}

Let $\mathcal{P}_{\texttt{snh}}^{*} = \{Z \in \mathcal{P}_{\texttt{snh}} \mid G[Z] \text{ is a disjoint union of cliques}\}$.
Further let $\mathcal{P}_{\texttt{clique}}^{*} = \{Z \subseteq V(G) \setminus X \mid \exists Z' \in \mathcal{P}_{\texttt{snh}}^{*} \colon Z \text{ is a connected component in } G[Z']\}$.
Define the graph $G_{\texttt{clique}} = (\mathcal{P}_{\texttt{clique}}^{*}, E(G_{\texttt{clique}}))$ with
\begin{equation*}
 E(G_{\texttt{clique}}) = \{YZ \mid \forall v \in Y, w \in Z \colon vw \in E(G)\}.
\end{equation*}
Further let $V^{*} = \bigcup_{X \in \mathcal{P}_{\texttt{clique}}^{*}} X$.

\begin{lemma}
 \label{la:g-clique}
 If $G[V^{*}]$ is connected then $G_{\texttt{clique}}$ is a proper circular arc graph with at most eight connected components.
\end{lemma}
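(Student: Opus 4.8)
The plan is to place $V(G)\setminus X$ in the annulus around the clique $X$ and to read the circular-arc structure of $G_{\texttt{clique}}$ off this picture, translating back through the realization $f$ whenever a forbidden configuration has to be ruled out. Fix $f$ with $f(x)\in[-\tfrac12,\tfrac12]^2$ for every $x\in X$; then $f(v)\in A:=[-\tfrac32,\tfrac32]^2\setminus(-\tfrac12,\tfrac12)^2$ for every $v\in V(G)\setminus X$, and $A$ is covered by eight closed unit squares $R_1,\dots,R_8$ in cyclic order: the four side squares $P_q$, $q\in Q$, interleaved with the four corner squares such as $[\tfrac12,\tfrac32]^2$. Two facts drive everything: (a) any two vertices with $f$-images in a common $R_j$ are adjacent, since $R_j$ has $\ell_\infty$-diameter $1$; and (b) for $Y\in\mathcal P_{\texttt{clique}}^{*}$ the set $R_Y:=\bigcap_{y\in Y}\{p\in\mathbb R^2:\|p-f(y)\|_\infty\le1\}$ is an axis-parallel rectangle containing $f(Y)$, and every clique of $\mathcal P_{\texttt{clique}}^{*}$ adjacent to $Y$ in $G_{\texttt{clique}}$ has its $f$-image inside $R_Y$.

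First I would pin down where the cliques of $\mathcal P_{\texttt{clique}}^{*}$ sit. By Lemma~\ref{la:clique-neighborhood-corner-structure} the only $\sim_{\texttt{snh}}$-classes that fail to be disjoint unions of cliques are among the $Y_q$, and its proof shows every other class lies in the corner squares with adjacency governed by the corner; combined with Lemma~\ref{la:clique-neighborhood-equivalent-neighbors} this confines each $Y\in\mathcal P_{\texttt{clique}}^{*}$ to a union of at most three cyclically consecutive squares. Using in addition that $Y$ is a whole connected component of $G[Z']$ for its $\sim_{\texttt{snh}}$-class $Z'$ and that the common value of $N[v]\cap X$ is fixed on $Y$, one locates the cliques of $N_{G_{\texttt{clique}}}[Y]$ inside a rectangle that is thin (width $\le1$) in at least one coordinate; Remark~\ref{re:induced-unit-interal-by-realization} then makes the graph induced on those vertices a unit interval graph, from which one reads off that $N_{G_{\texttt{clique}}}[Y]$ is a unit interval graph as well. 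This gives the first hypothesis of Proposition~\ref{prop:pca-forbidden-characterization}.

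It remains to forbid induced $K_1\cup C_{n+4}$, $K_1\cup S_3$, $\overline{T_2}$, $\overline{C_6}$ and \texttt{net} in $G_{\texttt{clique}}$. For a hypothetical copy on cliques $Y_1,\dots,Y_m$ I would choose representatives $v_i\in Y_i$ realizing the pattern in $G$: every edge of $G_{\texttt{clique}}$ is a complete join, so any choice works there, while the few non-edges present in each of these small patterns can be witnessed simultaneously by choosing the $v_i$ extremally along the cyclic direction. The induced subgraph of $G$ so obtained contradicts either the configurations already excluded for unit square graphs --- no $\overline{T_2}$ or $K_{2,3}$ by Corollary~\ref{cor:forbidden-k23-3k2-t2} --- or, for $\overline{C_6}$, $K_1\cup C_{n+4}$, $K_1\cup S_3$ and \texttt{net}, the geometric estimates of Section~\ref{sec:neighborhoods} applied inside the thin strips $R_Y$. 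Equivalently, one may build the cycle $H$ of Proposition~\ref{prop:pca-by-cycle} by ordering the cliques along the cyclic direction. Finally, the vertex sets of the connected components of $G_{\texttt{clique}}$ partition $V^{*}$ into pieces that are connected in $G$, each a union of cliques living in at most three consecutive squares; a short case analysis of how such pieces can be distributed among the eight squares --- using (a), which makes any two cliques lying in one square mutually joined --- shows there are at most eight of them.

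The step I expect to be the main obstacle is the simultaneous realization of the non-edges of a forbidden pattern by genuine non-edges of $G$: a non-edge of $G_{\texttt{clique}}$ provides only \emph{some} non-adjacent pair between the two cliques, not a canonical one, so one must argue that extremal representatives can be chosen consistently across all the non-adjacent pairs of a fixed small pattern --- and this is exactly where the annular/cyclic arrangement and the thinness of the rectangles $R_Y$ are needed. (The alternative route through Proposition~\ref{prop:pca-by-cycle} has the same difficulty hidden in its containment/endpoint-sharing condition, which amounts to the properness of the circular-arc representation.)
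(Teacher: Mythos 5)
Your geometric picture (the annulus around $X$ covered by eight unit squares in cyclic order) is exactly the paper's intuition, but the route you choose --- verifying the hypotheses of Proposition~\ref{prop:pca-forbidden-characterization} for $G_{\texttt{clique}}$ --- is not carried out, and the two steps it needs are genuinely missing. First, the unit-interval hypothesis: your claim that the cliques of $N_{G_{\texttt{clique}}}[Y]$ sit in a rectangle of width at most $1$ in some coordinate is false in general. For $Y$ a small clique in a side square, say near $(0,-\tfrac{3}{5})$, the cliques completely joined to $Y$ may lie anywhere in $[-1,1]\times[-\tfrac{3}{2},\tfrac{2}{5}]$ intersected with the annulus --- an arc spanning both bottom corners and both horizontal side squares, of width nearly $2$ in each coordinate --- so Remark~\ref{re:induced-unit-interal-by-realization} does not apply; and even where it did, passing from the vertex-level graph to the clique-level graph $N_{G_{\texttt{clique}}}[Y]$ (whose edges are complete joins, not arbitrary edges of $G$) needs its own argument. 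Second, the forbidden-pattern step is precisely the obstacle you flag yourself as unresolved: a non-edge of $G_{\texttt{clique}}$ certifies only that \emph{some} pair between the two cliques is non-adjacent, and for patterns with several non-edges ($K_1\cup C_{n+4}$, $K_1\cup S_3$, \texttt{net}) no scheme for choosing representatives consistently is given. That is the heart of the proof, not a detail, so as it stands the argument has a real gap.

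The paper avoids both issues by never using the forbidden-subgraph characterization for $G_{\texttt{clique}}$. It proves the lemma via Proposition~\ref{prop:pca-by-cycle}: it explicitly builds a cycle $H$ on the pieces $\mathcal{V}$ obtained by intersecting the cliques of $\mathcal{P}_{\texttt{clique}}^{*}$ with the eight regions (inserting placeholders for empty regions), establishes by direct coordinate comparisons a series of monotonicity and merging claims along sides and corners (its Claims~1--5), deduces that every $Z\in\mathcal{P}_{\texttt{clique}}^{*}$ and every closed $G_{\texttt{clique}}$-neighborhood corresponds to a connected subpath of $H$ (Claims~6--7), and verifies the endpoint-sharing condition of Proposition~\ref{prop:pca-by-cycle} by a short case analysis on where the containing neighborhood can end (Claim~8); contrary to your parenthetical remark, that check is done by comparing positions in the realization and does not reduce to the representative-choice problem. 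The eight-component bound is then immediate: each of the four side squares $P_q$ meets at most one clique of $\mathcal{P}_{\texttt{clique}}^{*}$ (by Lemma~\ref{la:clique-neighborhood-equivalent-neighbors}), and the remaining vertices lie in the four corner squares, each of diameter one and hence contributing at most one further clique. Your one-sentence mention of the cycle route points in the right direction, but none of the substance (the ordering claims, the connectivity of neighborhoods in $H$, the endpoint condition) is supplied, so the proposal does not yet constitute a proof.
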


\begin{proof}
 Suppose $G[V^{*}]$ is connected.
 For $q \in \{-1,1\}$ let $P_q = [-\frac{1}{2} + q_1,\frac{1}{2} + q_1] \times [-\frac{1}{2} + q_2,\frac{1}{2} + q_2] \setminus (\bigcup_{q \in Q} P_q)$.
 Let $V^{*} = \bigcup_{X \in \mathcal{P}_{\texttt{clique}}^{*}} X$.
 Let $V_q = \{v \in V^{*} \mid f(v) \in P_q\}$.
 Further let $\sim_{\texttt{clique}}$ be the equivalence relation that corresponds to $\mathcal{P}_{\texttt{clique}}^{*}$.
 \vspace*{10pt}\\
 \textit{Claim 1:} Let $t \in \{-1,1\}$ and let $p,q \in \{-1,0,1\} \times \{t\}$.
 Further let $u \in V_q, w \in V_p$ with $u \sim_{\texttt{snh}} w$.
 Then $u \sim_{\texttt{snh}} v$ for all $v \in V^{*}$ with $f(v) \in [f(u)_1,f(w)_1] \times [\frac{3}{2}t,\frac{1}{2}t]$.
 \vspace*{5pt}\\
 Without loss of generality assume $t=-1$.
 It holds that $N[u] \cap X = N[w] \cap X = N[u] \cap N[w] \cap X$.
 First consider the case that $|f(v)_2| \leq \max\{|f(u)_2|,|f(w)_2|\}$.
 Let $x \in N[u] \cap X$.
 Then $|f(u)_1 - f(x)_1| \leq 1$ and $|f(w)_1 - f(x)_1| \leq 1$, so $|f(v)_1 - f(x)_1| \leq 1$.
 Further $|f(v)_2 - f(x)_2| = |f(v)_2| - |f(x)_2| \leq \max\{|f(u)_2|,|f(w)_2|\} - |f(x)_2| = \max\{|f(u)_2 - f(x)_2|,|f(w)_2 - f(x)_2|\} \leq 1$.
 Hence, $x \in N[v] \cap X$ and $N[u] \cap X = N[v] \cap X$.
 So consider the other case, namely $|f(v)_2| \geq \max\{|f(u)_2|,|f(w)_2|\}$.
 Let $x \in N[v] \cap X$.
 Then $|f(u')_2 - f(x)_2| = |f(u')_2| - |f(x)_2| \leq |f(v)_2| - |f(x)_2| = |f(v)_2 - f(x)_2| \leq 1$ for $u' \in \{u,w\}$.
 Further there exists $u' \in \{u,w\}$ with $|f(u')_1 - f(x)_1| \leq |f(v)_1 - f(x)_1| \leq 1$ and thus, $x \in N[u'] \cap X$.
 So $N[v] \cap X = N[u] \cap X = N[w] \cap X$.
 \vspace*{10pt}\\
 Clearly the same statement also holds if the value $t$ would be in the first coordinate.
 Let $q \in \{-1,0,1\}^{2}$, $q \neq (0,0)$. For $v,w \in V_q$ define $v < w$ if $f(v)_1 < f(w)_1$.
 It follows from Lemma \ref{la:clique-neighborhood-corner-structure} and Claim 1 that for each $Z \in \mathcal{P}_{\texttt{clique}}^{*}$ the set $Z \cap V_q$ consists of consecutive elements with respect to $<$.
 For $q \in \{-1,0,1\}$, $q \neq (0,0)$, let
 \begin{equation*}
  \mathcal{V}_q = \begin{cases}
                   \{Z \cap V_q \mid Z \in \mathcal{P}_{\texttt{clique}}^{*},\;Z \cap V_q \neq \emptyset\} &\text{if } V_q \neq \emptyset\\
                   \{\emptyset\}                                                                           &\text{otherwise}
                  \end{cases}
  .
 \end{equation*} 
 Let $\mathcal{V} = \biguplus_{q \in \{-1,0,1\}^{2}, q \neq (0,0)} \mathcal{V}_q$.
 Note that $V_q \in \mathcal{V}$ for all $q \in Q$ by Lemma \ref{la:clique-neighborhood-equivalent-neighbors}.
 So $|\mathcal{V}_q| = 1$ for all $q \in Q$.
 We now define a cycle $H$ on $\mathcal{V}$ as follows.
 Let $q \in \{-1,1\}^{2}$ and let $\mathcal{V}_q = \{Z_{q,1},\dots,Z_{q,k_q}\}$ so that $f(v)_1 < f(w)_1$ for all $v \in Z_{q,i}, w \in Z_{q,j}$ and $i < j \in [k_q]$.
 Observe that such an ordering exists by Claim 1.
 Then let
 \begin{align*}
  E(H) =\; &\{Z_{q,i}Z_{q,i+1} \mid q \in \{-1,1\}^{2}, i \in [k_q - 1]\}\\
    \cup\; &\{V_{(-1,0)}Z_{(-1,q_2),1} \mid q_2 \in \{-1,1\}\} \cup \{V_{(0,q_2)}Z_{(-1,q_2),k_{(-1,q_2)}} \mid q_2 \in \{-1,1\}\}\\
    \cup\; &\{V_{(0,q_2)}Z_{(1,q_2),1} \mid q_2 \in \{-1,1\}\} \cup \{V_{(1,0)}Z_{(1,q_2),k_{(1,q_2)}} \mid q_2 \in \{-1,1\}\}.
 \end{align*}
%  For $Z \in \mathcal{P}_{\texttt{clique}}^{*}$ let $\mathcal{V}[Z] = \{Z' \in \mathcal{V} \mid \emptyset \neq Z' \subseteq Z\}$.
 \vspace*{10pt}\\
 \textit{Claim 2:} Let $q \in \{-1,1\}^{2}$, $v,w \in V_q$. If $f(v)_1 \leq f(w)_1$ and $q_1q_2f(v)_2 \leq q_1q_2f(w)_2$, then $v \sim_{\texttt{snh}} w$.
 \vspace*{5pt}\\
 Without loss of generality consider $q = (-1,-1)$.
 Let $v,w \in V_q$ with $f(v)_i \leq f(w)_i$ for $i \in [2]$.
 Let $x \in N[v] \cap X$.
 Then $f(v)_i \leq f(w)_i \leq f(x)_i \leq f(v)_i + 1$ and hence, $|f(w)_i - f(x)_i| \leq 1$ for $i \in [2]$.
 So $x \in N[w] \cap X$ and thus, $N[v] \cap X = N[w] \cap X$.
 \vspace*{10pt}\\
 This has two consequences.
 First for each $q \in \{-1,1\}^{2}$ the set $\mathcal{V}_q$ is equally ordered up to reflection when we order the elements corresponding to their position in the $i$-th dimension, $i \in [2]$.
 Further this implies that if we had defined $H$ using the second dimension instead of the first (cf.\ Claim 1) then the result would have been the same.
 \vspace*{10pt}\\
 \textit{Claim 3:} Let $p \neq q \in Q$ and suppose $Y_q = Y_p \neq \emptyset$. Then $V_{p+q} \subseteq Y_q$.
 \vspace*{5pt}\\
 First note that $p$ and $q$ can not be opposite of each other since then $k = |X|$, which was excluded before.
 Without loss of generality consider $q = (0,-1)$ and $p=(-1,0)$.
 Let $u \in V_p$, $v \in V_q$ and $w \in V_{p+q}$.
 Since $u \sim_{\texttt{snh}} v$ we get $N[u] \cap X = N[v] \cap X = N[u] \cap N[v] \cap X$.
 First consider the case that $f(w)_1 \leq f(u)_1$.
 Let $x \in N[w] \cap X$.
 Then $|f(u)_1 - f(x)_1| \leq 1$ because $f(w)_1 \leq f(u)_1 \leq f(x)_1 \leq f(w)_1 + 1$.
 Further $f(u)_2 \in [-\frac{1}{2},\frac{1}{2}]$.
 So $x \in N[u] \cap X$ and thus, $N[u] \cap X = N[w] \cap X$.
 For $f(w)_2 \leq f(v)_2$ the same argument can be used to show that $N[v] \cap X = N[w] \cap X$.
 Otherwise $f(w)_1 \geq f(u)_1$ and $f(w)_2 \geq f(v)_2$.
 Let $x \in N[u] \cap N[v] \cap X$.
 Then $|f(w)_1 - f(x)_1| \leq 1$ because $f(u)_1 \leq f(w)_1 \leq f(x)_1 \leq f(u)_1 + 1$.
 Further $|f(w)_2 - f(x)_2| \leq 1$ since $f(v)_2 \leq f(w)_2 \leq f(x)_2 \leq f(v)_2 + 1$.
 So $x \in N[w] \cap X$ and hence, $N[w] \cap X = N[u] \cap X$.
 \vspace*{10pt}\\
 \textit{Claim 4:} Let $p \neq q \in Q$ and suppose $Y_q = Y_p \neq \emptyset$.
 Further assume that there are $v \in V_q, w \in V_p$ with $vw \in E(G)$. Then $V_p \cup V_q \cup V_{p+q}$ forms a clique.
 \vspace*{5pt}\\
 First observe that $V_q \cup V_p$ induces a connected subgraph of $G$.
 So $V_p \cup V_q$ forms a clique due the definition of $\mathcal{P}_{\texttt{clique}}^{*}$.
 Suppose $V_{p+q} \neq \emptyset$.
 Then $V_q \cup V_p \cup V_{p+q}$ induces a connected subgraph, since otherwise $V_{p+q}$ would be disconnected of the rest of the graph.
 So the statement follows using Claim 3.
 \vspace*{10pt}\\
 \textit{Claim 5:} Let $p,q \in Q$ with $q_i \neq p_i$ for $i \in [2]$. Further let $v \in V_q, w \in V_p$ and $u \in V_{p+q}$, such that $vw \in E(G)$.
 Then $\{u,v,w\}$ forms a clique or $u \sim_{\texttt{clique}} v$ or $u \sim_{\texttt{clique}} w$.
 \vspace*{5pt}\\
 Without loss of generality assume $q = (-1,0)$ and $p = (0,-1)$.
 First consider the case that $f(u)_1 \leq f(v)_1$.
 Let $x \in N[u] \cap X$.
 Then $f(u)_1 \leq f(v)_1 \leq f(x)_1 \leq f(u)_1 + 1$ and $f(v)_2,f(x)_2 \in [-\frac{1}{2},\frac{1}{2}]$.
 Thus, $x \in N[v] \cap X$ and $u \sim_{\texttt{snh}} v$.
 Analogously $u \sim_{\texttt{snh}} w$ if $f(u)_2 \leq f(w)_2$.
 
 If $f(u)_1 \leq f(v)_1$ and $f(u)_2 \leq f(w)_2$ then $\{u,v,w\}$ forms a clique by Claim 4.
 If $f(u)_1 \leq f(v)_1$ and $f(u)_2 \geq f(w)_2$ then $u \sim_{\texttt{clique}} v$ because $uv \in E(G)$ and $u \sim_{\texttt{snh}} v$.
 Similarly $u \sim_{\texttt{clique}} w$ if $f(u)_1 \geq f(v)_1$ and $f(u)_2 \leq f(w)_2$.
 Otherwise $\{u,v,w\}$ forms a clique, because $vw \in E(G)$, $f(v)_1 \leq f(u)_1 \leq f(w)_1$ and $f(v)_2 \geq f(u)_2 \geq f(w)_2$.
 \vspace*{10pt}\\
 \textit{Claim 6:} For each $Z \in \mathcal{P}_{\texttt{clique}}^{*}$ there is $\mathcal{V}_H(Z) \subseteq \mathcal{V}$, such that $H[\mathcal{V}_H(Z)]$ is connected and $\bigcup_{Z' \in \mathcal{V}_H(Z)} = Z$.
 \vspace*{5pt}\\
 By definition the set $Z$ induces a clique, so there is $t \in [3]$ and $q_1,\dots,q_t \in \{-1,0,1\}^{2}\setminus\{(0,0)\}$, such that $Z \subseteq \bigcup_{i \in [t]}V_{q_i}$.
 Further suppose that $t$ is minimal.
 If $t=1$ then $Z = Z \cap V_{q_1} \in \mathcal{V}_q \subseteq \mathcal{V}$.
 In this case let $\mathcal{V}_H(Z) = \{Z\}$.
 So consider $t=2$.
 First suppose there is some $i \in [2]$ with $(q_1)_i = (q_2)_i$.
 Without loss of generality assume $i=2$.
 Let $\mathcal{V}_H(Z) = \{Z \cap V_{q_1}, Z \cap V_{q_2}\}$.
 It follows from Claim 1 that these two elements are neighbors in $H$.
 Otherwise $q_1,q_2 \in Q$ and $Y_{q_1} = Y_{q_2} \neq \emptyset$.
 Further there are $v \in V_{q_1}$, $w \in V_{q_2}$ with $vw \in E(G)$.
 So $V_{q_1+q_2} = \emptyset$ and $Z = V_{q_1} \cup V_{q_2}$.
 Let $\mathcal{V}_H(Z) = \{V_{q_1}, V_{q_2}, V_{q_1+q_2}\}$, which is connected in $H$.
 Finally consider $t=3$.
 Then, without loss of generality, $q_1,q_2 \in Q$ and $q_3 = q_1 + q_2$.
 So the statement follows from Claim 3 and 4 by choosing $\mathcal{V}_H(Z) = \{V_{q_1}, V_{q_2}, V_{q_1+q_2}\}$.
 \vspace*{10pt}\\
 In the following we assume that neither endpoint of $H[\mathcal{V}_H(Z)]$ is labeled by the empty set.
 \vspace*{10pt}\\
 \textit{Claim 7:} For each $Z \in \mathcal{P}_{\texttt{clique}}^{*}$ there is a set $E \subseteq \mathcal{V}$ of elements labeled with the empty set, such that the graph $H[\bigcup_{Z' \in N_{G_{\texttt{clique}}}[Z]} \mathcal{V}_H(Z') \cup E]$ is connected.
 \vspace*{5pt}\\
 We consider two cases.
 First assume $Z \cap V_q = \emptyset$ for all $q \in Q$.
 In this case $\mathcal{V}_H(Z) = \{Z\}$.
 Then $Z \subseteq V_p$ for some $p \in \{-1,1\}^{2}$.
 Let $q_1,q_2 \in Q$ with $q_1 + q_2 = p$.
 Further let $H[\mathcal{R}_{q_1,q_2}]$ be the path from $V_{q_1}$ to $V_{q_2}$ that contains $Z$.
 Then $H[\bigcup_{Z' \in N_{G_{\texttt{clique}}}[Z]} \mathcal{V}_H(Z')]$ is a connected subgraph of $H[\mathcal{R}_{q_1,q_2}]$ by Claim 1 and 2.
 
 So let $q \in Q$ with $Z \subseteq Y_q$.
 If there are distinct $q_1,q_2 \in Q$ with $Z \cap V_{q_i} \neq \emptyset$ then it follows from Claim 3 and 4 that $N_{G_{\texttt{clique}}}[Z] = \{Z\}$, so the statement follows from Claim 6.
 So suppose there is a unique $q \in Q$ with $Z \cap V_{q} \neq \emptyset$.
 If there is some $Z' \in N_{G_{\texttt{clique}}}[Z]$ and $q \neq p \in Q$ with $Z' \cap V_p \neq \emptyset$ then $\mathcal{V}_{p+q} \subseteq \bigcup_{Z' \in N_{G_{\texttt{clique}}}[Z]} \mathcal{V}_H(Z')$ or $\mathcal{V}_{p+q} = \{\emptyset\}$ by Claim 5.
 So the statement follows in combination with Claim 1 and 2.
 \vspace*{10pt}\\
 \textit{Claim 8:} $G_{\texttt{clique}}$ is a proper circular arc graph.
 \vspace*{5pt}\\
 Let $Z_1,Z_2 \in \mathcal{P}_{\texttt{clique}}^{*}$ and let $H_i = H[\bigcup_{Z' \in N_{G_{\texttt{clique}}}[Z_i]} \mathcal{V}_H(Z') \cup E_i]$ be defined according to Claim 7.
 Consider the situation that $H_1$ is a subgraph of $H_2$ and suppose towards a contradiction that they do not share an endpoint.
 Then $Z_1 \in V(H_2)$ and let $S$ be the endpoint of $H_2$, so that $Z_1$ lies on the path from $Z_2$ to $S$ in $H_2$.
 We can assume that $S \notin E_2$.
 If $Z_2 \subseteq V_{q}$ for some $q \in \{-1,1\}^{2}$ then $Z_1 \subseteq V_q$ because $Z_1$ is not an endpoint of $H_2$.
 But then $S$ is a neighbor of $Z_1$ by Claim 1 and 2.
 So there is a unique $q \in Q$, such that $Z_2 \cap V_q \neq \emptyset$.
 Since $Z_1$ is not an endpoint of $H_2$ there is some $p \in \{-1,1\}^{2}$ with $Z_1 \subseteq V_p$.
 But then either $S \subseteq V_p$ or $S \subseteq V_{q'}$ with $q+q' = p$.
 In the first case $Z_1$ and $S$ are clearly neighbors, in the other case this follows from Claim 5.
 
 So by contradicting each $\mathcal{V}_H(Z)$ to a single vertex and also contradicting edges incident to a vertex labeled by the empty set it follows from Claim 6,7 and Proposition \ref{prop:pca-by-cycle} that $G_{\texttt{clique}}$ is a proper circular arc graph.
 \vspace*{10pt}\\
 \textit{Claim 9:} $G_{\texttt{clique}}$ has at most eight connected components.
 \vspace*{5pt}\\
 First observe that for each $q \in Q$ there is at most one $Z \in \mathcal{P}_{\texttt{clique}}^{*}$, such that $V_q \cap Z \neq \emptyset$, by Lemma \ref{la:clique-neighborhood-equivalent-neighbors}.
 But after removing these vertices the remaining vertices induce a graph, which is the disjoint union of at most $4$ cliques.
\end{proof}

\begin{lemma}
 \label{la:g-clique-3k2}
 $G_{\texttt{clique}}$ has no induced subgraph isomorphic to $\overline{3K_2}$.
\end{lemma}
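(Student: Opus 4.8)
The plan is to lift Corollary~\ref{cor:forbidden-k23-3k2-t2} from $G$ to $G_{\texttt{clique}}$ by choosing one representative vertex per class. Suppose towards a contradiction that $G_{\texttt{clique}}$ contains an induced subgraph isomorphic to $\overline{3K_2}$. Since $3K_2$ is a perfect matching on six vertices, the non-edges of $\overline{3K_2}$ form a perfect matching; hence the six classes involved can be labelled $A_1,A_2,B_1,B_2,C_1,C_2 \in \mathcal{P}_{\texttt{clique}}^{*}$ so that the only pairs that are \emph{non}-adjacent in $G_{\texttt{clique}}$ are $\{A_1,A_2\}$, $\{B_1,B_2\}$ and $\{C_1,C_2\}$, while all other pairs among these classes are adjacent in $G_{\texttt{clique}}$.

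First I would use each non-adjacency in $G_{\texttt{clique}}$ to pin down a concrete non-edge of $G$: by the definition of $E(G_{\texttt{clique}})$, the fact that $A_1$ and $A_2$ are non-adjacent yields vertices $a_1 \in A_1$, $a_2 \in A_2$ with $a_1a_2 \notin E(G)$, and analogously $b_1 \in B_1$, $b_2 \in B_2$ with $b_1b_2 \notin E(G)$ and $c_1 \in C_1$, $c_2 \in C_2$ with $c_1c_2 \notin E(G)$. Conversely, whenever two classes $Y,Z \in \mathcal{P}_{\texttt{clique}}^{*}$ are adjacent in $G_{\texttt{clique}}$ we have $yz \in E(G)$ for \emph{every} $y \in Y$ and $z \in Z$; applying this to the twelve cross pairs between the three pairs of classes yields $a_ib_j, a_ic_j, b_ic_j \in E(G)$ for all $i,j \in [2]$. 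Moreover $a_1,a_2,b_1,b_2,c_1,c_2$ lie in six pairwise distinct members of $\mathcal{P}_{\texttt{clique}}^{*}$, which are pairwise disjoint, so these six vertices are distinct.

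Putting the pieces together, among $a_1,a_2,b_1,b_2,c_1,c_2$ the non-edges of $G$ are exactly $a_1a_2$, $b_1b_2$ and $c_1c_2$, so $G[\{a_1,a_2,b_1,b_2,c_1,c_2\}] \cong \overline{3K_2}$; this contradicts Corollary~\ref{cor:forbidden-k23-3k2-t2}, which says that a unit square graph contains no induced $\overline{3K_2}$. I do not expect any real obstacle here: the only step needing mild care is the bookkeeping that all $\binom{6}{2}=15$ pairs among the chosen representatives are accounted for --- twelve edges forced by adjacencies in $G_{\texttt{clique}}$ and three non-edges supplied by the non-adjacent pairs --- together with the observation that the representatives are pairwise distinct because the classes in $\mathcal{P}_{\texttt{clique}}^{*}$ are pairwise disjoint.
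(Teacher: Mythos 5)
Your proof is correct and follows essentially the same route as the paper: pick one witness non-edge of $G$ inside each non-adjacent pair of classes, note that adjacency in $G_{\texttt{clique}}$ forces all remaining pairs of representatives to be edges of $G$, and conclude that the six representatives induce $\overline{3K_2}$ in $G$, contradicting Corollary~\ref{cor:forbidden-k23-3k2-t2}. The extra bookkeeping you spell out (distinctness of representatives and the count of the fifteen pairs) is implicit in the paper's shorter argument but adds nothing new.
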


\begin{proof}
 Suppose towards a contradiction that there is an induced subgraph isomorphic to $\overline{3K_2}$.
 Let $Y_1,Y_2,Y_3,Z_1,Z_2,Z_3 \in \mathcal{P}_{\texttt{clique}}^{*}$,
 such that the only non-edges in $G_{\texttt{clique}}[Y_1,Y_2,Y_3,Z_1,Z_2,Z_3]$ are $Y_iZ_i$ for $i \in [3]$.
 This means that for $i \in [3]$ there are $v_i \in Y_i,w_i \in Z_i$ with $v_iw_i \notin E(G)$.
 But then $G[v_1,v_2,v_3,w_1,w_2,w_3] \cong \overline{3K_2}$ contradicting Corollary \ref{cor:forbidden-k23-3k2-t2}.
\end{proof}

\begin{theorem}
 \label{thm:partition-clique-neighborhood}
 Let $G$ be a simple clique neighborhood graph with respect to $X \subseteq V(G)$.
 Then one can compute in polynomial time a canonical clique-partition $\mathcal{P}$ and a canonical colored graph $H$, such that
 \begin{enumerate}
  \item $\mathcal{P} \subseteq V(H)$ and $\mathcal{P}$ is $\Aut(H)$-invariant,
  \item $H$ is $8$-circle-bounded,
  \item $\im(\varphi) \leq \Aut(H)|_{\mathcal{P}}$ where $\varphi\colon\Aut(G) \rightarrow \Sym(\mathcal{P})$ is the natural action of $\Aut(G)$ on $\mathcal{P}$.
 \end{enumerate}
\end{theorem}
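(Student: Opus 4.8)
The plan is to assemble $\mathcal P$ and $H$ from three parts: a bounded amount of structure attached to the clique $X$ itself, the pieces $G[Y_q]$ for the (at most four) ``bad'' equivalence classes, and the proper circular arc skeleton governing the ``good'' classes, namely $G_{\texttt{clique}}$. First I would dispose of the case $k=|X|$: here $G$ is a neighborhood unit square graph, so Theorem~\ref{thm:partition-neighborhood} already provides a canonical clique-partition and a canonical $4$-circle-bounded graph, and every $4$-circle-bounded graph is $8$-circle-bounded. So assume $k<|X|$ and fix, purely for the analysis, a realization $f$ with $f(x)\in[-\tfrac12,\tfrac12]^2$ for all $x\in X$, so that every vertex of $V(G)\setminus X$ lies in the ring $[-\tfrac32,\tfrac32]^2\setminus[-\tfrac12,\tfrac12]^2$.

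Next I would compute the canonical partition $\mathcal P_{\texttt{snh}}$ (it depends only on $G$ and the color class $X$) and split its classes into the \emph{good} ones, where $G[Z]$ is a disjoint union of cliques, and the \emph{bad} ones. By Lemma~\ref{la:clique-neighborhood-corner-structure} every bad class equals some $Y_q$ with $q\in Q$, so there are at most four of them, and each $G[Y_q]$ is a neighborhood graph, since the common-neighborhood condition forces $f(Y_q)$ into an axis-parallel box of side length at most $2$. Hence Theorem~\ref{thm:partition-neighborhood} applied to each $G[Y_q]$ yields a canonical clique-partition of $Y_q$ together with a canonical $4$-circle-bounded graph carrying the image of $\Aut(G[Y_q])$. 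For the good classes I would pass to $\mathcal{P}_{\texttt{clique}}^{*}$ and $G_{\texttt{clique}}$. Applying Lemma~\ref{la:g-clique} to each connected component of $G[V^{*}]$ (each of which, after restricting $X$ to the subclique it sees and recoloring, is again a simple clique neighborhood graph with the same parameter $k$) shows that $G_{\texttt{clique}}$ is a disjoint union of proper circular arc graphs, with boundedly many components, and by Lemma~\ref{la:g-clique-3k2} it has no induced $\overline{3K_2}$.

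Each connected component $M$ of $G_{\texttt{clique}}$ is then handled separately. If $\overline M$ is non-bipartite, Proposition~\ref{prop:pca-circular-order} produces a canonical graph of maximum degree two on the connected-twin classes of $M$. If $\overline M$ is bipartite, then, because $\overline M$ has no matching of size three, it has a vertex cover of size at most two, so collapsing connected twins leaves $M$ with at most six vertices; colouring this bounded twin-free co-bipartite proper circular arc graph by the orbits of its automorphism group yields color classes each inducing a vertex-transitive graph of maximum degree at most two, using that the vertex-transitive graphs of degree at least three on at most six vertices (namely $K_4,K_5,K_6,K_{3,3},K_{2,2,2}$ and the prism $\overline{C_6}$) are all excluded --- as complete graphs they contain connected twins, and the remaining ones are forbidden in proper circular arc graphs (directly, or because some closed neighbourhood would fail to be a unit interval graph). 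In every case the twin-collapses are canonical merges of cliques, the resulting classes become the $\mathcal P$-parts inside the good classes, and the cycles and paths obtained become auxiliary vertices of $H$.

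Finally I would build $H$ in colour layers: $X$ and the bounded graphs for the $Y_q$ at the lowest levels, and on top of them the skeletons for the components of $G_{\texttt{clique}}$, using the canonical interaction of each skeleton with $X$ and the $Y_q$ to assign its colour and its down-neighbourhood; then one verifies that $\mathcal P\subseteq V(H)$ is $\Aut(H)$-invariant, that $\im(\varphi)\le\Aut(H)|_{\mathcal P}$ (every automorphism of $G$ respects the whole canonical decomposition, hence extends to an automorphism of $H$), and that $H$ is $8$-circle-bounded, the constant $8$ reflecting the bound in Lemma~\ref{la:g-clique}. The delicate point is the co-bipartite components of $G_{\texttt{clique}}$: one must rule out the high-degree vertex-transitive pieces so that no large symmetric group is hidden in a single colour class of $H$, and this is exactly where Lemma~\ref{la:g-clique-3k2} and the forbidden subgraphs of proper circular arc graphs are used. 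A secondary, purely combinatorial obstacle is to keep the (boundedly many) components of $G_{\texttt{clique}}$ arising from a disconnected $G[V^{*}]$ separated by colour or down-neighbourhood in $H$ so that $8$-circle-boundedness survives the union.
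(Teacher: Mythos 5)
Your overall architecture matches the paper's proof: dispose of $k=|X|$ via Theorem \ref{thm:partition-neighborhood}, handle the at most four bad classes $Y_q$ (Lemma \ref{la:clique-neighborhood-corner-structure}) again via Theorem \ref{thm:partition-neighborhood}, and organize the good classes through $\mathcal{P}_{\texttt{clique}}^{*}$ and $G_{\texttt{clique}}$ using Lemma \ref{la:g-clique}, with Proposition \ref{prop:pca-circular-order} on the non-co-bipartite components and a layered, colored construction of $H$. However, your treatment of the co-bipartite components of $G_{\texttt{clique}}$ has a genuine gap. Lemma \ref{la:g-clique-3k2} only excludes an \emph{induced} $\overline{3K_2}$ in $G_{\texttt{clique}}$, i.e.\ an induced matching of size three in $\overline{M}$; it does not exclude a matching of size three, so K\H{o}nig's theorem gives you no vertex cover of size two. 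Worse, the conclusion you draw from it is false in this setting: two corner regions of the ring around $X$ each induce a clique of $G$, and the cliques of $\mathcal{P}_{\texttt{clique}}^{*}$ inside them can be completely adjacent to one another in a staircase (half-graph-like) pattern, producing a co-bipartite component of $G_{\texttt{clique}}$ whose complement is twin-free with arbitrarily many vertices while still having induced matching number at most two. This is exactly the interval-graph-like phenomenon behind the graphs $G_{n,k}$ of Section \ref{sec:local}, so "collapsing connected twins leaves $M$ with at most six vertices" cannot be salvaged, and the subsequent orbit/vertex-transitivity argument has nothing to stand on.

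The paper avoids this entirely: in the co-bipartite case it uses Lemma \ref{la:g-clique-3k2} only to conclude that $\overline{G_{\texttt{clique}}[\mathcal{C}]}$ has at most two non-trivial connected components, and then takes the two sides of the unique bipartition of each such component, together with the set of isolated vertices, as the parts of $\mathcal{P}$. Each of these at most five sets is a union of pairwise completely adjacent elements of $\mathcal{P}_{\texttt{clique}}^{*}$ and hence a (possibly very large) clique of $G$, which is perfectly acceptable for a clique-partition; in $H$ they become at most five vertices colored \texttt{fin-par} attached to the component vertex, so $8$-circle-boundedness is unaffected and no bound on the number of twin classes is ever needed. You should replace your co-bipartite step by this bipartition-based extraction; the rest of your construction (and your closing remarks on separating the boundedly many components by colors and down-neighborhoods) is in line with the paper's proof.
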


\begin{proof}
 For all $v,w \in V(G) \setminus X$ it holds that $|N[v] \cap X| = |N[w] \cap X|$.
 Let $k = |N[v] \cap X|$ for some $v \in V(G)$.
 If $k = |X|$ then $G$ is a neighborhood unit square graph and hence, the statement follows from Theorem \ref{thm:partition-neighborhood}.
 Otherwise initialize $\mathcal{P}$ with the empty set and $H$ with the empty graph.
 Let $\mathcal{B} = \{Y \in \mathcal{P}_{\texttt{snh}} \mid G[Y] \text{ is not a disjoint union of cliques}\}$.
 Then $|\mathcal{B}| \leq 4$ by Lemma \ref{la:clique-neighborhood-corner-structure}.
 Add the elements of $\mathcal{B}$ to the vertex set of $H$ and color each $Y \in \mathcal{B}$ by the color $\texttt{nuc-init}$.
 For each $Y \in \mathcal{B}$ the graph $G[Y]$ is a neighborhood unit square graph.
 Let $\mathcal{P}_Y$ be the partition and $H_Y = (V_Y,E_Y,c_Y)$ be the graph computed by Theorem \ref{thm:partition-neighborhood}.
 Add each element of $\mathcal{P}_Y$ to the partition $\mathcal{P}$ and add the elements of $V_Y$ to the vertex set of $H$.
 Each $v \in V_Y$ is colored by $(\texttt{nuc},c_Y(v))$ and connected to $Y$.
 Additionally all edges in $E_Y$ are added to the graph $H$.
 
 Now let $B = \bigcup_{Y \in \mathcal{B}}Y$ and $\overline{B} = V(G) \setminus (X \cup B)$.
 Let $\mathcal{Y}$ be the partition into the connected components of $G[\overline{B}]$ and add each element of $\mathcal{Y}$ to the vertex set $H$.
 Furthermore each $Y \in \mathcal{Y}$ is colored by $\texttt{cc-par}$.
 Note that $|\mathcal{Y}| \leq 8$.
 Now consider $Y \in \mathcal{Y}$.
 Then $G[Y \cup X]$ is a simple clique neighborhood unit square graph and $\mathcal{P}_{\texttt{clique}}^{*}$ partitions the set $Y$.
 Consider the graph $G_{\texttt{clique}}$, which is a proper circular arc graph with at most eight connected components according to Lemma \ref{la:g-clique}.
 Let $\mathcal{C} \subseteq \mathcal{P}_{\texttt{clique}}^{*}$ be a connected component of $G_{\texttt{clique}}$.
 Let $C = \bigcup_{A \in \mathcal{C}} A$ and add $C$ to the vertex set of $H$, where $C$ is connected to $Y$ and colored by $\texttt{clique-cc}$.
 
 If $G_{\texttt{clique}}[\mathcal{C}]$ is co-bipartite then $\overline{G_{\texttt{clique}}[\mathcal{C}]}$ has $t \leq 2$ non-trivial connected components by Lemma \ref{la:g-clique-3k2}.
 For $i \in [t]$ let $\mathcal{Z}_{i,1}, \mathcal{Z}_{i,2}$ be the unique bipartition of the $i$-th non-trivial component of $\overline{G_{\texttt{clique}}[\mathcal{C}]}$.
 Further let $\mathcal{Z}$ be the set of isolated vertices in $\overline{G_{\texttt{clique}}[\mathcal{C}]}$.
 Let $Z_{i,j} = \bigcup_{A \in \mathcal{Z}_{i,j}} A$ for $i \in [t]$, $j \in [2]$ and $Z = \bigcup_{A \in \mathcal{Z}} A$.
 Add $Z_{i,1}, Z_{i,2}$, $i \in [t]$ to the partition $\mathcal{P}$ and to the vertex set of $H$, where each of those vertices is connected to $C$ and colored by $\texttt{fin-par}$.
 Note that $G[Z_{i,j}]$ is a clique by the definition of $G_{\texttt{clique}}$.
 If $Z \neq \emptyset$ then $Z$ is also added to $\mathcal{P}$ and the vertex set of $H$, connected to $C$ and colored by $\texttt{fin-par}$.
 
 So consider the case that $G_{\texttt{clique}}[\mathcal{C}]$ is not co-bipartite.
 For each $v \in C$ let $M(v)$ be the unique set with $M(v) \in \mathcal{P}_{\texttt{clique}}^{*}$ and $v \in M(v)$.
 Let $v \sim w$ if $M(v)$ and $M(w)$ are connected twins in $G_{\texttt{clique}}$.
 Let $\mathcal{Z}$ be the corresponding partition into equivalence classes.
 Note that each $Z \in \mathcal{Z}$ induces a clique in $G$ by the definition of $G_{\texttt{clique}}$.
 Each element $Z \in \mathcal{Z}$ is added to $\mathcal{P}$ and to the vertex set of $H$.
 Further $Z$ is colored by $\texttt{fin-par}$ and we add edges according to the cycle obtained from Proposition \ref{prop:pca-circular-order}.
 This completes the description of $\mathcal{P}$ and $H$.
 
 Clearly $\mathcal{P}$ and $H$ are canonically defined and $\mathcal{P} \subseteq H$.
 Further $\mathcal{P}$ is invariant under $\Aut(H)$ because of the vertex colors.
 Finally it can easily be verified for each step of the construction that the graph $H$ is $8$-circle-bounded.
\end{proof}

\section{Global Structure}
\label{sec:global}

In this section we are ready construct a canonical, clique-stable partition $\mathcal{P}$ together with some canonical $8$-circle-bounded graph $H$,
such that $\mathcal{P} \subseteq V(H)$ and $\mathcal{P}$ is $\Aut(H)$-invariant.
This method is the central part of our algorithm and gives us a good supergroup of the natural action of the automorphism group on the computed partition.
The computed supergroup is then given to the subroutine that computes setwise stabilizers for groups in $\Gamma_8$, to obtain the automorphism group of $G[\mathcal{P}]$.
To achieve this goal we proceed in two steps.
First, we compute a clique-partition $\mathcal{P}$ which is only canonical but not necessarily clique-stable, together with a corresponding graph $H$.
For this part of the algorithm we make use of the partitioning algorithm for neighborhoods of cliques established in the previous section.
Then, in a second step, we refine the computed partition using the color refinement algorithm while simultaneously updating the graph $H$.

\subsection{Clique-Partition}

In this subsection we implement the method which realizes the first part of the described algorithm.

\begin{theorem}
 \label{thm:global-structure}
 Let $G$ be a connected unit square graph with singleton vertex.
 Then one can compute in polynomial time a canonical clique-partition $\mathcal{P}$ and a canonical colored graph $H$, such that
 \begin{enumerate}
  \item \label{item:global-structure-1} $\mathcal{P} \subseteq V(H)$ and $\mathcal{P}$ is invariant under $\Aut(H)$,
  \item \label{item:global-structure-2} $H$ is $8$-circle-bounded,
  \item \label{item:global-structure-3} $\im(\varphi) \leq \Aut(H)|_\mathcal{P}$ where $\varphi\colon\Aut(G) \rightarrow \Sym(\mathcal{P})$ is the natural action of $\Aut(G)$ on $\mathcal{P}$.
 \end{enumerate}
\end{theorem}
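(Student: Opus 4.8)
The plan is to build $\mathcal{P}$ and $H$ layer by layer in the distance from the singleton vertex $v_0$, invoking Theorem~\ref{thm:partition-clique-neighborhood} once for each ``frontier'' clique. Put $L_j=\{v\in V(G)\mid d(v_0,v)=j\}$; since $v_0$ is a singleton this layering is canonical and $\Aut(G)$-invariant, and since $G$ is connected it covers $V(G)$. I will maintain, for $j=0,1,\dots$, a canonical clique-partition $\mathcal{P}^{(j)}$ of $L_{\le j}$ all of whose blocks lie inside a single layer, together with a canonical colored graph $H^{(j)}$ enjoying the three asserted properties relative to $\mathcal{P}^{(j)}$ and having every ``frontier block'' $C\in\mathcal{P}_j:=\{C\in\mathcal{P}^{(j)}\mid C\subseteq L_j\}$ as one of its vertices; the base case is $\mathcal{P}^{(0)}=\mathcal{P}_0=\{\{v_0\}\}$ with $H^{(0)}$ the single vertex $\{v_0\}$.

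To pass from layer $j$ to layer $j+1$, for every frontier clique $C\in\mathcal{P}_j$ and every $k\in[\,|C|\,]$ set $Y_{C,k}=\{v\in L_{j+1}\mid |N(v)\cap C|=k\}$ and form $G_{C,k}:=G[C\cup Y_{C,k}]$ with $C$ recolored into a new singleton color class and the input colors retained on $Y_{C,k}$. Since every shortest $v_0$--$v$ path for $v\in L_{j+1}$ meets $L_j$, each $v\in L_{j+1}$ lies in some $Y_{C,k}$ with $k\ge1$, so $G_{C,k}$ is a simple clique neighborhood graph with respect to $C$; I apply Theorem~\ref{thm:partition-clique-neighborhood} (which for $k=|C|$ specialises to Theorem~\ref{thm:partition-neighborhood}, so $L_1$ needs no separate treatment) to obtain a canonical clique-partition $\mathcal{Q}_{C,k}$ of $Y_{C,k}$ and a canonical $8$-circle-bounded graph $H_{C,k}$. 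The $Y_{C,k}$ cover but may overlap $L_{j+1}$, so I take their common refinement: $v\approx w$ iff $v$ and $w$ belong to the same sets $Y_{C,k}$ and lie in a common block of each $\mathcal{Q}_{C,k}$ containing both; let $\mathcal{P}_{j+1}$ be the $\approx$-classes. Each such class is contained in one block of some $\mathcal{Q}_{C,k}$, hence is a clique, so $\mathcal{P}^{(j+1)}:=\mathcal{P}^{(j)}\cup\mathcal{P}_{j+1}$ is a canonical clique-partition of $L_{\le j+1}$.

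For $H^{(j+1)}$ I glue $H^{(j)}$, disjoint copies of all $H_{C,k}$, a fresh ``root'' vertex $r_{C,k}$ joined to $C$ and to every vertex of the copy of $H_{C,k}$, and a fresh vertex for each $P\in\mathcal{P}_{j+1}$ joined to every $\mathcal{Q}_{C,k}$-block containing $P$, adding no edge inside $V(H^{(j)})$; the new vertices receive colors larger than all earlier ones, ordered roots (labelled by $k$) before the recolored $V(H_{C,k})$ before the $P$'s. Properties~\ref{item:global-structure-1} and \ref{item:global-structure-3} are then inherited: $\mathcal{P}^{(j+1)}$ is a union of color classes, and given $\gamma\in\Aut(G)$ one extends a preimage $\psi^{(j)}\in\Aut(H^{(j)})$ of $\gamma$'s action on $\mathcal{P}^{(j)}$ by $r_{C,k}\mapsto r_{\gamma(C),k}$, by the isomorphism $H_{C,k}\to H_{\gamma(C),k}$ that canonicity of Theorem~\ref{thm:partition-clique-neighborhood} associates to the restriction $\gamma|_{G_{C,k}}$, and by $P\mapsto\gamma(P)$, the only interface edges $r_{C,k}C$ being preserved because $\psi^{(j)}(C)=\gamma(C)$. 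For property~\ref{item:global-structure-2} one checks that $H^{(j+1)}_{i,X}$ is unchanged for old colors $i$ (no old edge is added), is a single vertex when $i$ is a root color (the unique earlier neighbour of $r_{C,k}$ is $C$) or a $P$-color (the earlier neighbours of $P$ determine $P$), and collapses to $(H_{C,k})_{i',X'}$ when $i$ lies in the $H_{C,k}$-range (every such vertex sees $r_{C,k}$, so $X$ pins down $(C,k)$) --- a disjoint union of at most eight graphs of maximum degree two. Since there are at most $n$ layers and at most $n^2$ pairs $(C,k)$ overall, each processed by the polynomial-time routine of Theorem~\ref{thm:partition-clique-neighborhood}, and the refinements and gluings are polynomial, $\mathcal{P}=\mathcal{P}^{(j^*)}$ and $H=H^{(j^*)}$ with $j^*$ the eccentricity of $v_0$ are produced in polynomial time.

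I expect the main difficulty to be exactly this assembly step: setting up the layered coloring so that each $H_{i,X}$ localises to a single $H_{C,k}$ (which is what the root vertices $r_{C,k}$ and the ``$\mathcal{Q}$-block'' encoding of the common refinement are for) and verifying property~\ref{item:global-structure-3}, which relies crucially on \emph{canonicity} of Theorem~\ref{thm:partition-clique-neighborhood} to transport $\gamma|_{G_{C,k}}$ to an honest isomorphism $H_{C,k}\to H_{\gamma(C),k}$ and thereby manufacture an automorphism of $H$. All of the geometry of unit square graphs is hidden inside Theorems~\ref{thm:partition-neighborhood} and \ref{thm:partition-clique-neighborhood}; what remains here is a combinatorial and group-theoretic gluing argument. (Since $H$ is $8$-circle-bounded, $\Aut(H)\in\Gamma_8$ by Theorem~\ref{thm:gamma-t-t-bounded}, which is why this supergroup of $\im(\varphi)$ is useful downstream.)
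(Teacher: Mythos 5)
Your proposal is correct and follows essentially the same route as the paper: a layer-by-layer construction along the BFS layers from the singleton vertex, invoking Theorem~\ref{thm:partition-clique-neighborhood} on the simple clique neighborhood graphs determined by the frontier blocks, taking the common refinement of the resulting clique-partitions, and gluing the returned $8$-circle-bounded graphs into a canonically colored layered graph so that canonicity yields property~\ref{item:global-structure-3}. The only deviations are cosmetic bookkeeping (you fold the paper's $\sim_{\texttt{init}}$/$\sim_{\texttt{deg}}$ pre-refinement into your common refinement $\approx$ and use root vertices $r_{C,k}$ instead of the paper's auxiliary vertices), and you in fact verify the $8$-circle-boundedness in more detail than the paper does.
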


\begin{proof}
 Let $v_0 \in V(G)$ be a singleton vertex.
 For $i \geq 0$ let $Y_i := \{v \in V(G) \mid d(v_0,v) = i\}$.
 Further let $G_i := G[\bigcup_{j \leq i} Y_j]$.
 Then there is some $N \leq n = |V(G)|$, such that $G_N = G$.
 Let $\mathcal{P}_0 = \{\{v_0\}\}$ and $H_0$ be the graph consisting of only one vertex, namely $X_0 = \{v_0\}$.
 
 Let $i \in [N]$.
 Suppose we have already computed a canonical clique-partition $\mathcal{P}_{i-1}$ and a canonical colored graph $H_{i-1}$ with the desired properties for the graph $G_{i-1}$.
 Initialize the graph $H_i$ with $H_{i-1}$ and $\mathcal{P}_i$ with $\mathcal{P}_{i-1}$.
 For $v \in Y_{i}$ let $\mathcal{N}_{i-1}(v) = \{X \in \mathcal{P}_{i-1} \mid \exists w \in X: vw \in E(G)\}$.
 Further let $v \sim_{\texttt{init}} w$ if $\mathcal{N}_{i-1}(v) = \mathcal{N}_{i-1}(w)$.
 This gives an initial canonical partition $\mathcal{Q}_{i}$ of $Y_{i}$, which is added to the vertex set.
 Further there is an edge between $Z$ and every $X \in \mathcal{N}_{i-1}(v)$ for $v \in Z$ and each vertex $Z \in \mathcal{Q}_{i}$ is colored by $(i,\texttt{init})$.
 Consider $Z \in \mathcal{Q}_{i}$ and let $\mathcal{N}_{i-1}(v) = \{X_1^{Z},\dots,X_{k_Z}^{Z}\}$ for some $v \in Z$.
 For $v,w \in Z$ define $v \sim_{\texttt{deg}} w$ if $|N[v] \cap X_i^{Z}| = |N[w] \cap X_i^{Z}|$ for all $i \in [k_Z]$.
 This gives a canonical partition $\mathcal{Z}$ of $Z$.
 Again, the set $\mathcal{Z}$ is added to the vertex set along with vertices $(X_i^{Z},j)$ for all $i \in [k_Z]$, $j \in [|X_i^{Z}|]$.
 The vertices $(X_i^{Z},j)$ are colored by $(i,\texttt{init},j)$ and there are edges to $Z$ and $X_{i}^{Z}$.
 Each $A \in \mathcal{Z}$ is colored by $(i,\texttt{deg})$ and there is an edge to $(X_i^{Z},j)$ if $|N[v] \cap X_i^{Z}| = j$ for some $v \in A$.
 
 So consider some $A \in \mathcal{Z}$.
 Note that for each $i \in [k_Z]$ the graph $G[A \cup X_i^{Z}]$ is a simple clique neighborhood graph with respect to $X_i^{Z}$.
 For each $i \in [k_Z]$ we obtain a clique-partition $\mathcal{A}_i$ of $A$ and a graph $H_{A,i} = (V_{A,i}, E_{A,i}, c_{A,i})$ from Theorem \ref{thm:partition-clique-neighborhood}.
 For each $i \in [k_Z]$ the set $V_{A,i}$ is added to the vertex set of $H_i$.
 Each $v \in V_{A,i}$ is colored with $(i,\texttt{cnp},c_{A,i}(v))$ and connected to $A$ and $X_i^{Z}$.
 Furthermore the elements of $E_{A,i}$ are added to the edge set.
 Let $\mathcal{P}_A = \bigcap_{i \in [k_Z]} \mathcal{A}_i$.
 The set $\mathcal{P}_A$ is also added to the vertex set.
 Each $B \in \mathcal{P}_A$ is colored by $(i,\texttt{fin-par})$ and connected to the unique $A' \in \mathcal{A}_i$ with $B \subseteq A'$ for every $i \in [k_Z]$.
 Finally $\mathcal{P}_A$ is added to the partition $\mathcal{P}_i$.
 This completes the description to $H_i$ and $\mathcal{P}_i$.
 
 Clearly $\mathcal{P}_{i}$ and $H_{i}$ are canonically defined given $\mathcal{P}_{i-1}$ by Theorem \ref{thm:partition-clique-neighborhood}.
 Further $\mathcal{P}_{i}$ is a clique-partition and $\mathcal{P}_{i} \subseteq V(H_{i})$.
 From the vertex colors it is also imminent that $\mathcal{P}_{i}$ is invariant under $\Aut(H_{i})$.
 It remains to prove that $H_{i}$ is $8$-circle-bounded.
 But this can easily be verified for each layer of the construction.
 Finally note that \ref{item:global-structure-3} follows immediately from the other properties.
\end{proof}

\begin{corollary}
 \label{cor:global-structure-group}
 Let $G$ be a connected unit square graph with singleton vertex.
 Then one can compute in polynomial time a canonical clique-partition $\mathcal{P}$, such that $\im(\varphi) \in \Gamma_8$ where $\varphi\colon\Aut(G) \rightarrow \Sym(\mathcal{P})$ is the natural action of $\Aut(G)$ on $\mathcal{P}$.
\end{corollary}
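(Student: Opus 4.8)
The plan is to derive this directly from Theorem~\ref{thm:global-structure} together with the fact that automorphism groups of $t$-circle-bounded graphs lie in $\Gamma_t$ (Theorem~\ref{thm:gamma-t-t-bounded}). First I would apply Theorem~\ref{thm:global-structure} to the connected unit square graph $G$ with its singleton vertex, obtaining in polynomial time a canonical clique-partition $\mathcal{P}$ and a canonical colored graph $H$ with $\mathcal{P}\subseteq V(H)$, the set $\mathcal{P}$ invariant under $\Aut(H)$, the graph $H$ being $8$-circle-bounded, and $\im(\varphi)\leq \Aut(H)|_{\mathcal{P}}$.

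Next, since $H$ is $8$-circle-bounded, Theorem~\ref{thm:gamma-t-t-bounded} gives $\Aut(H)\in\Gamma_8$. Because $\mathcal{P}$ is $\Aut(H)$-invariant, the restriction map $\rho\colon \Aut(H)\to\Sym(\mathcal{P})\colon \gamma\mapsto\gamma|_{\mathcal{P}}$ is a well-defined group homomorphism with image $\Aut(H)|_{\mathcal{P}}$. As the class of $\Gamma_8$-groups is closed under homomorphic images, $\Aut(H)|_{\mathcal{P}}\in\Gamma_8$. Finally $\im(\varphi)\leq \Aut(H)|_{\mathcal{P}}$, and since $\Gamma_8$ is also closed under taking subgroups, we conclude $\im(\varphi)\in\Gamma_8$. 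All steps run in polynomial time: Theorem~\ref{thm:global-structure} already provides $\mathcal{P}$ and $H$ in polynomial time, and no further computation on $\im(\varphi)$ itself is required for the statement.

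There is essentially no hard step here; the corollary is a bookkeeping consequence of the structural theorem. The only points that need care are (i) checking that $\mathcal{P}$ really is $\Aut(H)$-invariant, so that $\rho$ is a genuine homomorphism rather than merely a partial map --- this is guaranteed by the vertex colors built into the construction of $H$ in Theorem~\ref{thm:global-structure} --- and (ii) invoking the closure of $\Gamma_8$ under subgroups and homomorphic images, both of which are recorded in the preliminaries. I do not expect to need any property of unit square graphs beyond what is already packaged into Theorem~\ref{thm:global-structure}.
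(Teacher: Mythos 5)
Your proposal is correct and follows exactly the paper's route: the paper also deduces the corollary directly from Theorem~\ref{thm:global-structure} combined with Theorem~\ref{thm:gamma-t-t-bounded}, using closure of $\Gamma_8$ under restriction to an invariant set and under subgroups. You have merely spelled out the bookkeeping that the paper leaves implicit.
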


\begin{proof}
 This follows directly from Theorem \ref{thm:gamma-t-t-bounded} and \ref{thm:global-structure}.
\end{proof}

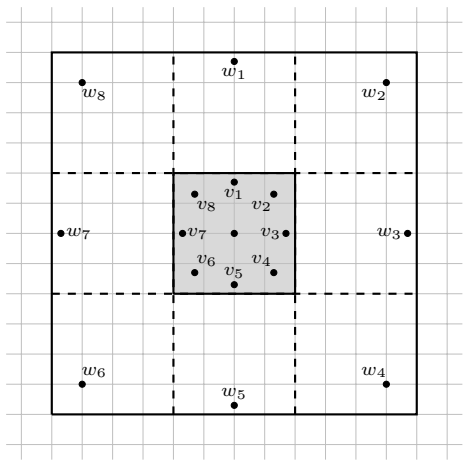
\begin{figure}
 \centering
 \begin{tikzpicture}[scale=0.8]
  \draw[fill, gray!30] (-1,-1) rectangle (1,1);
  \draw[step=0.5, gray!50, very thin] (-3.75,-3.75) grid (3.75,3.75);
  
  \draw[dashed, thick] (-3,1) -- (3,1);
  \draw[dashed, thick] (-3,-1) -- (3,-1);
  \draw[dashed, thick] (1,-3) -- (1,3);
  \draw[dashed, thick] (-1,-3) -- (-1,3);
  
  \draw[thick] (-3,-3) -- (-3,3) -- (3,3) -- (3,-3) -- (-3,-3);
  \draw[thick] (-1,-1) -- (-1,1) -- (1,1) -- (1,-1) -- (-1,-1);
  
  \draw[fill] (0,0) circle (0.05);
  
  \draw[fill] (0,-0.85) circle (0.05);
  \draw[fill] (0,0.85) circle (0.05);
  \draw[fill] (-0.85,0) circle (0.05);
  \draw[fill] (0.85,0) circle (0.05);
  \draw[fill] (-0.65,-0.65) circle (0.05);
  \draw[fill] (-0.65,0.65) circle (0.05);
  \draw[fill] (0.65,-0.65) circle (0.05);
  \draw[fill] (0.65,0.65) circle (0.05);
  
  \draw[fill] (0,-2.85) circle (0.05);
  \draw[fill] (0,2.85) circle (0.05);
  \draw[fill] (-2.85,0) circle (0.05);
  \draw[fill] (2.85,0) circle (0.05);
  \draw[fill] (-2.5,-2.5) circle (0.05);
  \draw[fill] (-2.5,2.5) circle (0.05);
  \draw[fill] (2.5,-2.5) circle (0.05);
  \draw[fill] (2.5,2.5) circle (0.05);
  
  \node at (0,0.65) {\scriptsize{$v_1$}};
  \node at (0.45,0.45) {\scriptsize{$v_2$}};
  \node at (0.6,0) {\scriptsize{$v_3$}};
  \node at (0.45,-0.45) {\scriptsize{$v_4$}};
  \node at (0,-0.65) {\scriptsize{$v_5$}};
  \node at (-0.45,-0.45) {\scriptsize{$v_6$}};
  \node at (-0.6,0) {\scriptsize{$v_7$}};
  \node at (-0.45,0.45) {\scriptsize{$v_8$}};
  
  \node at (0,2.65) {\scriptsize{$w_1$}};
  \node at (2.3,2.3) {\scriptsize{$w_2$}};
  \node at (2.55,0) {\scriptsize{$w_3$}};
  \node at (2.3,-2.3) {\scriptsize{$w_4$}};
  \node at (0,-2.65) {\scriptsize{$w_5$}};
  \node at (-2.3,-2.3) {\scriptsize{$w_6$}};
  \node at (-2.55,0) {\scriptsize{$w_7$}};
  \node at (-2.3,2.3) {\scriptsize{$w_8$}};
  
 \end{tikzpicture}
 \caption{Realization for the graph $G_8$}
 \label{fig:neighborhood-clique-graph}
\end{figure}

\begin{remark}
 The constant $d=8$ is tight for the previous corollary.
 In particular the graph $G_8$ with $V(G) = \{v_i \mid i \in [9]\} \cup \{w_i \mid i \in [8]\}$ and $E(G) = \{v_iv_j \mid i \neq j \in [9]\} \cup \{v_iw_i \mid i \in [8]\}$ is a unit square graph (the vertex $v_9$ may be a singleton vertex).
 A possible realization of $G_8$ is depicted in Figure \ref{fig:neighborhood-clique-graph}.
\end{remark}

\subsection{Refinement}

In this subsection we implement the second part of the algorithm, namely we refine the partition computed in previous section to turn it into a clique-stable partition.
This is be done by the color refinement algorithm.
Simultaneously the $8$-circle-bounded graph computed in the previous subsection is extended to obtain structure for the refined partition.
The crucial idea for extending the graph $H$ is to use additional layers which model the iterations of the color refinement algorithm.

\begin{theorem}
 \label{thm:global-structure-refined}
 Let $G$ be a connected unit square graph with singleton vertex.
 Then one can compute in polynomial time a canonical, clique-stable partition $\mathcal{P}$ and a canonical colored graph $H$, such that
 \begin{enumerate}
  \item \label{item:global-structure-refined-1} $\mathcal{P} \subseteq V(H)$ and $\mathcal{P}$ is invariant under $\Aut(H)$,
  \item \label{item:global-structure-refined-2} $H$ is $8$-circle-bounded,
  \item \label{item:global-structure-refined-3} $\im(\varphi) \leq \Aut(H)|_\mathcal{P}$ where $\varphi\colon\Aut(G) \rightarrow \Sym(\mathcal{P})$ is the natural action of $\Aut(G)$ on $\mathcal{P}$.
 \end{enumerate}
\end{theorem}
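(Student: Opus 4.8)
The plan is to build on Theorem~\ref{thm:global-structure}. First I would apply it to obtain a canonical clique-partition $\mathcal{P}_0$ and a canonical $8$-circle-bounded graph $H_0$ with $\mathcal{P}_0\subseteq V(H_0)$, with $\mathcal{P}_0$ invariant under $\Aut(H_0)$, and with $\im(\varphi_0)\leq\Aut(H_0)|_{\mathcal{P}_0}$ for the natural action $\varphi_0\colon\Aut(G)\to\Sym(\mathcal{P}_0)$. Since $G$ has polynomially many maximal cliques by Lemma~\ref{la:clique-common-neighborhood}, the graph $G_{\mathcal{M}}^{*}$ has polynomial size, so I can run the color refinement algorithm on $G_{\mathcal{M}}^{*}$ starting from the partition $\mathcal{P}_0\cup\{\mathcal{M}(G)\}$, producing a polynomially bounded sequence $\mathcal{R}_0,\mathcal{R}_1,\dots,\mathcal{R}_m$ of successively finer partitions of $V(G)\cup\mathcal{M}(G)$ with $\mathcal{R}_0=\mathcal{P}_0\cup\{\mathcal{M}(G)\}$ and $\mathcal{R}_m$ the coarsest stable refinement. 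Set $\mathcal{P}:=\mathcal{R}_m\cap 2^{V(G)}$. As $\mathcal{P}$ refines $\mathcal{P}_0$, it is still a clique-partition, and it is clique-stable by a short sandwiching argument: writing $\mathcal{Q}$ for the coarsest stable partition refining $\mathcal{P}\cup\{\mathcal{M}(G)\}$, the partition $\mathcal{R}_m$ is stable and refines $\mathcal{P}\cup\{\mathcal{M}(G)\}$, hence it refines $\mathcal{Q}$, giving $\mathcal{P}\preceq\mathcal{Q}\cap 2^{V(G)}$; conversely $\mathcal{Q}$ refines $\mathcal{P}\cup\{\mathcal{M}(G)\}$, so $\mathcal{Q}\cap 2^{V(G)}\preceq\mathcal{P}$, and thus $\mathcal{Q}\cap 2^{V(G)}=\mathcal{P}$.

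The real work is to extend $H_0$ to a canonical $8$-circle-bounded graph $H$ whose additional layers encode the rounds of color refinement. First I add one fresh vertex $M_0$ representing the initial clique class $\mathcal{M}(G)$; together with $\mathcal{P}_0\subseteq V(H_0)$ this fixes vertices for the classes of $\mathcal{R}_0$. Then, for $r=1,\dots,m$, I add \emph{count vertices} $p_{r,Z,j}$ of color $(r,\texttt{cnt},j)$, one for each class $Z$ of $\mathcal{R}_{r-1}$ and each count $j\geq 1$ that actually occurs, with $p_{r,Z,j}$ adjacent only to the vertex representing $Z$; and I add one \emph{class vertex} for each class $Y'\in\mathcal{R}_r$, of color $(r,\texttt{cls},V)$ or $(r,\texttt{cls},\mathcal{M})$ depending on whether $Y'\subseteq V(G)$ or $Y'\subseteq\mathcal{M}(G)$, adjacent to the vertex of its $\mathcal{R}_{r-1}$-parent and to each $p_{r,Z,j}$ for which every element of $Y'$ has exactly $j$ neighbours in $Z$ in $G_{\mathcal{M}}^{*}$. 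I order the colors so that all colors of $H_0$ come first, then $M_0$, and then, for increasing $r$, the count colors of round $r$ followed by the class colors of round $r$. Finally I identify the round-$m$ class vertices of color $(m,\texttt{cls},V)$ with the elements of $\mathcal{P}$ (earlier class vertices being formally tagged by their round, which avoids any collision when a class survives unchanged), so $\mathcal{P}\subseteq V(H)$, and $\mathcal{P}$ is invariant under $\Aut(H)$ since it is exactly one color class. All of this is computable in polynomial time.

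Canonicity of $(\mathcal{P},H)$ is immediate: $G_{\mathcal{M}}^{*}$, color refinement and $H_0$ are canonical, and every layer is built in an isomorphism-invariant way. For property~\ref{item:global-structure-refined-3}, note that $\Aut(G)$ acts on $V(H)$, since each $\gamma\in\Aut(G)$ extends to an automorphism of $G_{\mathcal{M}}^{*}$, fixes $\mathcal{R}_0$ because $\mathcal{P}_0$ is canonical, hence stabilizes every $\mathcal{R}_r$ and permutes the count and class vertices in agreement with their defining conditions, and acts on $V(H_0)$ inside $\Aut(H_0)$ by canonicity of $H_0$; restricting this action to $\mathcal{P}$ yields $\varphi$, so $\im(\varphi)\leq\Aut(H)|_{\mathcal{P}}$.

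The main obstacle is property~\ref{item:global-structure-refined-2}, that $H$ is $8$-circle-bounded, and the whole value $8$ already comes from $H_0$. For a color $i$ of $H_0$, all colors preceding $i$ lie in $V(H_0)$ and every new edge incident to an $H_0$-vertex goes to a higher-colored vertex, so the graph $G_{i,X}$ is exactly the one computed inside $H_0$ and hence a disjoint union of at most eight graphs of maximum degree two. For a new color $i$ belonging to round $r$, I claim that each $G_{i,X}$ consists of at most one isolated vertex. Indeed, a count vertex $p_{r,Z,j}$ has the vertex of $Z$ as its unique neighbour among lower-colored vertices, so $X$ already determines $Z$ and thus $p_{r,Z,j}$; and a class vertex of color $(r,\texttt{cls},\cdot)$ is, by definition of round $r$ of color refinement, determined by its $\mathcal{R}_{r-1}$-parent together with the set of count vertices it is joined to, that is, by its neighbourhood among lower-colored vertices. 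Since vertices sharing a new color carry no edges among themselves, $G_{i,X}$ is a single isolated vertex, which is trivially a disjoint union of at most eight connected graphs of maximum degree two. Hence $H$ is $8$-circle-bounded, and $(\mathcal{P},H)$ has all the required properties.
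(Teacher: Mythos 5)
Your proposal is correct and follows essentially the same route as the paper's proof: take the partition and $8$-circle-bounded graph from Theorem~\ref{thm:global-structure}, refine via color refinement on $G_{\mathcal{M}}^{*}$ starting from that partition together with $\{\mathcal{M}(G)\}$, and append layers of class and count vertices (one layer per refinement round, colored after all colors of the original graph) so that each new color class contributes only isolated vertices to the graphs $G_{i,X}$, preserving $8$-circle-boundedness. Your added details (the sandwiching argument for clique-stability and the explicit verification that each new $G_{i,X}$ is a single vertex) are points the paper leaves implicit, and they check out.
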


\begin{proof}
 Let $\mathcal{P}$ be the partition and $H = (V(H),E(H),c_H)$ be the graph constructed by Theorem \ref{thm:global-structure}.
 Define $\mathcal{P}_0 = \mathcal{P} \cup \{\mathcal{M}(G)\}$.
 Let $\sim_0$ be the corresponding equivalence relation.
 For $v,w \in V(G) \cup \mathcal{M}(G)$ let $v \sim_{i+1} w$ if $v \sim_i w$ and $|N_{G_{\mathcal{M}}^{*}}(v) \cap X| = |N_{G_{\mathcal{M}}^{*}}(w) \cap X|$ for all $X \in \mathcal{P}_i$.
 Further let $\mathcal{P}_{i+1}$ be the partition into the equivalence classes of $\sim_{i+1}$.
 Observe that $\mathcal{P}_i$ is the partition computed in the $i$-th round of the color refinement algorithm applied to the graph $G_{\mathcal{M}}^{*}$ if the initial partition is $\mathcal{P}_0$.
 Let $N \in \mathbb{N}$ be minimal, such that $\mathcal{P}_N = \mathcal{P}_{N+1}$.
 Then $\mathcal{P}^{*} = \mathcal{P}_{N} \cap 2^{V(G)}$ is a canonical, clique-stable partition of the vertices.
 Now define the graph $H^{*}$ with vertex set
 \begin{equation*}
  V(H^{*}) = V(H) \uplus \left(\bigcup_{0 \leq i \leq N} \left(\mathcal{P}_i \times \{i\}\right) \;\cup\; \left(\mathcal{P}_i \times \{i\} \times \{0,\dots,|V(G)|\}\right) \right) \uplus \mathcal{P}^{*}
 \end{equation*}
 as follows.
 Each vertex $v \in V(H)$ is colored the same way as in $H$, that is $c_{H^{*}}(v) = c_H(v)$.
 A vertex $v \in \mathcal{P}_i \times \{i\}$ is colored by $c_{H^{*}}(v) = (\texttt{cs-par},i)$
 and each $v \in \mathcal{P}_i \times \{i\} \times \{0,\dots,|V(G)|\}$ is colored by $c_{H^{*}}(v) = (\texttt{cs-par},i,v_3)$, where $v_3$ denotes the third component of $v$.
 The vertices $v \in \mathcal{P}^{*}$ are colored by $c_{H^{*}}(v) = \texttt{fin-cs-par}$.
 For the edge set, first each edge of $H$ is also added to $H^{*}$ and for each $X \in \mathcal{P}$ there is an edge between $X$ and $(X,0)$.
 For $i \in \{0,\dots,N\}$ the following edges are added to the graph $H^{*}$.
 Let $X \in \mathcal{P}_i$.
 Then there are edges between $(X,i)$ and $(X,i,j)$ for all $j \in \{0,\dots,|V(G)|\}$.
 Suppose $i \geq 1$.
 Then there is an edge between $(X,i)$ and $(X',i-1,j)$ if $|N(v) \cap X'| = j$ for some and thus for all $v \in X$.
 Furthermore there is an edge between $(X,i)$ and $(X',i-1)$ for the unique $X' \in \mathcal{P}_{i-1}$, such that $X \subseteq X'$.
 Finally there are edges between $(X,N)$ and $X$ for each $X \in \mathcal{P}^{*}$.
 This finishes the description of the graph $H^{*}$.
 
 Clearly $H^{*}$ and is canonically defined and $\mathcal{P}^{*} \subseteq V(H^{*})$.
 Further $\mathcal{P}^{*}$ is invariant under $\Aut(H^{*})$ because of the vertex colors.
 It remains to show that the graph $H^{*}$ is $8$-circle-bounded.
 First $H$ is $8$-circle-bounded.
 The remaining colors are ordered lexicographically, then it follows directly from the definition of $H^{*}$ that the graph is $8$-circle-bounded.
\end{proof}

\begin{remark}
 Observe, that the presented algorithm is not restricted to unit square graphs.
 In fact, whenever one is given a canonical partition and a $\Gamma_d$-group, which builds an upper bound on the top action, one can efficiently compute a $\Gamma_d$-group, which gives an upper bound on the top action of the refined partition.
\end{remark}

\subsection{Global automorphism group}

Together with Theorem \ref{thm:global-to-local} this gives us sufficient structure to compute the natural action of the automorphism group on the computed partition.
This can also be used to solve the isomorphism problem.

\begin{theorem}
 Let $G$ be a connected, twin-free unit square graph with a singleton vertex.
 Then one can compute in polynomial time a canonical, clique-stable partition $\mathcal{P}$ and a set $S \subseteq \Sym(\mathcal{P})$,
 such that $\langle S \rangle = \im(\varphi) \in \Gamma_8$ where $\varphi \colon \Aut(G) \rightarrow \Sym(\mathcal{P})$ is the natural action of $\Aut(G)$ on $\mathcal{P}$.
\end{theorem}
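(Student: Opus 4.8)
The plan is to assemble the theorem from the two principal outputs of the previous sections --- the canonical clique-stable partition together with its $8$-circle-bounded scaffold (Theorem~\ref{thm:global-structure-refined}) and the local-to-global reconstruction (Theorem~\ref{thm:global-to-local}) --- combined with the group-theoretic machinery for $\Gamma_8$-groups.

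First I would invoke Theorem~\ref{thm:global-structure-refined} to obtain, in polynomial time, a canonical, clique-stable partition $\mathcal{P}$ and a canonical $8$-circle-bounded colored graph $H$ with $\mathcal{P} \subseteq V(H)$, with $\mathcal{P}$ invariant under $\Aut(H)$, and with $\im(\varphi) \leq \Aut(H)|_{\mathcal{P}}$. By (the proof of) Theorem~\ref{thm:automorphism-t-bounded} a generating set for $\Aut(H)$ can be computed in polynomial time, and $\Aut(H) \in \Gamma_8$ by Theorem~\ref{thm:gamma-t-t-bounded}. Since $\mathcal{P}$ is $\Aut(H)$-invariant, the restriction $\rho\colon \Aut(H) \to \Sym(\mathcal{P})$, $\gamma \mapsto \gamma|_{\mathcal{P}}$, is a group homomorphism; applying $\rho$ to the generators of $\Aut(H)$ yields a generating set for $\Delta := \Aut(H)|_{\mathcal{P}}$. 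As $\Gamma_8$ is closed under homomorphic images, $\Delta \in \Gamma_8$, and by the choice of $H$ we have $\im(\varphi) \leq \Delta$.

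Next I would identify $\im(\varphi)$ with $\Aut(G[\mathcal{P}])$. On the one hand, since $\mathcal{P}$ and the vertex- and edge-colored graph $G[\mathcal{P}]$ are defined from $G$ in an isomorphism-invariant way, every $\gamma \in \Aut(G)$ induces $\gamma^{\mathcal{P}} = \varphi(\gamma) \in \Aut(G[\mathcal{P}])$, so $\im(\varphi) \leq \Aut(G[\mathcal{P}])$; on the other hand, $G$ is connected and twin-free, so Theorem~\ref{thm:global-to-local} provides, for each $\delta \in \Aut(G[\mathcal{P}])$, some $\gamma \in \Aut(G)$ with $\gamma^{\mathcal{P}} = \delta$, whence $\Aut(G[\mathcal{P}]) \leq \im(\varphi)$. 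Thus $\im(\varphi) = \Aut(G[\mathcal{P}])$. To compute this group inside the over-group $\Delta$, I would encode $G[\mathcal{P}]$ as a colored hypergraph $\mathcal{H}$ on vertex set $\mathcal{P}$: take the singleton hyperedge $\{X\}$ colored by $c_V(X) = |X|$ for each $X \in \mathcal{P}$, and the hyperedge $\{X,Y\}$ colored by $c_E(XY)$ for each edge $XY$ of $G[\mathcal{P}]$; then $\Aut(\mathcal{H}) = \Aut(G[\mathcal{P}])$. Applying Proposition~\ref{prop:hypergraph-gamma-t} (in the edge-colored version noted right after it) with the $\Gamma_8$-group $\Delta$ produces, in polynomial time, a generating set $S$ for $\Aut(\mathcal{H}) \cap \Delta$.

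It then only remains to check that $S$ is as required. Since $\im(\varphi) = \Aut(G[\mathcal{P}]) = \Aut(\mathcal{H})$ and $\im(\varphi) \leq \Delta$, we get $\langle S \rangle = \Aut(\mathcal{H}) \cap \Delta = \im(\varphi) \cap \Delta = \im(\varphi)$, and since $\Gamma_8$ is closed under taking subgroups and $\im(\varphi) \leq \Delta \in \Gamma_8$, also $\im(\varphi) \in \Gamma_8$. I expect the only delicate point to be bookkeeping: confirming that the restriction $\Aut(H)|_{\mathcal{P}}$ really is a $\Gamma_8$-group and a genuine over-group of $\im(\varphi)$ (which is exactly what Theorem~\ref{thm:global-structure-refined} delivers), and that intersecting with $\Aut(G[\mathcal{P}])$ via Miller's algorithm recovers $\im(\varphi)$ precisely rather than a proper sub- or super-group. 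All the substantive work --- the structural analysis yielding the canonical clique-stable partition and the reconstruction of $G$ from $G[\mathcal{P}]$ in Theorem~\ref{thm:global-to-local} --- has already been carried out, so this final step is essentially an assembly.
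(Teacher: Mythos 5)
Your proposal is correct and follows essentially the same route as the paper: obtain $\mathcal{P}$ and the canonical $8$-circle-bounded graph $H$ from Theorem~\ref{thm:global-structure-refined}, compute $\Aut(H) \in \Gamma_8$ via Theorems~\ref{thm:automorphism-t-bounded} and~\ref{thm:gamma-t-t-bounded}, identify $\im(\varphi) = \Aut(G[\mathcal{P}])$ by Theorem~\ref{thm:global-to-local}, and recover it inside the $\Gamma_8$-overgroup $\Aut(H)|_{\mathcal{P}}$. The only (immaterial) difference is the final subroutine: you invoke Miller's colored-hypergraph result (Proposition~\ref{prop:hypergraph-gamma-t}) where the paper invokes the setwise stabilizer routine (Proposition~\ref{prop:setwise-stab-gamma-d}); both rest on the same $\Gamma_d$ machinery and yield the same generating set.
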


\begin{proof}
 Let $\mathcal{P}$ be the canonical, clique-stable partition and $H$ the canonical, $8$-circle-bounded graph obtained from Theorem \ref{thm:global-structure-refined}.
 Then $\Aut(H)$ can be computed in polynomial time and $\Aut(H) \in \Gamma_8$ by Theorem \ref{thm:automorphism-t-bounded} and \ref{thm:gamma-t-t-bounded}.
 Further $\mathcal{P}$ is invariant under $\Aut(H)$.
 Since $H$ is canonical this implies $\im(\varphi) \leq \Aut(H)|_{\mathcal{P}} \in \Gamma_8$.
 Furthermore $\im(\varphi) = \Aut(G[\mathcal{P}])$ by Theorem \ref{thm:global-to-local}.
 A generating set for $\Aut(G[\mathcal{P}])$ can be computed in polynomial time using Proposition \ref{prop:setwise-stab-gamma-d}.
\end{proof}

\begin{theorem}
 The Graph Isomorphism Problem for unit square graphs can be solved in polynomial time.
\end{theorem}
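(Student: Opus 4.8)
The plan is to reduce the Graph Isomorphism Problem for (colored) unit square graphs to the situation handled by the preceding theorem — connected, twin-free unit square graphs with a singleton vertex — and then to lift automorphisms back up via Theorem \ref{thm:global-to-local}. All reductions remain inside the class of colored unit square graphs: the class is hereditary, obviously closed under recoloring, and closed under disjoint unions, since two realizations can be translated far apart in the plane so that no point of one lies within $L_\infty$-distance $1$ of a point of the other. Consequently, by standard reduction techniques it suffices to compute a generating set of $\Aut(G)$ for an arbitrary colored unit square graph $G$: for connected colored graphs $G,H$ one has $G\cong H$ iff the $\Aut(G\uplus H)$-orbit of a fixed vertex of $G$ meets $V(H)$, which is readable from generators of $\Aut(G\uplus H)$, and the general case follows by first decomposing into connected components.

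To compute $\Aut(G)$ I would peel off three obstructions in turn. \emph{Connectivity}: decompose $G$ into components, each an induced subgraph hence again a colored unit square graph, recurse on these (strictly smaller) graphs, and reassemble $\Aut(G)$ as a product of wreath products $\Aut(G_j)\wr\Sym(m_j)$ over isomorphism classes of components, the grouping into classes being handled by the usual mutual recursion between automorphism computation and isomorphism testing. \emph{Connected twins}: the connected-twin relation partitions $V(G)$ into cliques; contracting each class to a single vertex and recording its size in the vertex colour yields a twin-free colored unit square graph $G/{\sim}$ (isomorphic to an induced subgraph of $G$), and $\Aut(G)$ is recovered from $\Aut(G/{\sim})$ together with arbitrary internal permutations of the twin classes. \emph{Singleton vertex}: in the now connected, twin-free graph pick a color class $C$ and, for each $v\in C$, give $v$ a fresh unique colour to obtain a connected, twin-free colored unit square graph $G_v$ with a singleton vertex; recolouring changes neither the edge set nor the twin structure, so we stay in the class. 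The orbit of $v$ and the ``transversal'' generators of $\Aut(G)$ are then obtained by isomorphism-testing the graphs $G_v$, $v\in C$, against one another.

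For such $G_v$ the preceding theorem produces in polynomial time a canonical, clique-stable partition $\mathcal{P}$ of $V(G_v)$ and a generating set $S$ with $\langle S\rangle=\im(\varphi)=\Aut(G_v[\mathcal{P}])\in\Gamma_8$, where $\varphi\colon\Aut(G_v)\to\Sym(\mathcal{P})$ is the induced action; moreover the whole construction, including the auxiliary canonical $8$-circle-bounded graph $H$ with $\mathcal{P}\subseteq V(H)$ and $\mathcal{P}$ $\Aut(H)$-invariant, is isomorphism-invariant. By Theorem \ref{thm:global-to-local} each $\delta\in\Aut(G_v[\mathcal{P}])$ lifts explicitly (via $v_{X,i}\mapsto v_{X^\delta,i}$, ordering every class by a fixed realization), so $\varphi$ surjects onto $\Aut(G_v[\mathcal{P}])$, and since $G_v$ is twin-free with $\mathcal{P}$ clique-stable one checks that $\varphi$ is also injective — an element of $\ker\varphi$ fixes every class setwise, and the diagonal edge pattern of Lemma \ref{la:internally-stable-partition} together with the orientation-compatibility of Corollary \ref{cor:unit-square-weakly-ordered} force it to fix every vertex. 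Hence $\Aut(G_v)\cong\Aut(G_v[\mathcal{P}])$ and a generating set is obtained by lifting $S$. For the required isomorphism tests between two such graphs $G_v,H_w$, one first tests whether their canonical $8$-circle-bounded graphs are isomorphic (Theorem \ref{thm:automorphism-t-bounded}); by canonicity and the isomorphism version of Theorem \ref{thm:global-to-local}, $G_v\cong H_w$ iff $G_v[\mathcal{P}]\cong H_w[\mathcal{P}]$, and the latter test, given the known $\Gamma_8$-overgroups $\Aut(H)|_{\mathcal{P}}$ of the two automorphism groups, is a coset computation solvable in polynomial time by (the methods behind) Proposition \ref{prop:setwise-stab-gamma-d}.

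I expect the main obstacle not to lie in this wrapper — the substantive geometric and group-theoretic content is entirely in the earlier sections — but in keeping the bookkeeping honest: verifying at each step (components, twin contraction, individualization) that the object produced is again a colored unit square graph, that all recursions are genuinely polynomially bounded (each recursive call either strictly decreases the vertex count or lands on a graph to which the preceding theorem applies directly), and that the reassembly of $\Aut(G)$ from the pieces is faithful. The one place that needs a small genuine argument rather than routine bookkeeping is the injectivity of $\varphi$, and there the structure lemmas of Section \ref{sec:local} are exactly what is required.
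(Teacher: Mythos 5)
Your overall skeleton (reduce to connected, twin-free graphs with a singleton vertex; apply the preceding theorem; transfer between $G_v$ and $G_v[\mathcal{P}]$ via Theorem \ref{thm:global-to-local}; finish with the $\Gamma_8$ machinery) matches the paper, but your central reduction has a genuine gap: you route everything through computing a generating set of $\Aut(G_v)$ as a permutation group on $V(G_v)$, and for that you claim $\varphi\colon\Aut(G_v)\to\Sym(\mathcal{P})$ is \emph{injective} because $G_v$ is twin-free and $\mathcal{P}$ is clique-stable. That implication is false as stated. In the diagonal pattern of Lemma \ref{la:internally-stable-partition} the bipartite graph between two classes is a disjoint union of complete bipartite blocks, so coordinated swaps inside corresponding blocks can lie in $\ker\varphi$ even though the graph is twin-free: take $X=\{x_1,x_2\}$, $Y=\{y_1,y_2\}$, both cliques, joined by the perfect matching $x_1y_1,x_2y_2$ (a $4$-cycle, which is a unit square graph; add a dominating singleton vertex $v_0$ if you wish). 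The partition $\{\{v_0\},X,Y\}$ is clique-stable, the graph is twin-free, yet $(x_1\,x_2)(y_1\,y_2)\in\ker\varphi$. So injectivity, if it holds at all, would have to be derived from special properties of the partition actually produced by Theorem \ref{thm:global-structure-refined}, and no such argument is given (nor does the paper ever claim or need it). Moreover, even granting surjectivity, your plan to ``lift $S$'' is not computationally available: the lift in Theorem \ref{thm:global-to-local} orders each class by a realization $f$, which the algorithm cannot compute (recognition is NP-hard), so you cannot produce the vertex-level automorphisms of $G_v$ that your wreath-product reassembly over components, twin blow-up, and orbit/transversal computation all require.

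The paper avoids both issues by never computing automorphism groups of the original graphs: it only \emph{decides} isomorphism, working entirely with the quotient graphs $G_i^{v_i\mapsto c}[\mathcal{P}_i]$ and the canonical $8$-circle-bounded graphs, and it uses Theorem \ref{thm:global-to-local} only in the existence direction (an automorphism of the disjoint union of the two quotients mapping one quotient to the other certifies $G_1^{v_1\mapsto c}\cong G_2^{v_2\mapsto c}$). Your last paragraph in fact contains this argument (testing $G_v\cong H_w$ via $G_v[\mathcal{P}]\cong H_w[\mathcal{P}]$ inside the $\Gamma_8$-overgroup), so the proposal can be repaired by discarding the ``compute $\Aut(G)$'' architecture: match connected components by pairwise isomorphism tests, contract twins, individualize $v_1\in V(G_1)$ and try all $v_2\in V(G_2)$, and decide each $G_1^{v_1\mapsto c}\cong G_2^{v_2\mapsto c}$ at the quotient level exactly as you describe there. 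As written, however, the claimed isomorphism $\Aut(G_v)\cong\Aut(G_v[\mathcal{P}])$ and the constructive lifting step are unsupported, and they are load-bearing for your proof.
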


\begin{proof}
 Let $G_1,G_2$ be two unit square graphs.
 First, it can be assumed that $G_1$ and $G_2$ are connected by considering the connected components separately.
 Furthermore, the graphs can be assumed to be twin-free using modular decompositions of graphs (cf.\ \cite{schw15}).
 Let $c \in \mathbb{N}$ be a fresh color (i.e.\ a color which does not appear in $G_1$ or $G_2$).
 For a graph $G$ and a vertex $v \in V(G)$ we denote by $G^{v \mapsto c}$ the graph where vertex $v$ is colored by $c$.
 Pick $v_1 \in V(G_1)$.
 For each $v_2 \in V(G_2)$ test whether $G_1^{v_1 \mapsto c} \cong G_2^{v_2 \mapsto c}$ by the following procedure.
 For $i \in [2]$ let $\mathcal{P}_i$ be the partition and $H_i$ be the graph computed by Theorem \ref{thm:global-structure-refined} for the graph $G_i^{v_i \mapsto c}$.
 Let $H$ be the disjoint union of $H_1$ and $H_2$.
 Note that $H_1 \cong H_2$ if $G_1^{v_1 \mapsto c} \cong G_2^{v_2 \mapsto c}$ because the graph $H_i$ is canonical.
 Compute a generating set for $\Aut(H) \in \Gamma_8$.
 This can be done in polynomial time according to Theorem \ref{thm:automorphism-t-bounded}.
 Let $G$ be the disjoint union of $G_1^{v_1 \mapsto c}[\mathcal{P}_1]$ and $G_2^{v_2 \mapsto c}[\mathcal{P}_2]$.
 Then $\Aut(G) \leq \Aut(H)|_{\mathcal{P}_1 \cup \mathcal{P}_2}$ and hence a generating set for $\Aut(G)$ can be computed in polynomial time using Proposition \ref{prop:setwise-stab-gamma-d} (note that $\Aut(G)$ is the set of permutations which stabilize the edge set).
 By Theorem \ref{thm:global-to-local} it holds that $G_1^{v_1 \mapsto c} \cong G_2^{v_2 \mapsto c}$ if and only if there is an automorphism $\gamma \in \Aut(G)$ that maps $G_1^{v_1 \mapsto c}[\mathcal{P}_1]$ to $G_2^{v_2 \mapsto c}[\mathcal{P}_2]$.
 Since $G$ is the disjoint union of of $G_1^{v_1 \mapsto c}[\mathcal{P}_1]$ and $G_2^{v_2 \mapsto c}[\mathcal{P}_2]$ and both of these graphs are connected it holds that if such an automorphism exists then there will also be one present in the generating set of $\Aut(G)$.
 Thus it can be checked in polynomial time whether $G_1^{v_1 \mapsto c} \cong G_2^{v_2 \mapsto c}$.
\end{proof}

\begin{remark}
 The running time of the presented algorithm is dominated by the running time for the subroutine computing setwise stabilizers for groups in $\Gamma_8$, which in turn depends on the maximal size of primitive $\Gamma_8$-groups.
\end{remark}

The latter was analyzed by Babai, Cameron and Pálfy in \cite{BCP82} and proven to be polynomially bounded in the size of the permutation domain.
For a complexity analysis of the setwise stabilizer subroutine we refer to \cite{luks82, luks91, bkl83}.
Note that the setwise stabilizer subroutine is also used for computing the automorphism group of $H$ and the graph $H$ might be much larger than the original graph $G$.

\begin{remark}
 The presented algorithm also gives us some insight about the structure of the automorphism group of a unit square graph in case the graph has a singleton vertex.
 For the automorphism group, there is an invariant clique-partition, such that the natural action forms a $\Gamma_8$-group.
\end{remark}

An interesting question is whether a similar statement still holds if the given graph does not have a singleton vertex.
We leave this question open.

\section{Discussion}
\label{sec:discussion}

We presented a polynomial time algorithm solving the Graph Isomorphism Problem for unit square graphs.
The main idea of the algorithm is to canonically extract a clique-partition and graph structure that can be described by $8$-circle-bounded graphs.
This gives an upper bound on the action of the automorphism group on the canonical partition.
Then Luks' algorithm is used to precisely determine this action which gives sufficient information to decide whether two given unit square graphs are isomorphic.
So overall the presented algorithm heavily depends on group theoretic methods.
This raises the question whether the problem can also be solved without the use of such methods.
In fact, it might be that the $k$-dimensional Weisfeiler-Leman algorithm can identify every unit square graph for sufficiently large $k$.
This is left as an open question.

Furthermore it is an interesting question whether the methods presented in this work can be adapted to other geometric classes for which the isomorphism problem is still open.
At first glance a natural candidate seems to be the class of unit disk graphs.
However, it turns out that there are some crucial structural differences to unit square graphs.
In particular, large symmetric groups in the automorphism group of a unit disk graph do not necessarily originate from cliques.
In other words, there are unit disk graphs with singleton vertex, such that for each canonical clique-partition the natural action of the automorphism group contains a large symmetric group.
Another candidate is the class of unit grid intersection graphs. 
Unit grid intersection graphs can be seen as a bipartite version of unit square graphs.
However, in order to apply the methods of this work, one would require a different notion of locality since vertices being close to each other in the realization may not be connected in the graph.

Finally we would like to address two natural generalizations of unit square graphs.
The first one concerns the dimension of the realization, that is, what is the complexity of graph isomorphism for graphs with $d$-dimensional $L_{\infty}$-realization for any constant number $d$.
Intuitively, it seems that similar arguments might be applicable to higher dimensions.
In particular, for the neighborhood of a clique the number of independent neighbors is still bounded.
The second extension concerns squares of arbitrary size.
This is still a natural restriction for the class of intersection graphs of rectangles which is GI-complete.
The reduction uses complete bipartite graphs of arbitrary size as induced subgraphs (see \cite{ueh08}) which can not be modeled as intersection graphs of squares.
So it might be possible that isomorphisms test can be efficiently performed even for square graphs.
However, this would require some new ideas since the number of independent neighbors of a vertex is unbounded.

\paragraph*{Acknowledgements}

I want to thank Martin Grohe and Pascal Schweitzer for several helpful discussions and comments throughout this work.

\newcommand{\etalchar}[1]{$^{#1}$}

\end{document}